\newtheorem{theorem}{Theorem}[section]
\newtheorem{corollary}{Corollary}[theorem]
\newtheorem{lemma}[theorem]{Lemma}
\newtheorem{Proposition}{Proposition}[section]
\theoremstyle{definition}
\newtheorem{definition}{Definition}[section]
\newtheorem{remark}[theorem]{Remark}
\newcommand\keywords[1]{\textbf{Keywords}: #1}
\title{Multi-Agent Coverage Control in Non-Convex Annulus Region with Conformal Mapping}
\author{Xun Feng, Chao Zhai}
\author{Xun Feng, Chao Zhai \thanks{Xun Feng and Chao Zhai are with the School of Automation, China University of Geosciences, Wuhan 430074 China,
and with Hubei Key Laboratory of Advanced Control and Intelligent Automation for Complex Systems and Engineering Research Center of Intelligent Technology for Geo-exploration, Ministry of Education, Wuhan 430074 China. Corresponding author: Chao Zhai (email: zhaichao@amss.ac.cn).}
}
\begin{document}
\maketitle

\begin{abstract}
Efficiently fulfilling coverage tasks in non-convex regions has long been a significant challenge for multi-agent systems (MASs). By leveraging conformal mapping, this paper introduces a novel sectorial coverage formulation to transform a non-convex annulus region into a topologically equivalent one. This approach enables the deployment of MASs in a non-star-shaped region while optimizing coverage performance and achieving load balance among sub-regions. It provides a unique perspective on the partitioned sub-regions to highlight the geodesic convex property of the non-star-shaped region. By utilizing the sectorial partition mechanism and the diffeomorphism property of conformal mapping, a decentralized control law is designed to drive MASs towards a desired configuration, which not only optimizes the global coverage cost but also ensures exponential convergence of equitable workload.  Moreover, an iterative search algorithm is developed to identify the optimal approximation of multi-agent deployment in the non-star-shaped region. Theoretical analysis is conducted to confirm the asymptotic stability and global convergence with arbitrary small tolerance of the closed-loop system. Finally, numerical simulations demonstrate the practicality of the proposed coverage formulation with conformal mapping.
\end{abstract}
\keywords{Coverage Control, Multi-agent Systems, Sectorial Partition, Conformal Mapping}
\section{Introduction}
The advent of low-cost smart sensors and robots has facilitated extraordinary progress in multi-agent coordination, enabling the accomplishment of complex practical missions. Multi-agent systems offer the benefit of cooperative endeavors, facilitating the achievement of arduous tasks for a solitary agent to conduct. Such operations include, but are not limited to, area coverage~\cite {zhai13}, environment monitoring~\cite{zhai21}, autonomous cruise~\cite{vehicle23}, and border patrolling~\cite{patrol22}. The successful accomplishment of these intricate missions invariably necessitates the design of cooperative control strategies for multi-agent systems (MASs), which has increasingly captured the attention of researchers over the past few decades. In the applications mentioned above, multi-agent coverage control is an extremely important part 
due to its crucial roles in industry, agriculture, and the military, and it focuses on the problem of attaining desired coverage quality of an environment with metric considerations~\cite{Cortes04},~\cite{Cortes09}.


Region partitions are conducive to the distributed coverage control of MASs by designating one subregion for each agent. 
The most commonly acknowledged analytic tools for distributed coverage control include the Voronoi partition and equitable workload partition. For instance, one common approach is to design the feedback control law that drives the agent to the centroid of Voronoi cell, which results from the Voronoi tessellation of the convex region~\cite{Cortes04}. To extend this approach beyond convex environments, geodesic Voronoi tessellation has been proposed~\cite{Pimenta08}, and it is determined by geodesic distance rather than Euclidean distance. However, this approach is restricted to Euclidean environments that contain polygonal obstacles. To solve the environmental limitation, a graph search-based approach is utilized for solving the coverage problem in non-convex environments with a non-Euclidean metric~\cite{Unknown13}, which suffers from instability in computing the generalized
centroid of Voronoi tessellation. To address this stability issue, the geodesic power Voronoi tessellation has been adopted in general non-convex Riemannian manifolds with boundaries, where the intrinsic metric is non-Euclidean~\cite{SBhattacharya14}. 
 In another direction, equitable partition policies are particularly attractive to researchers due to their broad applications and the novel insights they provide into the properties of power diagrams~\cite{equitable09}. A static coverage optimization problem with load balance has taken into account in~\cite{Cortes10}, where the control strategy is devised to optimize the coverage cost while moving MASs toward the generalized Voronoi centroid. However, Voronoi partitions might fail to maintain the connectivity of subregions when load balancing is required~\cite{{equitable19}}. To address the above issue, a distinct coverage formulation based on equitable workload partition has been proposed by dividing the coverage region into multiple stripes and further partitioning each stripe into sub-stripes with the same workload~\cite{zhai13}~\cite{zhai20}. Nevertheless, these studies focus on the coverage problem of regular regions (e.g. convex polygon and regions with parallel boundaries). For the star-shaped non-convex region, a sectorial coverage control approach for both load balancing and performance optimization has been developed~\cite{zhai23}, although its application is limited to star-shaped environments.

To overcome these limitations, thanks to the development of conformal mapping theory for simply-connected open surface meshes~\cite{computerconformalGeometry07, Teichmuller18}~\cite{rectangular13}~\cite{rectangular16}~\cite{diskconformal15} and multiply-connected surface meshes~\cite{multiconnected21}, this paper aims to map the general non-convex region to a topological equivalent one that is convenient to multi-agent coverage tasks. Inspired by the mapping in robot navigation~\cite{usingharmonicmaps18}, this paper presents a decentralized control strategy for workload balance and coverage performance optimization in non-star-shaped regions with the assistance of conformal mapping theory. In comparison with the previous non-convex coverage methods, such as greedy-based algorithm~\cite{sun19}, boosting function method~\cite{EscapeLocalOptima20}, energy-aware strategy for task allocation~\cite{Avoidance22}, performance bound computation in parallel with the greedy algorithm~\cite{sub22}, the proposed strategy enables the approximation of global optima with guaranteed load balance, collision avoidance among agents, and obstacle avoidance in non-star-shaped regions. 
While existing coverage strategies based on mapping theory map nonuniform metrics to a near Euclidean metric~\cite{Nonuniform08}, the control law is centralized and requires global information upfront. 
Previous work has also transformed connected regions into almost convex regions through diffeomorphism~\cite{perform08}, 
but it lacks the ability to describe the obstacle message and is unsuitable for complex obstacles. 
By considering the evolution of the real coverage domain, a control law was developed to capture the effects of time-varying diffeomorphism. However, it only addressed the problem of the rectangular region with rectangular obstacles~\cite{time-varying20}. 

To tackle with general scenarios, this paper proposes a coverage control approach via conformal mapping technology, 
which enables the handling of non-convex regions with a complex obstacle while ensuring workload equalization, global convergence, and subregion connectivity. In brief, the core contributions of this work are listed as follows: 
\begin{enumerate}
\item  Propose a theoretical formulation for distributed design of multi-agent coverage control in compact Riemannian manifold, which enables the coverage of non-convex irregular region with non-smooth boundary. 
\item  Construct an analytic diffeomorphism with the aid of conformal mapping theory to transform a non-convex irregular region into a topologically equivalent(i.e., star-shaped) region, which allows the distributed partition and coverage of non-convex environment. 
\item  Introduce a length metric for improving the distributed control law, which guarantees collision avoidance with obstacle or non-smooth boundary of non-convex environment. 
\end{enumerate}


The remainder of this paper is structured as follows. Section~\ref{Preliminaries} introduces the preliminary concepts frequently used in this paper, including the topological properties of the workspace and details of the mapping design. Section~\ref{Coverage Control} formulates the coverage control problem of MASs in a non-star-shaped environment and presents a distributed coverage algorithm based on conformal mapping. Section~\ref{CaseStudy} presents numerical demonstrations to validate the coverage approach by comparing it with Voronoi tessellation and energy-aware task allocation and illustrating the validity of the sectorial coverage formulation. Section~\ref{Conclusions} concludes the paper and discusses future work.

\section{Preliminaries}\label{Preliminaries}
This section presents a selection of concepts related to Riemannian Geometry and conformal mapping that are most prevalently utilized in the context of this paper. For a further discussion about these topics, refer to~\cite{computerconformalGeometry07, Riemannian geometry}. Conformal mapping is an effective tool for establishing an equivalence relation between topologically equivalent surfaces. In the process of conformal mapping, the magnitude and direction of the angle can be kept strictly unchanged. Essentially, quasi-conformal mapping is the generalization of conformal mapping, and it allows a bounded angle distortion.
\begin{definition}[Conformal Map~\cite{computerconformalGeometry07}]
Let \(f:\mathbbm{C} \to \mathbbm{C}\) be a mapping with \(f\left( z \right) = f\left( {x,y} \right) = u\left( {x,y} \right) + iv(x,y)\), where \(u\), \(v\) are real-valued functions. \(f\) is said to be a conformal map and is differentiable everywhere in \(\mathbbm{C}\), for any \(z \in \mathbbm{C}\), \(f'\left( z \right) \ne 0\), and \(f\) preserves angle. In other words, \(f\) satisfies the Cauchy–Riemann equations
\begin{equation}  
\begin{aligned}
{\frac{{\partial u}}{{\partial x}} = \frac{{\partial v}}{{\partial y}}},\quad
{\frac{{\partial v}}{{\partial x}} =  -\frac{{\partial u}}{{\partial y}}}.\\
\end{aligned}
\end{equation}
\end{definition}

\begin{definition}[Quasi-Conformal Map~\cite{computerconformalGeometry07}]
A mapping \(f:\mathbbm{C} \to \mathbbm{C}\) is said to be a quasi-conformal map if it satisfies the Beltrami equation
\begin{equation}  
\frac{{\partial f}}{{\partial \bar z}} = {\mu _f}\left( z \right)\frac{{\partial f}}{{\partial z}}\
\end{equation}
for some complex-valued function \({\mu _f}\) with \({\left\| {{\mu _f}} \right\|_\infty } < 1\), where the complex derivatives are given by
\begin{equation}  
\left\{
\begin{aligned}
{\frac{{\partial f}}{{\partial z}} = \frac{1}{2}\left( {\frac{{\partial f}}{{\partial x}} -i\frac{{\partial f}}{{\partial y}}} \right)} \\ 
  {\frac{{\partial f}}{{\partial \bar z}} = \frac{1}{2}\left( {\frac{{\partial f}}{{\partial x}} + i\frac{{\partial f}}{{\partial y}}} \right)} \\
\end{aligned}
\right.
\label{equation3}
\end{equation}
where \({\mu _f}\) is called the \textit{Beltrami coefficient} of \(f\). 
\end{definition}
Next, we introduce the concept of the coordinate chart that allows to transform a general manifold into the Euclidean space.
\label{Def2.6}

\begin{definition}[Coordinate Chart~\cite{Topology99}]
A topological space \(M\) is locally Euclidean of dimension \(n\) if every point \(p\) in \(M\) has a neighborhood \(U\) such that there is a homomorphism \(\phi \) from \(U\) onto an open subset of      \({\mathbbm{R}^n}\). We call the pair \(\left( {U,\phi :U \to {\mathbbm{R}^n}} \right)\) a coordinate chart.
\end{definition}
Subsequently, we introduce the concept of Riemannian metric, which endows the tangent space
with an inner product structure. This metric can induce practical metric structures suitable for multi-agent systems to perform coverage tasks with collision avoidance.
\begin{definition}[Riemannian Metric~\cite{Riemannian geometry}]
 A Riemannian metric \(\eta\) on \(M\) is an assignment of an inner product
 \({\eta_p}\left( { \cdot , \cdot } \right) = {\left\langle { \cdot , \cdot } \right\rangle _p}\) on      \({T_p}M\) (i.e. tangent space at a point $p$) for each \(p \in M\) that depends smoothly on $p$. It is noted that \(\eta\) itself is not a metric (actually a distance function) on \(M\).
\end{definition}

Although the cut locus is a troublesome set that contributes to the non-uniqueness of the optimal path, we suppose that it is a Lebesgue-measure-zero set in the coverage region. This point will be discussed in the later section.
\begin{definition}[Cut Locus~\cite{Cut loci85}]
 Let  \(p \in \Omega\) and \(q \in \left( {\Omega  - \partial \Omega } \right)\), we call \(q\) is a    \(pica\) relative to \(p\) if there are more than one minimal paths that can connect \(p\) and \(q\), the closure of the set of all \(pica\) relative to \(p\) is called the cut locus of \(p\), which is denoted as \({C_p}\).
\end{definition}
There exists a unique minimum geodesic connecting any two points in a geodesically convex region. The key technical challenge of this paper is to establish the geodesically convex nature of the coverage region, which is essential to analyze the existence of optimal path.
\begin{definition}[Geodesically Convex Region~\cite{Riemannian geometry}]
Suppose \(\left( {\Omega,{\eta}} \right)\) is a connected Riemannian manifold with Riemannian metric \(\eta\), and there is a unique minimum geodesic connecting \(p\) and \(q\), which is entirely within this region. We call \(\Omega\) a geodesically convex region.
\end{definition}

\begin{figure}[t!]
\centering
\includegraphics[width=0.3\textwidth]{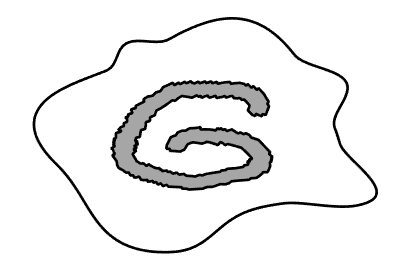}
\caption{\label{png: MappingRegionRaw} Illustration of non-convex coverage region \(\Omega\), whose obstacle has non-smooth boundary. It is not easy to directly deploy the multi-agent system in this region because of the absence of a common reference point for the inner and outer star-shaped sets, which implies that it is infeasible to realize the sectorial partition.}
\label{Fig.1.}
\end{figure}

\subsection{Workspace Description}
Consider a 2-dimensional manifold \(\Omega\), which is enclosed by a closed curve (see Fig.~\ref{Fig.1.}). The gray region represents the obstacle or hole, denoted as \(O\). It is noteworthy that the obstacle boundary is non-smooth. The region $\Omega-O$ presents a significant challenge for multi-agent deployment due to its non-smooth boundary and non-convex nature. The classic control law could drive agents to the boundaries or cause them to get stuck in the obstacle~\cite{Cortes04}. 
To effectively navigate and avoid obstacles, we need to consider the boundary of the obstacle \(O\). The region \(\left({\Omega-O}\right)\cup\partial O\) forms a compact and non-convex region, which is also a complete metric space. According to the Hopf-Rinow Theorem~\cite{Riemannian geometry}, the existence of geodesics can be guaranteed in a complete metric space. For any two points within \(\left( {\Omega  - O} \right) \cup \partial O\), there exist geodesics that connect them shortest. To avoid collisions, we need to establish a suitable metric to measure the distance between the agent and the target. Although we cannot directly implement the MAS within the region, it is homeomorphic to an annular region, which implies that it has the same topological properties as a 2-dimensional unit annular region. Furthermore, we can employ an analytic diffeomorphism to transform the region into a unit annular region, which is conducive to controlling the MAS. We design the control law and performance index within the mapped region. Given the bijective of diffeomorphism, the original parameter representations (i.e. control input, performance index) of \(\left( {\Omega  - O} \right) \cup \partial O\) could be obtained.

\subsection{Design of Conformal Mapping}
\label{Mappingdesign}
In this subsection, we focus on designing an analytic diffeomorphism for the region  (\(\Omega\)-\(O\))\(\cup\)\(\partial O\). Our goal is to establish a one-to-one correspondence between this region and a star-shaped region where agents can be effectively deployed for distributed coverage. Mapping theory is introduced to demonstrate the feasibility of conformal mapping. For closed surfaces of genus one, a flat annulus with a conformally equivalent flatness metric can be obtained. The universal cover surface of this flat annulus isometrically and parsimoniously covers the entire Euclidean plane~\cite{computerconformalGeometry07}. However, this mapped surface is not suitable for multi-agent deployment. Hodge theory is used to construct the conformal mapping of a topological annulus~\cite{computerconformalGeometry07}, but the bijective property and stability of the conformal mapping are not guaranteed for overly complex initial surfaces. 
Nevertheless, conformal parameterization methods for simply connected surfaces have been extensively developed~\cite{Teichmuller18}~\cite{rectangular13}~\cite{rectangular16}~\cite{diskconformal15}. Therefore, we aim to transform the topological annulus into some simply connected surfaces. Thanks to the fact that each agent has its own perceptual region, which is simply connected, it naturally forms the simply connected partition sub-regions. 

A Riemannian surface \(S \) is considered simply connected if every closed curve within \(S \) is homotopic to a point. For the simply connected surface, there are already theoretical tools to analyze their topological structure. The uniformization theorem is a generalization of Riemannian Mapping Theorem~\cite{Riemannian geometry}, which significantly simplifies the study of simply connected Riemannian surfaces by reducing it to the study of the disk, plane, and sphere.  
\begin{lemma}[Poincare-Koebe Uniformization~\cite{computerconformalGeometry07}]
Every simply connected Riemannian surface is conformally equivalent to the unit disk, complex plane, or Riemannian sphere.
\label{Poincare}
\end{lemma}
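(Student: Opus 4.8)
\section*{Proof proposal}

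The plan is to classify the simply connected Riemannian surface $S$ by its \emph{type} and to produce, in each case, an explicit conformal bijection onto one of the three model surfaces. I would first separate the compact and non-compact cases. If $S$ is compact, simple connectivity forces $\pi_1(S)=0$, hence genus zero, so by the topological classification of surfaces $S$ is a sphere; I would then construct a meromorphic function with a single simple pole — for instance via Riemann--Roch on the genus-zero surface, or by building a harmonic function with a prescribed dipole singularity and passing to the holomorphic combination with its harmonic conjugate. Such a function has degree one and is therefore a biholomorphism onto the Riemann sphere $\hat{\mathbbm{C}}$.

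The substantive case is $S$ non-compact, where the governing object is the Green's function. Fixing a base point $p_0$, I would exhaust $S$ by relatively compact subdomains $\Omega_1\subset\Omega_2\subset\cdots$ with smooth boundary, and on each $\Omega_n$ solve for the classical Green's function $g_n$ with logarithmic pole at $p_0$ (which exists because $\Omega_n$ has nonempty boundary). By the maximum principle the $g_n$ are positive and increasing, so by Harnack's principle their limit is either a harmonic Green's function $g$ on $S$ (the \emph{hyperbolic} case) or identically $+\infty$ (the \emph{parabolic} case).

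In the hyperbolic case, let $g^{\ast}$ be the locally defined harmonic conjugate of $g$ and set $f=\exp(-(g+ig^{\ast}))$. This is precisely where simple connectivity is indispensable: $H_1(S\setminus\{p_0\})$ is generated by a single loop encircling $p_0$, around which the flux of $g$ equals $2\pi$, so the multivalued conjugate $g^{\ast}$ has all its periods in $2\pi\mathbbm{Z}$ and $f$ is globally single-valued and holomorphic. Since $g>0$ one has $|f|=e^{-g}<1$, so $f$ maps $S$ into the unit disk $\mathbbm{D}$ with a simple zero at $p_0$; I would then prove $f$ is a conformal bijection onto $\mathbbm{D}$ by an argument-principle and properness analysis on the level sets $\{g=c\}$ together with the maximum principle. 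In the parabolic case the Green's function is replaced by an Evans--Selberg potential — an unbounded harmonic function on $S\setminus\{p_0\}$ with the same logarithmic singularity — and the analogous exponential yields a biholomorphism onto the whole plane $\mathbbm{C}$.

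The principal difficulties both reside in the hyperbolic case. The first is the dichotomy and regularity: showing that the exhausting Green's functions $g_n$ converge to a genuine harmonic function rather than to $+\infty$, which rests on Harnack's principle and careful barrier constructions. The second is surjectivity: upgrading $f$ from an injection into $\mathbbm{D}$ to a proper, hence bijective, map onto $\mathbbm{D}$. Throughout, it is the vanishing of the relevant periods — guaranteed precisely by simple connectivity — that renders the harmonic conjugate single-valued and allows the entire construction to close.
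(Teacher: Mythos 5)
The paper never proves this lemma: it is imported verbatim from Gu--Yau~\cite{computerconformalGeometry07} as background (the citation is part of the statement), and the paper only uses its conclusion, to license the disk-harmonic parameterization of each simply connected sub-region. So there is no internal proof to compare yours against; the comparison has to be with the classical literature. On that score your architecture is the standard Koebe one and is correct in outline: compact plus simply connected forces genus zero and a degree-one meromorphic function onto the sphere; in the non-compact case the exhaustion Green's functions give the hyperbolic/parabolic dichotomy; and simple connectivity enters exactly where it must, to make the periods of the harmonic conjugate lie in $2\pi\mathbbm{Z}$ so that $f=\exp\left(-(g+ig^{\ast})\right)$ is single-valued.

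As a proof, however, each of your three branches terminates in an assertion precisely where the real content lies. In the hyperbolic case, injectivity and surjectivity of $f$ do not follow from an ``argument-principle and properness analysis on the level sets $\{g=c\}$'': those level sets need not be compact a priori (that $g\to 0$ at the ideal boundary is part of what must be proved, not an input), and the arguments that actually close this case use the extremal characterization of $g$ as the smallest positive superharmonic function with the prescribed logarithmic pole, together with its symmetry $g_{p}(q)=g_{q}(p)$ --- one composes $f$ with a M\"obius automorphism of the disk and compares $-\log|\cdot|$ of the result with the Green's function at the second point; surjectivity is then obtained either by the same minimality argument or by reducing to the \emph{planar} Riemann mapping theorem applied to the image $f(S)\subset\mathbbm{D}$. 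In the parabolic case, the existence of an Evans--Selberg potential on an arbitrary parabolic surface is itself a substantial theorem (Nakai's construction), and even granting it you still owe the properness and degree-one count of the exponentiated map. In the compact case, Riemann--Roch is invoked as a black box of depth comparable to the statement being proved. None of this makes the road map wrong --- it is the proof structure of Ahlfors and Forster --- but what you have written is a correct plan whose three decisive steps are named rather than executed.
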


Consider the topological annulus (\(\Omega\)-\(O\))\(\cup\)\(\partial O\) with its outer boundary as \(\partial\Omega\) and inner boundary denoted as \(\partial O \). Suppose there are \(N\) agents, with \(v^i\) representing the set of vertices under the supervision of the \(i\)-th agent, implying that each agent possesses a point cloud set \(v^i\). Then we denote the \(i\)-th sub-region as \(V_i\), which consists of the point cloud set \(v^i\). The Delaunay triangulation technique is employed to point cloud set \(v^i\), which signifies that the \(i\)-th sub-region is approximated by a simplicial complex, with Delaunay triangulation serving as a cell decomposition. The \(i\)-th agent is equipped with a diffeomorphism \(\vartheta_i=u_i+jv_i\), where \(\vartheta _i\) is a rectangular conformal mapping~\cite{rectangular16} with \(j=\sqrt{-1}\). \(\vartheta _i\) uses disk harmonic mapping \(H_i\) as the base mapping. It means that the \(i\)-th sub-region can be mapped to a unit disk \(\mathbbm{D}\), then the unit disk \(\mathbbm{D}\) be mapped to a rectangular region \([0, L_i] \times [0,1]\). 

\begin{figure}[t!]
\centering
\includegraphics[width=1\textwidth]{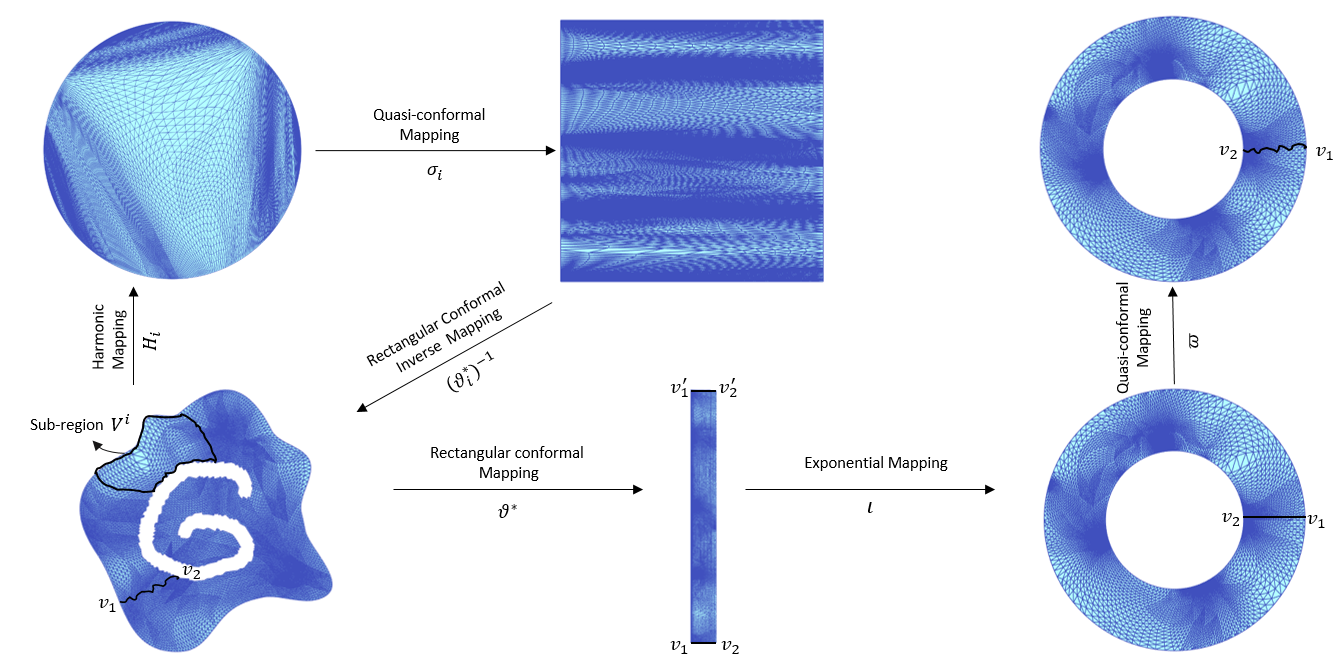}
\caption{\label{png: MAPProcess1} An illustration of the conformal mapping process. At first, we use the harmonic mapping \(H_i\) for the sub-region \(V^i\), establishing a harmonic diffeomorphism between \(V^i\) and unit disk \(\mathbbm{D}\). Using \(\sigma_i\) to transform the unit disk \(\mathbbm{D}\) to the rectangular region \([0, L_i] \times [0,1]\). Through Algorithm~\ref{Algorithm1}, we obtain the optimum length \(L_*\). Then, the rectangular conformal inverse mapping \((\vartheta _i^*)^{-1}\) maps the region \([0, L_i] \times [0,1]\) to point cloud \(V^i_*\). According to the corresponding mechanism among agents and the ICP point cloud registration, we obtain all the coordinates in (\(\Omega\)-\(O\))\(\cup\)\(\partial O\). Slicing the mesh (\(\Omega\)-\(O\))\(\cup\)\(\partial O\) along a path \(v_1v_2\) yields a simply connected open region \(\tilde \Omega \). Secondly, we construct a rectangular conformal mapping \(\vartheta_i^*\) between \(\tilde \Omega \) and the rectangular region \([0, L_*] \times [0,1]\) with an optimal length \(L_*\) and width 1. The rectangle is subsequently mapped to an annulus using an exponential map \(\iota  ={e^{2\pi \left( {z -L_*} \right)}}\). Finally, we identify the cut vertices \(v_1v_2\) and compose the quasi-conformal map \(\varpi \) to reduce angle distortion, which forms a conformal parameterization. The last region is \(\Xi\). }
\label{Fig.2.}
\end{figure}

According to Lemma \ref{Poincare}, we can construct a harmonic diffeomorphism between \(V_i\) and the unit disk \(\mathbbm{D}\) (see Fig.~\ref{png: MAPProcess1}). Although harmonic mapping is generally not conformal, it is suitable for this purpose due to its diffeomorphism property and the relatively small angle distortion. For the \(i\)-th agent in sub-region \(V_i\), compute a disk harmonic mapping \({H_i}: V_i  \to \mathbbm{D}\) by solving the following Laplace equation 
\begin{equation}
\Delta {H_i} = 0,\quad {H_i}\left({\partial V_i}\right)=\partial\mathbbm{D},
\end{equation}
where \(\Delta\) is the Laplace–Beltrami operator on \({ V_i}\), and it can be easily discretized using the cotangent Laplacian~\cite{rectangular16}.  This discretization transforms the Laplace equation \(\Delta {H_i} = 0\) into a sparse, symmetric positive definite linear system, which allows us to readily obtain the solution. The Beltrami coefficient \(\mu \left( {{H _i}} \right) = {\lambda _i} + j{\gamma _i}\) measures the conformality distortion of a map \(H_i\). In fact, a smaller norm of the Beltrami coefficient implies a tinier conformality distortion. With the Dirichlet boundary conditions~\cite{rectangular13}, the problem of solving the disk harmonic mapping can be converted to solve the partial differential equations 
\[\nabla  \cdot \left( {A\left( \begin{array}{l}
{\left( {{u_i}} \right)_x}\\
{\left( {{u_i}} \right)_y}
\end{array} \right)} \right) = 0,
{\rm{ }}\nabla  \cdot \left( {A\left( \begin{array}{l}
{\left( {{v_i}} \right)_x}\\
{\left( {{v_i}} \right)_y}
\end{array} \right)} \right) = 0,
{\rm{ }}A = \left( {\begin{array}{*{20}{c}}
{{\alpha _1}}&{{\alpha _2}}\\
{{\alpha _2}}&{{\alpha _3}}
\end{array}} \right),\]
where 
$${\alpha _1} = \frac{{{{\left( {{\lambda _i} - 1} \right)}^2} + {{\left( {{\gamma _i}} \right)}^2}}}{{1 - {{\left( {{\lambda _i}} \right)}^2} - {{\left( {{\gamma _i}} \right)}^2}}}, \quad {\alpha _2} =  - \frac{{2{\gamma _i}}}{{1 - {{\left( {{\lambda _i}} \right)}^2} - {{\left( {{\gamma _i}} \right)}^2}}},\quad {\alpha _3} = \frac{{{{\left( {{\lambda _i} + 1} \right)}^2} + {{\left( {{\gamma _i}} \right)}^2}}}{{1 - {{\left( {{\lambda _i}} \right)}^2} - {{\left( {{\gamma _i}} \right)}^2}}}.
$$ 
By solving the above equations, we obtain the \(x\)-coordinate and \(y\)-coordinate functions of \(H_i\), which are used to obtain the coordinates of the point cloud set $v^i$ in unit disk \(\mathbbm{D}\). In practice, quasi-conformal mappings are used to reduce the angle distortion. The Linear Beltrami Solver (LBS) method~\cite{rectangular13} with the Beltrami coefficient \({\mu _{{\sigma _i}}} = {\mu _{{H^{-1}_i}}}\) is employed to compute a quasi-conformal map \({\sigma _i}:\mathbbm{D} \to [0, L_i] \times [0,1]\) from the unit disk to a rectangular domain with the length \(L_i\) and the width 1, where \( L_i\) minimizes the norm of the Beltrami coefficient of \({\sigma _i} \circ {H_i}\) to reduce the conformal distortion. In the computation process of rectangular conformal mapping, the \(x\)-coordinate and \(y\)-coordinate of \(i\)-th agent are updated by 
\[\left\{ \begin{array}{l}
x_{new} = {L_i}x_{old}\\
y_{new} = y_{old}
\end{array} \right.\]
Then, \(x_{new}\) and \(y_{new}\) are used to obtain the norm of the Beltrami coefficient of \({\vartheta _i} \circ {H_i}\), implying the length \(L_i\) affects the norm of Beltrami coefficient. To relieve the conformal distortion, we need to minimize the norm of the Beltrami coefficient. Thus, the optimal index is proposed as follows:
\begin{equation}
{L_*} =\arg \min \left\| \mu  \right\|
\label{L^*}
\end{equation}
where \(\mu  = \left( {{\mu _1}, \cdots {\mu _N}} \right)\), the optimum length \(L_*\) minimizes the norm of the Beltrami coefficient \(\left\| \mu \right\|\), indicating the minimal conformal distortion. Under the topology of uniform convergence on the region, the function space composed of rectangular conformal maps is compact~\cite{computerconformalGeometry07}. Due to the function space being compact, the existence of optimal index~\eqref{L^*} is guaranteed.
\label{Compact Function Space}

\begin{remark}
By Definition~\ref{Def2.6}, the rectangular conformal mapping \(\sigma_i\) satisfies \({\left\| {{\mu _{\sigma_i} }} \right\|_\infty } < 1\), it ensures the uniform boundedness of \(\sigma_i\). Since the mapping \(\sigma_i\) is defined on a compact region \(\mathbbm{D}\), \(\sigma_i\) is equicontinuous. According to Arzelà–Ascoli Lemma~\cite{computerconformalGeometry07}, the function space composed of rectangular conformal maps \(\sigma_i\) is compact.
\end{remark}

Through the certain coordination mechanism among agents (see details in Algorithm~\ref{Algorithm1}), the agents can share the local optimal length with their neighborhoods. Because the function space composed of rectangular conformal maps \(\sigma_i\) is compact, we can obtain the global optimal length \(L_*\). We denote the optimal mapping corresponding to the optimal length \(L_*\) as \(\vartheta^*\). For the \(i\)-th agent in the rectangular region \([0, L_i] \times [0,1]\), after we obtain the optimum length \(L_*\), the vertices in set $v^i$ is updated 
by 
\[\left\{ \begin{array}{l}
x_{new}' = \frac{{{L_*}}}{{{L_i}}}x_{new}\\
y_{new}' = y_{new}
\end{array} \right.\]
Through this computation and the coordination mechanism among agents, all agents can be located in the optimum rectangular \([0, L_*] \times [0,1]\). Since rectangular conformal mapping \(\vartheta^*\) is a diffeomorphism, we can use mapping \((\vartheta^*)^{-1}\) to map the rectangular region \([0, L_i] \times [0,1]\) to a new point cloud set \(V^i_*\) in the original region (\(\Omega\)-\(O\))\(\cup\)\(\partial O\). 

Following the design of conformal mapping, our attention is turned to collecting environmental information in a distributed manner. Each agent is only aware of information pertaining to the region it oversees and is able to share information with its neighbors. The communication process can be viewed as point cloud registration~\cite{computerconformalGeometry07}. In the mapping design process, we only need to use the coordinate locations of the point cloud. Thus, the classical registration method Iterative Closest Point (ICP)~\cite{Pointcloud} could be employed to achieve the point cloud registration in (\(\Omega\)-\(O\))\(\cup\)\(\partial O\). Subsequently, we obtain the global point cloud \({V_*} = \left( {{V^{1}_*}, \cdots ,{V^{N}_*}} \right)\) in (\(\Omega\)-\(O\))\(\cup\)\(\partial O\). The Delaunay triangulation technique is employed to the global point cloud \({V_*} = \left( {{V^{1}_*}, \cdots,{V^{N}_*}} \right)\), which means that (\(\Omega\)-\(O\))\(\cup\)\(\partial O\) is approximated by a simplicial complex. This approach ensures that the definition of tangent vector length and the included angle remain consistent between adjacent triangles, underscoring the continuity of tangent vector length change. 

We select an arbitrary vertex \(v_1\) on the inner boundary \(\partial O \), identify the closest vertex \(v_2\) on the outer boundary \(\partial \Omega \), and choose the shortest path between these vertices. This process yields a simply connected open region \(\tilde\Omega\), with a new inner point \(v'_1\) and outer point \(v'_2\). Then we use the rectangular conformal mapping \(\vartheta^*\) to map the region \(\tilde\Omega\) to the region \([0, L_*] \times [0,1]\), which satisfies \({\vartheta ^*}\left( {{v_1}} \right) = \left( {0,0} \right)\), \({\vartheta ^*}\left( {{v_2}} \right) = \left( {L_*,0} \right)\), \({\vartheta ^*}\left( {{v'_1}} \right) = \left( {0,1} \right)\), \({\vartheta ^*}\left( {{v_2'}} \right) = \left( {L_*,1} \right)\). Mapping to a rectangular domain helps to simplify the concatenation of subsequent annulus surfaces. The exponential map \({\iota} = {e^{2\pi \left( {z - L_*} \right)}}\) enables to convert the rectangular domain \([0, L_*] \times [0,1]\) to a new annulus \({\Omega '}\) with inner radius \({e^{ - 2\pi L_*}}\) and outer radius 1. The cut path \({v_1}{v'_1}\) and \({v_2}{v'_2}\) are mapped to consistent locations on  \({\Omega '}\) because of the periodicity imposed in the rectangular parameterization~\cite{multiconnected21}. Finally, a quasi-conformal mapping \({\varpi}:\Omega ' \to \Xi \)  is used to obtain an automorphism \(\varpi\) of the topological annulus that is suitable for multi-agent deployment~\cite{zhai23}. The automorphism is obtained by the Linear Beltrami Solver (LBS) as follows:
\[{\varpi} = \text{LBS}\left( {{\mu _{{{\left( {{\iota} \circ {\vartheta ^*} } \right)}^{ - 1}}}}}\right),\]
where \({\mu _{{{\left( {{\iota} \circ {\vartheta ^*} } \right)}^{ - 1}}}}\) is the Beltrami coefficient of the composite mapping \({{{\left( {{\iota} \circ {\vartheta ^*} } \right)}^{ - 1}}}\).
The calculation of the Beltrami coefficient for a composite function and the associated angle distortion is detailed in~\cite{diskconformal15}. The conformal mapping of an annulus can be expressed as a composite mapping \(\tau = \varpi \circ \iota \circ \vartheta ^*\), where \(\tau: (\Omega - O) \cup \partial O \to \Xi\). The quasi-conformal mapping ensures that \(\tau\) is a conformal mapping with arbitrarily small angle distortion. The specific steps are illustrated in Fig.~\ref{png: MAPProcess1}. 
In the complex plane \(\mathbbm{C}\), note that \(f\left( z \right) = f\left( {x,y} \right) = u\left( {x,y} \right) + iv(x,y)\) with \(u\), \(v\) are real-valued functions, and the Jacobian \({J_f}\) of quasi-conformal mapping \(f\) is given by
\begin{equation}
{J_f} = {\left| {{f_z}} \right|^2}\left( {1 - {{\left| {{\mu _f}} \right|}^2}} \right),
\label{Jacobian of f}
\end{equation}
where \({\mu _f}\) is the Beltrami coefficient of \(f\). This concept can be naturally extended for the  Riemannian manifold with the aid of local charts. 
\begin{remark}
According to Definition~\ref{Def2.6}, the mapping \(\varpi\) satisfies \({\left\| {{\mu _\varpi }} \right\|_\infty } < 1\). Based on Equation \eqref{Jacobian of f}, the Jacobian determinant of the quasi-conformal mapping \(\det(J_\varpi)\) is positive. Owing to the Jacobian determinant of conformal mapping being positive, we obtain \(\det(J_{\vartheta^*})>0\). Since \(\det(J_\iota)>0\) and  \(\tau = \varpi \circ \iota \circ \vartheta ^*\), we have \(\det \left( {{J_\tau }} \right) > 0\), indicating that \(\tau\) is an orientation-preserving diffeomorphism. 
\end{remark}
 
\begin{remark}\label{Estimation}
In the process of mapping design, Delaunay triangulation is used to discretize the region (\(\Omega\)-\(O\))\(\cup\)\(\partial O\). When each agent is driven to its optimal location, it happens that the optimal point is inside the Delaunay triangle, which means the point is not at the point cloud set. Hence, this leads to the discretization error.
The discretization error of the Delaunay triangulation can be estimated in the mapping process.
Denote the actual solution and the optimal solution as \({v_{a}}\) and \({v_{o}}\), respectively. In the 2-dimensional compact manifold (\(\Omega\)-\(O\))\(\cup\)\(\partial O\), we set point \({v_{o}} = ({x_{o}},{y_{o}})\). Without loss of generality, we suppose that the maximum difference between \({v_{a}}\) and \({v_{o}}\) is \(\left\| {{x_{a}} - {x_{o}}} \right\|\). Suppose the numerical difference \({\left\| {{x_a} - {x_o}} \right\|_\infty } \le h <1\), where \(h\) is the meshing size and \(p\) is a convergence index with \(p>1\). It satisfies that \(\forall \varepsilon  > 0\), \(\exists \delta  = {\left( {\varepsilon /\sqrt n } \right)^{\frac{1}{p}}}\), \({\left\| {{x_a} - {x_o}} \right\|_\infty } < \delta \), \({\left\| {{v_a} - {v_o}} \right\|_\infty } \le {\left\| {{v_a} - {v_o}} \right\|_2} \le \sqrt n {\left\| {{v_a} - {v_o}} \right\|_\infty } \le \sqrt n {\left\| {{x_a} - {x_o}} \right\|_\infty } < \sqrt n {\left\| {{x_a} - {x_o}} \right\|_\infty^p } \le \sqrt n \delta  = \varepsilon \). Then we obtain \({\left\| {{v_a} - {v_o}} \right\|_\infty } < \sqrt n {h^p}\). 
\end{remark}

\section{Coverage Control in Non-Star Shaped Region}
\label{Coverage Control}
The workspace under consideration is a general Riemannian manifold that is not complete but can be regarded as a complete metric space. With the above conformal mapping, we can transform it into a unit annular region, and Lemma~\ref{Poincare} guarantees the point-to-point correspondence. In this section, we first describe the coverage control problem in the mapped workspace. Then, we design the control law in the original workspace by utilizing the bijective nature of the conformal mapping. Finally, we illustrate the validity of the proposed sectorial coverage algorithm.

\subsection{Design of Coverage Controller}
This subsection formulates the coverage problem of multi-agent systems for optimal monitoring in irregular non-convex environment with non-smooth boundary (see Fig.~\ref{Fig.2.}) which is embedded in \(\mathbbm{R}^2\). The mathematical
expression of the mapped coverage region \(\Xi \) is given by
\begin{equation}  
\left\{
\begin{aligned}
  {x(\varphi,\theta) = \left( {R + r\cos \varphi } \right)\cos \theta } \\ 
  {y(\varphi,\theta) = \left( {R + r\cos \varphi } \right)\sin \theta } \\
\end{aligned}
\right.
\label{coordinate representative}
\end{equation}
with \(\theta,\varphi \in [0,2\pi )\), \(R = ({{1 + {e^{ - 2\pi L_*}}}})/{2}\) and \(r = ({{1 - {e^{ - 2\pi L_*}}}})/{2}\), where \(L_*\) is obtained during the communication process of rectangular conformal maps among agents. The goal is to design a control law that ensures the agents effectively cover the entire \(\Xi\) through a sectorial partition while achieving workload balance.

Let \(\Xi \) be a compact region, and consider a distribution density function \(\rho:\Xi  \to {\mathbbm{R}^+}\) which quantifies the amount of information or workload at any point within \(\Xi\). We employ the sectorial partitioning technique to divide the region partition and balance the workload. Each agent is equipped with a virtual partition bar, represented by \(\{ \left( {\varphi,{\psi_i}} \right) \subset \Xi |i\in {I_N} = \{ 1,2, \ldots, N\} \} \), where \({\psi_i} \in [0,2\pi )\) is the phase angle of the partition bar for the \(i\)-th agent. The partition bars are numbered sequentially as \({\psi_1}\left( 0 \right) < {\psi _2}\left( 0 \right) <  \cdots  < {\psi _N}\left( 0 \right) < 2\pi \), with 
\({\psi _i}\left( 0 \right)\) denoting the initial phase of the partition bar. Consequently, a group of \(N\) agents enable to divide off the coverage region \(\Xi\) into \(N\) sub-regions (i.e. \(\Xi  = \bigcup {_{i = 1}^N} {E_i}\)), where \({E_i}\) represents the \(i\)-th subregion covered by the \(i\)-th agent, and it is enclosed by inner and outer curves as well as the partition bars \(\psi_i\) and \(\psi_{i+1}\).
The area in charge of the $i$-th agent is given by
\begin{equation}  
{E_i}\left( \psi  \right) = \left\{ \begin{aligned}
  \{ \left( {\varphi ,\theta } \right) \in \Xi |{\psi _i} < \theta  < 2\pi ,0 < \theta  < {\psi _{i + 1}}\} ,&~\text{if}~{\psi _{i + 1}} < {\psi _i} \hfill \\
  \{ (\varphi ,\theta ) \in \Xi |{\psi _i} < {\psi _{i + 1}}\} .&~\text{otherwise} \hfill \\ 
\end{aligned}  \right.
\label{Partitionarea}
\end{equation}
Subsequently, we set \(L = R + r\cos \varphi \), where the function \(\omega \left( {\theta } \right)\) is given by  
$$
\omega \left( \theta  \right) = \int_{{L_{in}}\left( \varphi  \right)}^{{L_{out}}\left( \varphi  \right)} {\rho (\theta ,L)}LdL.
$$
The workload on the  \(i\)-th subregion can be calculated by 
\begin{equation}  
{m_i} = \left\{ 
\begin{aligned}
\int_{{\psi_i}}^{2\pi} {\omega(\theta)d\theta}  + \int_0^{{\psi_{i+1}}}\omega(\theta)d\theta, 
&~\text{if}~{\psi_{i+1}} <{\psi_i}   \hfill \\
\int_{{\psi_i}}^{{\psi_{i+1}}}\omega(\theta)d\theta. 
&~\text{otherwise}\hfill \\ 
\end{aligned} \right.
\end{equation}
with \({\psi _{N + 1}} = {\psi _1}\) and \({m_N} = {m_0}\). To balance the workload among subregions, the dynamics of the partition bar are designed as
\begin{equation}  
{\dot \psi _i} = {k_\psi }\left( {{m_i} - {m_{i - 1}}} \right), \quad \forall i \in I_N
\label{partitioning dynamic}
\end{equation}
where \(k_\psi\) is a positive constant. To ensure that \({\psi _i} \in [0,2\pi )\),  the modulo operation is employed with \({\psi _i} = \bmod \left( {{\psi _i},2\pi } \right)\).
The dynamics of each agent are given by
\begin{equation}  
 {\dot p_i} = {u_i}, \quad \forall i\in {I_N}
 \label{PIandUI}
\end{equation}
with the control input \(u_i\).
After designing the partition dynamics in \(\Xi\), we proceed to demonstrate the existence of partition dynamics in \(({\Omega-O} )\cup \partial O\), where the diffeomorphism \(\tau\) is designed in Subsection~\ref{Mappingdesign}  and the Delaunay triangulation technique are employed.
\begin{theorem}
Delaunay triangulation and diffeomorphism \(\tau\) guarantee the existence of partition bars in the original space (\(\Omega\)-\(O\))\(\cup\)\(\partial O\).
\label{Theorem 3.1}
\end{theorem}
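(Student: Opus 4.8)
The plan is to construct the partition bars in $(\Omega - O)\cup\partial O$ by pulling back the partition bars of $\Xi$ through the inverse diffeomorphism $\tau^{-1}$, and then to verify that the resulting objects are well-defined curves that inherit the partitioning structure. In the model region $\Xi$, the partition bar of the $i$-th agent is the radial curve $\gamma_i = \{(x(\varphi,\psi_i),\, y(\varphi,\psi_i)) : \varphi \in [0,2\pi)\}$ obtained from parametrization~\eqref{coordinate representative} by fixing $\theta = \psi_i$; since $L = R + r\cos\varphi$ sweeps the radial extent from the inner radius $e^{-2\pi L_*}$ to the outer radius $1$, each $\gamma_i$ is a segment joining the inner circle to the outer circle of $\Xi$. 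The first step is to observe that, because $\tau$ is a bijection, the preimage $\tilde\gamma_i := \tau^{-1}(\gamma_i)$ is a uniquely determined subset of the original region, and this is precisely the candidate partition bar.

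Next I would establish the three properties that make $\tilde\gamma_i$ a genuine partition bar. For \emph{boundary correspondence}, the construction of $\tau$ sends $\partial O$ to the inner circle and $\partial\Omega$ to the outer circle of $\Xi$, so the two endpoints of $\gamma_i$ pull back to one point on $\partial O$ and one point on $\partial\Omega$, making $\tilde\gamma_i$ a curve that connects the inner and outer boundaries. For \emph{regularity}, since $\tau$ is an orientation-preserving diffeomorphism with $\det J_\tau > 0$ (established in the preceding remark), its inverse is smooth and non-degenerate, so $\tilde\gamma_i$ is a smooth embedded curve with no self-intersections. For the \emph{partition structure}, bijectivity yields $(\Omega - O)\cup\partial O = \bigcup_{i=1}^N \tau^{-1}(E_i)$ with pairwise-disjoint interiors, and the common boundary of the adjacent cells $\tau^{-1}(E_{i-1})$ and $\tau^{-1}(E_i)$ is exactly $\tilde\gamma_i$; hence the pulled-back bars partition the original region in the same combinatorial pattern as in $\Xi$.

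The final and most delicate step is to reconcile this continuous picture with the fact that $\tau$ is realized on the Delaunay triangulation of $(\Omega - O)\cup\partial O$ rather than on the smooth region directly. On each triangle the discrete (quasi-)conformal map is affine, so it carries straight segments to straight segments and its inverse does the same; the preimage of $\gamma_i$ within a single triangle is therefore a straight chord, and continuity of $\tau$ across shared edges — together with the consistency of tangent-vector length and included angle between adjacent triangles noted in the mapping construction — guarantees that these chords join into a single piecewise-linear curve $\tilde\gamma_i$. The main obstacle I anticipate is controlling the mismatch between this piecewise-linear bar and the ideal smooth bar when the crossing point of a bar falls in a triangle interior rather than at a mesh vertex; I would dispatch this using the discretization-error estimate of Remark~\ref{Estimation}, which bounds $\|v_a - v_o\|_\infty$ by $\sqrt{n}\,h^p$ and thus shows that $\tilde\gamma_i$ approximates the exact partition bar to arbitrarily small tolerance as the mesh is refined. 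Collecting these properties over all $i \in I_N$ then yields the existence of the full family of partition bars $\{\tilde\gamma_i\}$ in the original space.
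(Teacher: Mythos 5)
Your proposal is correct and follows essentially the same route as the paper: both arguments obtain the bars in $(\Omega-O)\cup\partial O$ as preimages $\tau^{-1}(\gamma_i)$ of the continuous bars in $\Xi$, relying on the boundary correspondence of $\tau$ (inner boundary to inner circle, outer to outer) and the continuity/bijectivity of $\tau^{-1}$, with the Delaunay mesh structure guaranteeing that this correspondence is respected. Your version is simply more explicit than the paper's --- spelling out the parametrization of $\gamma_i$, the regularity of the pulled-back curve, and the piecewise-linear discretization error via Remark~\ref{Estimation} --- where the paper compresses these points into assertions about mesh and boundary preservation.
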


\begin{proof}
The triangular mesh structure generated by Delaunay triangulation is preserved under the diffeomorphism \(\tau\), indicating that the connectivity of regions remains unchanged. This implies that \(\tau\) maintains the adjacency of triangles formed by Delaunay triangulation in \((\Omega - O) \cup \partial O\), ensuring they remain adjacent in \(\Xi\). (\(\Omega\)-\(O\))\(\cup\)\(\partial O\) is topologically equivalent to an annular region and diffeomorphism \(\tau\) preserves the topology of the surface. Consequently, the Delaunay triangulation of the boundary is maintained on the boundary after the mapping \(\tau\), which guarantees that boundary points in \((\Omega - O) \cup \partial O\) are mapped to the boundary of \(\Xi\). The partition bars are continuous curves that connect the inner and outer boundaries in \(\Xi\). Due to the continuity of \(\tau^{-1}\) and the correspondence relationship of boundaries, the existence of partition bars is guaranteed in \((\Omega-O)\cup\partial O\).
\end{proof}

The coordinates in the original workspace are represented by \(\left( {\theta ',\varphi '} \right)\). It is noteworthy that partition bars are continuous curves, and \(\tau\) is a diffeomorphism that preserves continuity.  Similarly, as described in \eqref{Partitionarea}, the partition bar is represented by \({\psi '_i}\), and we obtain the region in charge of the $i$-th agent in (\(\Omega\)-\(O\))\(\cup\)\(\partial O\) as follows:
\begin{equation}  
{E'_i}\left( \psi  \right) = \left\{ \begin{aligned}
\{ \left( {\varphi ',\theta '} \right) \in (\Omega  - O) \cup \partial O|{\psi '_i} < \theta ' < 2\pi ,0 < \theta ' < {\psi '_{i + 1}}\} ,&~\text{if}~{\psi '_{i + 1}} < {\psi '_i}\\
\{ (\varphi ',\theta ') \in (\Omega  - O) \cup \partial O|{\psi '_i} < {\psi '_{i + 1}}\} .&~\text{otherwise}
\end{aligned} \right.
\end{equation}
Since the region (\(\Omega\)-\(O\))\(\cup\)\(\partial O\) is highly irregular and non-convex, it is a challenge to describe its area using coordinate parameters. To address this issue, we compute the areas of (\(\Omega\)-\(O\))\(\cup\)\(\partial O\) and \(\Xi\) by summing the areas of all Delaunay triangles, as Delaunay triangulation provides an accurate cell decomposition. We normalize the original region so that these areas are of the same order of magnitude. The distribution density function \(\rho:\Xi\to {\mathbbm{R}^+}\) is continuous. Since \({\tau }\) is continuous, the workload density function \(\rho ':(\Omega  - O) \cup \partial O \to {R^ + }\) is also continuous because the composition of a continuous function is still a continuous function.  We represent  the density  as follows: 
\begin{equation}  
\rho ' = \rho\circ {\tau }  \
\label{RhoOrigin}
\end{equation}
After normalizing the original region to ensure that the areas are of the same order of magnitude, the density change of the subregion depends on the Jacobian of \(\tau\). We calculate \(J_{\tau}\) using \eqref{Jacobian of f}, the density change is far less than the density itself during the mapping process because the quasi-conformal mapping \(\varpi \) can reduce the angle distortion which in turn contributes to reducing the density change. 

\begin{Proposition}
Conformal diffeomorphism \(\tau\) preserves the partition angle of each subregion.
\label{Proposition3.1}
\end{Proposition}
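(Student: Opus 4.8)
The plan is to reduce the statement entirely to the angle-preserving nature of $\tau$ established in the preceding remarks. Recall that $\tau = \varpi \circ \iota \circ \vartheta^*$ is an orientation-preserving diffeomorphism which, because $\varpi$ is built by the LBS construction with arbitrarily small Beltrami coefficient, is conformal up to arbitrarily small angle distortion; concretely, at each point its Jacobian acts as a positive scalar multiple of a rotation, so $\tau$ preserves the angle between any two smooth curves meeting at a common point. I would take this single structural fact as the engine of the whole argument and interpret ``partition angle'' geometrically, as the collection of angles the $i$-th subregion presents at the curves bounding it.

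First I would pin down the configuration in $\Xi$. In the annular parameterization of $\Xi$, the partition bar $\psi_i$ is the radial segment $\{(\varphi,\theta) : \theta = \psi_i\}$, while the boundary components are the circle $\varphi=0$ (outer, radius $1$) and the circle $\varphi=\pi$ (inner, radius $e^{-2\pi L_*}$). Radial rays cross concentric circles orthogonally, so each subregion $E_i$ is a curvilinear quadrilateral whose four corner angles, where the bars $\psi_i,\psi_{i+1}$ meet the inner and outer arcs, are all right angles. I would then pull this picture back by $\tau^{-1}$: since $\tau^{-1}$ preserves angles, the images of the partition bars in $(\Omega-O)\cup\partial O$ still meet the images of the boundary curves at the same angles, so every corner angle of $E_i'$ coincides with the corresponding corner angle of $E_i$. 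This already yields the preservation claim in its local, per-corner form.

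To also control the angular width $\psi_{i+1}-\psi_i$ of a sector, I would exploit the explicit exponential factor $\iota = e^{2\pi(z-L_*)}$. Writing $z = x + jy$ on the rectangle $[0,L_*]\times[0,1]$, one has $\iota(z) = e^{2\pi(x-L_*)}\,e^{2\pi j y}$, so the angular coordinate of $\Xi$ satisfies $\theta = 2\pi y$ \emph{exactly}. A sector of angular width $\psi_{i+1}-\psi_i$ therefore corresponds to a horizontal strip of height $(\psi_{i+1}-\psi_i)/2\pi$ in the rectangle, and because $\vartheta^*$ is conformal while $\varpi$ only perturbs angles by an arbitrarily small amount, this linear identification between vertical rectangle coordinate and angular coordinate transports the partition-angle data faithfully back to $\tilde\Omega$ and hence to the original region.

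The step I expect to be the main obstacle is reconciling the \emph{local} angle preservation of a conformal map with the \emph{global} angular-width reading of ``partition angle'': two distinct radial bars never intersect inside the annulus, so angle preservation cannot be invoked for them at a single point. The cleanest route, which I would develop carefully, is exactly the exponential-map detour above, where angular width is converted into an honest Euclidean length in the rectangle and then combined with the corner-angle preservation of the second step. The delicate point is then to verify that the residual distortion contributed by the quasi-conformal factor $\varpi$ does not perturb these equalities but only shrinks to zero; this requires the bound $\|\mu_\varpi\|_\infty < 1$ together with the arbitrarily-small-distortion guarantee of the LBS construction, so that the limiting map is genuinely angle-preserving and the partition angle is exactly preserved in the relevant sense.
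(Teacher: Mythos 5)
Your setup inside the mapped space is accurate: the corner orthogonality of each curvilinear quadrilateral $E_i$, and the exact relation $\theta = 2\pi y$ under $\iota = e^{2\pi(z-L_*)}$, are both correct. But the proof fails at precisely the step you flagged as the main obstacle, and the exponential-map detour does not close it. That detour converts angular width in $\Xi$ into Euclidean strip height in the rectangle $[0,L_*]\times[0,1]$ --- yet both of these objects live on the mapped side of the construction. The passage that actually carries the content of the proposition is from the rectangle back to $\tilde\Omega$ through $(\vartheta^*)^{-1}$, and there conformality buys you nothing: a conformal map preserves angles between curves meeting at a point, not global widths or angular extents, so horizontal strips of equal height map to pieces of $\tilde\Omega$ of entirely different shapes and sizes. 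Your sentence claiming the ``linear identification \dots transports the partition-angle data faithfully back to $\tilde\Omega$ and hence to the original region'' is the proposition itself, restated rather than proved. Relatedly, you never say what the partition angle $\psi'_i$ of a bar in $(\Omega-O)\cup\partial O$ \emph{means} independently of $\tau$; without such a definition the equality $\psi_i=\psi'_i$ --- which is how the proposition is actually invoked later, to conclude $\dot\psi_i=\dot\psi'_i$ and the workload correspondence in the Lyapunov argument --- cannot even be stated. Your corner-angle observation is correct but establishes a different, purely local statement (right angles at the four corners of $E'_i$) that the rest of the paper does not use.

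The paper's own proof takes a discrete route that supplies exactly the identification you are missing: it leans on Theorem 3.1, by which the Delaunay triangulation is carried over by $\tau$ with triangle adjacency, boundary correspondence, and orientation preserved. The bars in the original region then inherit their phase labels through the triangle-by-triangle and boundary-vertex correspondence; orientation preservation fixes the cyclic ordering $\psi'_1 < \psi'_2 < \cdots < \psi'_N$, and angle preservation of the conformal map on each Delaunay triangle pins the labels down to $\psi_i = \psi'_i$. If you want to salvage your continuous argument, you would need an analogous ingredient: a definition of the angular coordinate $\theta'$ on the original region (in the paper it is grounded in the shared triangulation and boundary correspondence), together with an argument that your strip decomposition of the rectangle induces that same coordinate --- at which point the combinatorial correspondence re-enters, and the exponential map becomes a convenience rather than the engine of the proof.
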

 
\begin{proof}
We analyze the impact of \(\tau\) on the partition bars. According to Theorem~\ref{Theorem 3.1}, the Delaunay triangulation guarantees the continuity of the length change of tangent vectors between adjacent triangles and the connectivity of regions remains unchanged. As \(\tau\) is an orientation-preserving diffeomorphism, it does not alter the local relative angle sequence (i.e.  \({\psi'_1}\left( 0 \right) < {\psi' _2}\left( 0 \right) <  \cdots  < {\psi' _N}\left( 0 \right) \) if \({\psi_1}\left( 0 \right) < {\psi _2}\left( 0 \right) <  \cdots  < {\psi _N}\left( 0 \right) \)). Moreover, it preserves the local distribution of Delaunay triangles. All of Delaunay triangles in chart \(\left( {{U_\alpha },{\phi _\alpha }} \right)\) will be mapped to \(\left( {\tau \left( {{U_\alpha }} \right),{{\phi'}_\alpha }} \right)\) where \({U_\alpha } \subset \Xi \), \(\tau \left( {{U_\alpha }} \right) \subset (\Omega  - O) \cup \partial O\). Thus, the relative positions of partition bars can be maintained in \((\Omega-O)\cup\partial O\) and \(\Xi\). Because the conformal diffeomorphism \(\tau\) preserves the angle of the Delaunay triangle and the one-to-one correspondence relationship of Delaunay triangle in boundary, the partition bar satisfies \({\psi _i} = {\psi' _i}\).
\end{proof}

In the mapped space, there exists a one-to-one correspondence between points and their polar coordinates, which facilitates the transformation of \({\rho (\theta, L)}\) into \({\rho (\theta,\varphi)}\). This approach enables an accurate representation of the serpentine obstacle boundaries. We derive the following expression:
$$
\omega \left( {\theta  } \right) = \int_{\varphi_{in}}^{\varphi _{out}} {\rho (\theta ,\varphi )} {\left( {R + r\cos \varphi } \right)^2}\sin \varphi d\varphi 
$$ 
with \(\theta ,\varphi \in [0,2\pi )\), \(R = ({{1 + {e^{ - 2\pi L_*}}}})/{2}\) and \(r = ({{1 - {e^{ - 2\pi L_*}}}})/{2}\), where \(L_*\) is obtained during the process of rectangular conformal mapping. Thus, we define the subregion workload in (\(\Omega\)-\(O\))\(\cup\)\(\partial O\) using \eqref{RhoOrigin}. In the original space  \((\Omega  - O) \cup \partial O\), we obtain
$$
\omega' \left( {\theta'} \right) = 
\int_{{\varphi '_{in}}}^{{\varphi'_{out}}} {{\tau ^{ - 1}}\left( {\rho (\theta ,\varphi )} \right)} {\left( {R' + r'\cos \varphi '} \right)^2}\sin \varphi'd\varphi'
$$ 
where \(R',r'\) can be represented by the preimage of \(R,r\). The workload on the  \(i\)-th subregion can be represented by 
\begin{equation}  
{m'_i} = \left\{ \begin{aligned}
\int_{{\psi'_i}}^{2\pi } {\omega '\left( {\theta' } \right)d\theta' }  + {\int_0^{{\psi'_i}} {\omega '\left( {\theta' } \right)d\theta' }  },&~\text{if}~ {\psi'_{i + 1}} < {\psi '_i}\\
\int_{{\psi '_i}}^{{\psi '_{i + 1}}} {\omega' \left( {\theta' } \right)d\theta' } .&~\text{otherwise} \hfill \\ 
\end{aligned}  \right.
\label{Workload}
\end{equation}
with \({\psi'_{N + 1}} = {\psi'_1}\) and \({m'_N} = {m'_0}\).
To balance the workload among subregions of (\(\Omega\)-\(O\))\(\cup\)\(\partial O\), the dynamics of partition bar is designed as
\begin{equation}  
{{\dot \psi '}_i} = {k_{\psi '}}\left( {m'_i} - {m'_{i-1}} \right)
\label{Partitioning dynamic}
\end{equation}
with the positive constant \({k_{\psi '}}\), \(i \in {I_N}\) and \({\psi'_i} = \,\bmod \,\left( {{\psi' _i},2\pi } \right)\). 
Similarly, the dynamics of the $i$-th agent is presented as
\begin{equation}  
 {{\dot p'}_i} = {u'_i},~\forall i\in {I_N}
\label{Control dynamic}
\end{equation}
with the control input \(u'_i\). 
\begin{remark}
Based on the above discussion, the density change of the subregion is much less than the density itself. According to Proposition~\ref{Proposition3.1}, we obtain \({\psi _i} = {\psi' _i}\). Since the sequential number of partition bars in (\(\Omega\)-\(O\))\(\cup\)\(\partial O\) and in \(\Xi\) are accurately corresponding, we have \({\dot \psi _i} = {\dot \psi' _i}\).
Since \({k_{\psi '}}\) and \({k_{\psi }}\) are positive constants, it follows from \({\dot \psi _i} = {k_\psi }\left( {{m_i} - {m_{i - 1}}} \right)\) and \({{\dot \psi '}_i} = {k_{\psi '}}\left( {m'_i} - {m'_{i-1}} \right)\) that 
\begin{equation}
\left( {{m_i} - {m_{i - 1}}} \right) \cdot \left( {{m'_i} - {m'_{i - 1}}} \right) > 0
\label{workload relationship}
\end{equation}
\end{remark}
From Remark~\ref{Estimation}, one obtains that the discretization error of the Delaunay triangulation can be estimated in the mapping process. Thus, we suppose that the non-smooth boundary of the obstacle can be approximated by the smooth curve. This allows us to analyze the geometric properties of the environment \(\Xi\) and (\(\Omega\)-\(O\))\(\cup\)\(\partial O\) in the view of Riemannian manifold. Furthermore, we design the control law according to the geometric properties of these regions.

\begin{definition}[Orientable Altas~\cite{Riemannian geometry}]
Suppose \(A\) is an Altas in \(\Xi\), we call \(A\) an orientable altas if any two coordinate charts in \(A\) are oriented-compatible.
\end{definition}

\begin{remark}
 \(\Xi\) and (\(\Omega\)-\(O\))\(\cup\)\(\partial O\) are topologically equivalent, and the orientation is a topological invariant. Suppose \(A\) is an orientable altas in \(\Xi\), there must exist an orientable altas in (\(\Omega\)-\(O\))\(\cup\)\(\partial O\).
\end{remark}

\begin{definition}[Orientable Manifold~\cite{Riemannian geometry}]
\(\Xi\) is an orientable manifold if there exists an oriented-compatible altas in \(\Xi\).
\label{definition3.2}
\end{definition}

\begin{theorem}
Suppose \(U\) is an open set in \({\mathbbm{R}^2}\), \(f:U \to {\mathbbm{R}^2}\), \(f \in {C^1}\left( U \right)\). \(y_0 \) is the regular value of \(f\). If \({y_0} \in f\left( U \right)\) and \(U \supset \Xi  = {f^{ - 1}}\left( {{y_0}} \right)\), then  \(\Xi \) is an orientable manifold.
\label{Theorem3.2}
\end{theorem}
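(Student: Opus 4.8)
The plan is to equip $\Xi = f^{-1}(y_0)$ with the orientation \emph{induced} by the ambient standard orientations of $U \subset \mathbbm{R}^2$ and of the target copy of $\mathbbm{R}^2$, exploiting that $y_0$ being a regular value forces $df_p$ to have full rank along $\Xi$, and then to verify that the atlas built from the Implicit Function Theorem is oriented-compatible, so that the criterion in Definition~\ref{definition3.2} applies. Since $U$ is an open subset of $\mathbbm{R}^2$ it inherits the standard orientation, and the regularity of $y_0$ is precisely the non-degeneracy that lets this orientation be transported consistently to the level set.

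First I would invoke the Regular Value Theorem (equivalently the Implicit Function Theorem, which is available because $f \in C^1(U)$): at every $p \in \Xi$ the differential $df_p$ is surjective, so $\Xi$ is a $C^1$ submanifold of $U$, and each point admits a neighborhood on which $\Xi$ is realized as a graph, yielding local charts $(U_\alpha,\phi_\alpha)$ whose union is an atlas $A$. Second, at each $p$ I would split $T_p U = \ker df_p \oplus W_p$, pull back the standard orientation of the target through the isomorphism $df_p|_{W_p}$ to orient $W_p$, and then orient $T_p\Xi = \ker df_p$ by demanding that the orientation of $\ker df_p$ followed by that of $W_p$ reproduce the standard orientation of $T_p U$; this recipe is independent of the choice of complement $W_p$ by a standard linear-algebra argument. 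Third, on any overlap $U_\alpha \cap U_\beta$ the transition map $\phi_\beta \circ \phi_\alpha^{-1}$ must be shown to have positive Jacobian determinant, which holds because both local orientations are pulled back from the \emph{same} ambient orientations, so the pertinent sign is fixed throughout. Any two charts of $A$ are therefore oriented-compatible, $A$ is an orientable atlas, and Definition~\ref{definition3.2} gives that $\Xi$ is an orientable manifold.

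The step I expect to be the main obstacle is showing that these pointwise orientation choices glue into a single globally defined orientation rather than flipping sign between overlapping charts. Locally the regular-value hypothesis guarantees that the pertinent Jacobian minor of $f$ is nonzero, but the consistency of its sign across $\Xi$ must be argued with care. The enabling fact is the $C^1$ regularity of $f$: since $df_p$ and its minors depend continuously on $p$, the induced orientation varies continuously and therefore cannot reverse, so the local sign assignments patch together. Once this continuity is secured, the oriented-compatibility of $A$ follows at once and Definition~\ref{definition3.2} yields that $\Xi$ is orientable.
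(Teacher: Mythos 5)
Your proposal is correct and is essentially the paper's own argument: the paper realizes your oriented complement \(W_p\) concretely as the unit normal field \(\vec n = \nabla f/\left\| {\nabla f} \right\|\) (continuous precisely because \(f \in C^1(U)\) and \(y_0\) is regular), declares a chart proper when \((\vec n, \partial/\partial t^1)\) agrees with the standard orientation of \(\mathbbm{R}^2\), flips the parameter on improper charts, and then obtains \(\det(\phi'_{\alpha\beta}) > 0\) by the same determinant-plus-continuity reasoning you outline for the transition maps. The only real difference is cosmetic: the paper works the codimension-one case explicitly, treating \(f\) as scalar-valued with \(\mathrm{rank}(\nabla f) \equiv 1\) on \(\Xi\) (despite the statement writing \(f:U\to\mathbbm{R}^2\)), whereas you phrase the induced-orientation construction in its general splitting form \(T_pU = \ker df_p \oplus W_p\), which subsumes the paper's normal-field device.
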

\begin{proof}
The set \(\left( {{x^1},{x^2}} \right)\) is the coordinate representation in \(U_\alpha\). According to Regular Value Theorem ~\cite{Riemannian geometry}, \(U\) is a region in \({\mathbbm{R}^2}\), \(f \in {C^1}\left( U \right)\), if  \({y_0} \in f\left( U \right)\), \(U \supset \Xi  = {f^{ - 1}}\left( {{y_0}} \right)\), then \(\Xi \) must be a manifold. \(f' = \left( {{{\partial f}}/{{\partial {x^1}}},{{\partial f}}/{{\partial {x^2}}}} \right) = \nabla f\), \(rank\left( {f'} \right) \equiv 1\) in \(\Xi\) because \(y_0 \) is the regular value of \(f\), which means that \(\nabla f\ne 0\), \(\nabla f\) is a normal vector of \(\Xi\). \(\vec n = \nabla f/\left\| {\nabla f} \right\|\) forms a continuous unit normal vector field in \(\Xi\), and the continuity is guaranteed by \(f\in C^1(U)\). Suppose \(A\) is an atlas in \(\Xi\), \(\left( {{U_\alpha },{\phi _\alpha }} \right)\) and \(\left( {{U_\beta },{\phi _\beta }} \right)\) are coordinate charts in \(A\) and \({U_\alpha } \cap {U_\beta } \ne \emptyset \). \({\phi _\alpha }:{I^1} \to {U_\alpha }\), \({t^1} \mapsto \left( {{x^1}\left( {{t^1}} \right),{x^2}\left( {{t^1}} \right)} \right)\), \({\phi _\beta }:{I^1} \to {U_\beta }\), \({s^1} \mapsto \left( {{x^1}\left( {{s^1}} \right),{x^2}\left( {{s^1}} \right)} \right)\), where \({I^1} \) is the \(1\)-dimensional unit cube,  \(t^1\) is the coordinate representation in \({I^1} \). The tangent vector of the coordinate chart is represented as \(\partial /\partial {t^1} = \left( {\partial {x^1}/\partial {t^1},\partial {x^2}/\partial {t^1}} \right)\). We call \(\left( {{U_\alpha },{\phi _\alpha }} \right)\) a proper chart if \(\left( {\vec n,\partial /\partial {t^1}} \right)\) is oriented-compatible with \({\mathbbm{R}^2}\) standard basis, otherwise we call it an improper chart. For any improper chart, replace \( {t^1}\) with \(- ({t^1})\), then it becomes a proper chart. Suppose all coordinate charts in \(A\) are proper, then \(\det \left( {\vec n,\partial /\partial {t^1}} \right) > 0\), \(\det \left( {\vec n,\partial /\partial {s^1}} \right) > 0\). \(\partial {x^i}/\partial {s^1} = \left( {\partial {t^1}/\partial {s^1}} \right)\left( {\partial {x^i}/\partial {t^1}} \right)\), \(i = 1,2\), where \(\partial {t^1}/\partial {s^1} = {\phi' _{\alpha \beta }}\), \(\phi _{\alpha \beta }={\phi _\alpha^{-1} } \circ {\phi _\beta }\).  Then, we obtain
\[\left( {\vec n,\partial /\partial {t^1}} \right) = \left( {\vec n,\partial /\partial {s^1}} \right)\left( {\begin{array}{*{20}{c}}1&0\\0&{\partial {t^1}/\partial {s^1}} \end{array}} \right).\]
By taking the determinants on both sides of the equation, we have \(\det \left( {\partial {t^1}/\partial {s^1}} \right) > 0\), and it is equivalent to \(\det \left( {{\phi' _{\alpha \beta }}} \right) > 0\). It means that \(A\) is an oriented-compatible altas, according to Definition~\ref{definition3.2}, \(\Xi\) is an orientable manifold. 
\end{proof}
\(\Xi\) is embedded in $\mathbbm{R}^2$ and every point in \(\Xi\) is regular valued point, as illustrated by Theorem~\ref{Theorem3.2}. Since \(\Xi \) is an orientable manifold, we can define oriented-compatible coordinate charts within \(\Xi\). A group of \(N\) agents divides the coverage region \(\Xi\) into \(N\) partition regions, each with a corresponding partition bar. We select a coordinate chart for each partition region, ensuring that all selected charts are oriented-compatible. Without loss of generality, let \(\left( {{U_\alpha },{\phi _\alpha }} \right)\) and 
\(\left( {{U_\beta },{\phi _\beta }} \right)\) are two charts in adjacent partition region, with \({U_{\alpha \beta }}: = {U_\alpha } \cap {U_\beta } \ne \emptyset \) being the partition bar between two regions. Given the charts \(\left( {{U_\alpha },{\phi _\alpha }} \right)\), \(\left( {{U_\beta },{\phi _\beta }} \right)\) are oriented-compatible charts, there exists a transition function \({\phi_{\alpha\beta}}:\phi_\alpha^{- 1}\left({{U_{\alpha \beta }}}\right)\to\phi _\beta ^{ - 1}\left( {{U_{\alpha \beta }}} \right)\). For any \(t \in {U_{\alpha \beta }}\), it holds that \(\det {\phi '_{\alpha \beta }}\left( t \right) > 0\). \({\mathbbm{R}^2}  \supseteq \Xi\) is a 2-dimensional orientable manifold with boundary, then \(\partial \Xi \) is a 1-dimensional orientable manifold without boundary~\cite{Riemannian geometry}. This indicates that the boundaries of  \(\Xi\)  are also orientable. It is important to note that orientation is a topological invariant, which in turn implies that the boundaries of (\(\Omega\)-\(O\))\(\cup\)\(\partial O\) are orientable. 
\begin{remark}
Assume that \(\Xi\) is an orientable manifold with boundary, and let \(\{ \left( {{U_\alpha },{\phi _\alpha }} \right)\} \) is an orientable altas in \(\Xi\). For any point \(x\in\Xi\), there exits a diffeomorphism \(\tau :\Xi\to (\Omega-O)\cup\partial O\). Let \(x = {\tau ^{ - 1}}\left( y \right)\), there exists a chart \(({{U_\alpha },{\phi _\alpha }})\) in \(\Xi\) such that \(x \in {U_\alpha }\), \(y \in \tau \left( {{U_\alpha }} \right)\). For the chart \(\{ \left( {\tau \left( {{U_\alpha }} \right),{\phi _\alpha } \circ {\tau ^{ - 1}}} \right)\} \), these charts form an orientable altas within (\(\Omega\)-\(O\))\(\cup\)\(\partial O\), thereby making (\(\Omega\)-\(O\))\(\cup\)\(\partial O\) is an orientable manifold.
\end{remark}
Given the oriented-compatible property of the interior and boundary of the charts in \(\Xi\), the entire region \(\Xi\) is orientable, which suggests that we only need to analyze a single chart within \(\Xi\). We select \(\left( {{U_i },{\phi_i }} \right)\)  as a chart within the region \({E_i}\), which is enclosed by \({L_{in}},{L_{out}},{\psi _i},{\psi _{i+1}}\). Then we define the Riemannian metric \(\eta :{T_p}{E_i} \times {T_p}{E_i} \to R\) for any \(p \in {E_i}\). In the chart \(\left( {{U_i },{\phi_i }} \right)\), \(\eta\) has a matrix representation, with components denoted as \({\eta _{j k}}\). We use the notation \({\eta }\) to denote the matrix itself and \({\eta ^{j k}}\) for the components of the matrix inverse of the Riemannian metric \({\eta}\). Based on Equation \eqref{coordinate representative}, we represent the coordinate as \(\left( {\theta,\varphi } \right)\), with the original workspace coordinates represented as \(\left( {\theta ',\varphi '} \right)\). Through a straightforward calculation, we obtain
\begin{equation} 
\eta  = \left[\begin{array}{*{20}{c}}
  {{{\left( {R + r\cos \varphi } \right)}^2}}&0 \\ 
  0&{{r^2}{{\sin }^2}\varphi } 
\end{array}\right]
\label{Riemann Metric}
\end{equation}
Based on Equation \eqref{Jacobian of f}, we get \(\det \left( {{J_\tau }} \right) > 0\), which indicates the orientation of (\(\Omega\)-\(O\))\(\cup\)\(\partial O\) is consistent with that of \(\Xi\). Thus, we can employ a pull-back mapping to derive the Riemannian metric within (\(\Omega\)-\(O\))\(\cup\)\(\partial O\).
\begin{equation}
\eta ' = \left[\begin{array}{*{20}{c}}
{{g'_{11}}}&{{g'_{12}}}\\
{{g'_{21}}}&{{g'_{22}}}
\end{array}\right]
\end{equation}
with \({g'_{11}} = {\left( {R' + r'\cos \varphi '} \right)^2}{\left( {\frac{{\partial \theta '}}{{\partial \theta }}} \right)^2} + \left( {{{\left( {r'} \right)}^2}{{\sin }^2}\varphi '} \right){\left( {\frac{{\partial \varphi '}}{{\partial \theta }}} \right)^2}\), \({{g'_{12}} = g'_{21}} = {\left( {R' + r'\cos \varphi '} \right)^2}\left( {\frac{{\partial \theta '}}{{\partial \theta }}} \right)\left( {\frac{{\partial \theta '}}{{\partial \varphi }}} \right) + \left( {{{\left( {r'} \right)}^2}{{\sin }^2}\varphi '} \right)\left( {\frac{{\partial \varphi '}}{{\partial \theta }}} \right)\left( {\frac{{\partial \varphi '}}{{\partial \varphi }}} \right)\), \({g'_{22}} = {\left( {R' + r'\cos \varphi '} \right)^2}{\left( {\frac{{\partial \theta '}}{{\partial \varphi }}} \right)^2} + \left( {{{\left( {r'} \right)}^2}{{\sin }^2}\varphi '} \right){\left( {\frac{{\partial \varphi '}}{{\partial \varphi }}} \right)^2}\).

Consider a smooth manifold \(\Xi\) with boundary, which is equipped with its standard Riemannian metric \(\left( {{\mathbbm{R}^2},\eta} \right)\). In this work, the coverage region \(\Xi\) is non-convex, yet the partition region \({E_i} \subset \Xi \) is path connected. Inspired by the work in~\cite{Unknown13}, we introduce a length metric structure \({d_l}\left( { p, q } \right)\) on \({E_i}\) to navigate an agent to an optimal centroid while avoiding collisions with the boundary of the manifold and obstacles, and \({d_l}\left( { p, q } \right)\) is equal to the infimum of the lengths of all rectifiable paths connecting \(p\) and \(q\). This ensures that \({E_i}\) is a path metric space, and consequently, a complete metric space. Since \({E_i} \) is a subset of the orientable manifold \(\Xi\), \(\Xi\) itself is a path metric space equipped with the length structure \({d_l}\left( { p, q } \right)\). We typically utilize the Riemannian metric to induce the length metric \({d_l}\left( { p, q } \right)\), distinguishing among the definitions in the same chart or different charts. Given an agent position \(p\) and a target position \(q\), with \(Q = {\phi _\alpha }\left( q \right)\), \(P = {\phi _\alpha }\left( p \right)\), and a path \(\gamma:\left[ {p,q} \right] \to \Xi \), where \(\gamma \left( t \right) = \left( {{x^1}\left( t \right), \cdots,{x^n}\left( t \right)} \right)\). {blue}We define the path length \(L\left( \gamma \right)\) as the length of \(\gamma\). Let \(L\left( \gamma \right)\) denote the integral of the square root of the sum of the products of the coefficients \({\eta _{ij}}( {\gamma \left( t \right))}\) and the squares of the derivatives of the coordinates \(x^i\) with respect to \(t\), over the interval \(\left[ {p,q} \right]\), that is,
$$
L\left( \gamma \right) = \int_p^q {\sqrt {\sum\nolimits_{i,j = 1}^n {{\eta _{ij}}\left( {\gamma \left( t \right)} \right)({{d{x^i}}}/{{dt}})({{d{x^j}}}/{{dt}})} } } dt.
$$
Furthermore, the length metric \({d_l}( { \cdot, \cdot })\) is defined as the infimum of \(L\left( \gamma \right)\), given by
 \begin{equation} 
{d_l}\left( { p, q } \right)= \inf \{ L\left( \gamma \right)\}. 
\label{Length metric}
\end{equation}
The metric satisfies the fundamental metric properties. 1) Non-negativity: the length of any curve is non-negative. 2)
Symmetry: the distance between points \(p\) and \(q\) is equal to the distance between \(q\) and \(p\). 3) Triangular inequality:  
For any three points \(x,y,z\) in the curve connecting \(p\) and \(q\), the infimum of the distance between \(x\) and \(z\) is less than or equal to the sum of the infimums of the distances between \(x\) to \(y\) and \(y\) to \(z\), which equals to \(\inf d(x,z) \leq \inf d(x,y) + \inf d(y,z)\). It implies that those points satisfy \({d_l}(x,z) \le {d_l}(x,y) + {d_l}(y,z)\). Thus, \({d_l}\left( { \cdot, \cdot } \right)\) constitutes a metric. Similarly, using \({\eta '}\) to induce the length metric in the original workspace (\(\Omega\)-\(O\))\(\cup\)\(\partial O\), we obtain
\begin{equation} 
{d_l}\left( { p', q' } \right)= \inf \left\{ \int_{p'}^{q'} {\sqrt {\sum\nolimits_{i,j = 1}^n {{{\eta '}_{ij}}\left( {\gamma '\left( t \right)} \right)\frac{{d{{x'}^i}}}{{dt}}\frac{{d{{x'}^j}}}{{dt}}} } } dt\right\},
\end{equation}
where \(\gamma ':\left[ {p,q} \right] \to \Omega {\rm{ }}\) and \(\gamma '\left( t \right) = \left( {{{x'}^1}\left( t \right), \cdots ,{{x'}^n}\left( t \right)} \right)\).
Thus, we can define a length metric \({d_l}\left( { \cdot, \cdot } \right)\) on \(\Xi\), and for a chart \(\left( {{U_\alpha },{\phi _\alpha }} \right)\) in \({E_i}\). We have \(d_l^C\left( {p,q} \right) = {d_l}\left( {\phi _\alpha ^{ - 1}\left( P \right),\phi _\alpha ^{ - 1}\left( Q \right)} \right)\) when \(p,q \in {U_\alpha }\) and \(d_l^{\tilde C}\left( {p,q} \right) = {d_l}\left( {p,\phi _\alpha ^{ - 1}\left( Q \right)} \right)\) when \( p\in \Omega \backslash {U_\alpha }\),  \(q \in {U_\alpha }\). In a compact region \(\Xi\), the existence of Cauchy sequences implies the existence of the negative gradient at \(q\), which in turn suggests the existence of a minimal path from \(q\). To analyze the existence of minimum path over the entire region \(\Xi\), it is necessary to analyze the geometry properties of subregions (i.e. convexity, connectedness) after defining the metric structure.   
\begin{theorem}
The partition region \({E_i}\) is a geodesically convex region.
\label{geodesicallyConvex}
\end{theorem}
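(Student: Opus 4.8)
The plan is to verify, for the sector \(E_i\), the three ingredients packaged in the definition of a geodesically convex region: for arbitrary \(p,q\in E_i\) there must exist a length-minimizing geodesic joining them, this geodesic must lie entirely in \(E_i\), and it must be unique. Recall that \(E_i\) is the annular sector of \(\Xi\) cut out by the two partition bars \(\theta=\psi_i\), \(\theta=\psi_{i+1}\) together with the inner and outer boundary curves. First I would dispose of existence: the closure \(\overline{E_i}\) is a closed subset of the compact region \(\Xi\), hence a compact — and therefore complete — length space under the metric \(d_l\) of \eqref{Length metric}. The Hopf–Rinow theorem already invoked for \((\Omega-O)\cup\partial O\) then guarantees a minimizing geodesic between any two points of \(\overline{E_i}\).

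The structural fact I would exploit throughout is that the metric \eqref{Riemann Metric} is independent of \(\theta\), so \(\partial_\theta\) is a Killing field and \(\Xi\) carries a rotationally symmetric (surface-of-revolution) geometry. Consequently the angular momentum \(p_\theta=(R+r\cos\varphi)^2\dot\theta\) is conserved along every geodesic, which is Clairaut's relation. Two conclusions follow at once. First, the partition bars, being the meridians \(\theta=\mathrm{const}\), are themselves geodesics. Second, since \(R>r>0\) forces \((R+r\cos\varphi)^2\ge (R-r)^2>0\), the sign of \(\dot\theta\) equals that of the constant \(p_\theta\), so \(\theta\) is monotone along any geodesic. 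Hence a minimizing geodesic between \(p,q\in E_i\) sweeps \(\theta\) monotonically between \(\theta_p\) and \(\theta_q\) and never leaves the angular wedge \([\psi_i,\psi_{i+1}]\); this is the angular confinement.

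For the radial direction I would show the geodesic cannot escape through the inner or outer boundary. Observing that \eqref{Riemann Metric} is isometric, via \(\rho=R+r\cos\varphi\), to the flat Euclidean metric \(d\rho^2+\rho^2 d\theta^2\), the radial coordinate \(\rho\) is a convex function of arc length along a geodesic, so its maximum is attained at the endpoints and the geodesic stays inside the outer curve. The delicate side is the inner boundary, which is concave toward \(E_i\): it is precisely this concavity that makes the \emph{full} annulus fail to be geodesically convex, since a pair of nearly antipodal points admits two competing minimizers running around the hole on either side. The angular confinement from the previous step is what rescues the sector, because the monotonicity of \(\theta\) forbids a geodesic from circling the hole; combined with the obstacle-avoiding nature of \(d_l\), only the single direct route survives and the minimizer remains in \(E_i\).

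The hardest part will be uniqueness, and this is where the sectorial structure does the real work. I would use that \(E_i\), unlike the annulus, is simply connected (a topological disk), and that \eqref{Riemann Metric} has zero Gaussian curvature. Suppose two distinct minimizing geodesics shared the endpoints \(p,q\); by the confinement step both lie in \(\overline{E_i}\) in the same homotopy class, so they bound a geodesic bigon enclosing a disk in \(E_i\). Applying Gauss–Bonnet to this bigon with \(K\equiv 0\) would force the two exterior angles to sum to \(2\pi\), hence both interior angles to vanish — a contradiction; equivalently, it would place \(q\) on the cut locus of \(p\), contradicting the measure-zero cut-locus hypothesis of the preliminaries. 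Uniqueness therefore holds, and assembling existence, confinement and uniqueness shows that \(E_i\) is geodesically convex. The remaining technical care throughout concerns the concave inner boundary and the coordinate degeneracy of \eqref{Riemann Metric} where \(\sin\varphi=0\), which I would absorb into the discretization/approximation argument of Remark~\ref{Estimation}.
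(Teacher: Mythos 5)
There is a genuine gap, and it sits exactly where your reading of the statement diverges from the paper's. The paper defines the metric on \(E_i\) \emph{intrinsically}: \(d_l(p,q)\) is the infimum of lengths of rectifiable paths \emph{lying in} \(E_i\), so a minimal path is confined to \(E_i\) by definition, and the proof of Theorem~\ref{geodesicallyConvex} only has to supply two things — existence (completeness of \((E_i,d_l)\) plus Hopf--Rinow) and uniqueness (the cut locus \(C_p\) has Lebesgue measure zero, citing Wolter). You instead take minimizers of the \emph{ambient} metric on \(\Xi\) and try to prove they stay inside the wedge. Your tool for this, Clairaut monotonicity of \(\theta\), does not suffice: monotonicity only says the geodesic sweeps one of the two complementary angular intervals between \(\theta_p\) and \(\theta_q\); nothing forces it to be the interval contained in \([\psi_i,\psi_{i+1}]\) rather than the wrap-around one. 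In fact the ambient claim is false in general: the equitable partition does not prevent a sector of angular width greater than \(\pi\), and if \(p,q\) are taken near the two partition bars of such a sector, close to the inner boundary, the \(\Xi\)-minimizer crosses the narrow complementary sector and exits \(E_i\). So the confinement step cannot be repaired inside your framework; confinement has to be built into the metric, which is precisely what the paper's intrinsic \(d_l\) does (and is also why the paper's minimal path is "entirely within this partition region" without further argument).

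A secondary problem is that your Clairaut and Gauss--Bonnet arguments apply to smooth Riemannian geodesics, whereas shortest paths for \(d_l\) in an annular region may contain arcs of the concave inner boundary — this is the whole point of using a length metric for obstacle avoidance. Such arcs carry nonzero geodesic curvature, so a bigon bounded by two minimizers acquires boundary integral terms in Gauss--Bonnet, and the contradiction "exterior angles sum to \(2\pi\)" no longer follows; consistently with this, in the full flat annulus two antipodal points near the hole really do admit two distinct minimizers. Your fallback — that a second minimizer would place \(q\) in the cut locus of \(p\), contradicting its measure-zero property — is exactly the paper's uniqueness argument; once you also adopt the intrinsic metric on \(E_i\), your proof collapses to the paper's two steps, and the flatness, Killing-field, and bigon machinery is not needed.
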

\begin{proof}
\(\left( {E_i,{d_l}} \right)\) is a smooth manifold with a boundary, which indicates the existence of the partition bar \({\psi_i}\) in the region. Moreover, it is a complete metric space with a length metric.  By the Hopf-Rinow Theorem~\cite{Riemannian geometry}, there exists a minimal path (though not necessarily unique) connecting any pair of points. Consider a point \(p \in {E_i}\) and another point \(q \in {{E_i}-\partial{E_i}}\), to ensure the uniqueness of the minimal path, the cut locus \({C_p}\) is supposed to be a Lebesgue-measure-zero set.  For any point \(p \in {E_i}\), since \({E_i}\) can be considered as a subset of a smooth, complete manifold of the same dimension, according to the Theorem 6.2 in~\cite{Cut loci85}, the cut locus \({C_p}\) is a Lebesgue zero measure set in \({E_i}\). Consequently, there exists only one minimal path connecting \(p\) and \(q\), which is entirely within this partition region, indicating that the partition region \(E_i\) is a geodesically convex region.  
\end{proof}

The partition region \(E_i\) is a geodesically convex region, implying the existence of a minimal path in \(E_i\). In reality, the agent will pass through different sub-regions. Thus, it is imperative to ensure the geodesically convex property of the entire region \(\Xi\).
\begin{corollary}
The mapped region \(\Xi\) is a geodesically convex region.
\label{Corollary3.3.1}
\end{corollary}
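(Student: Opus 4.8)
The plan is to promote the local result of Theorem~\ref{geodesicallyConvex} to the whole annulus by the same completeness-plus-cut-locus argument, now carried out on $\Xi$ rather than on a single $E_i$. First I would record that $\left(\Xi,d_l\right)$ is a connected, compact manifold with boundary endowed with the length metric induced by $\eta$ in Equation~\eqref{Riemann Metric}; compactness makes it a complete metric space, so the Hopf-Rinow Theorem~\cite{Riemannian geometry} guarantees that any pair $p\in\Xi$, $q\in\Xi-\partial\Xi$ is joined by at least one minimal path. What remains is to upgrade this to a unique minimal path that lies entirely inside $\Xi$.

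For uniqueness I would regard $\Xi$ as a subset of a smooth complete manifold of the same dimension that contains it, exactly as in the proof of Theorem~\ref{geodesicallyConvex}, so that Theorem 6.2 of~\cite{Cut loci85} applies again: the cut locus $C_p$ of every $p\in\Xi$ is a Lebesgue-measure-zero set. Hence for each fixed $p$ the minimal path to $q$ is unique for all $q$ outside a negligible set, matching the standing assumption on $C_p$. To see that this unique minimal geodesic stays in $\Xi$, I would decompose it through the sectorial partition $\Xi=\bigcup_{i=1}^N E_i$: whenever the path lies in a single $E_i$ it cannot leave $\Xi$ by Theorem~\ref{geodesicallyConvex}, and when it crosses a partition bar $U_{\alpha\beta}=U_\alpha\cap U_\beta$ the continuity of $d_l$ across the shared bar forces the two geodesic segments to meet without corner, so their concatenation is again the minimal path and never exits the union of the convex pieces.

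The hard part will be the global uniqueness on a region that is topologically an annulus rather than a disk: because $\Xi$ is not simply connected, two points can in principle be joined by competing geodesics that wind around the inner hole, which would break the unique-minimiser requirement in the definition of a geodesically convex region. I would control this precisely through the measure-zero cut locus, since by the paper's own definition any $q$ admitting more than one minimal path relative to $p$ is a $pica$ and therefore lies in $C_p$; the geodesically convex subregions $E_i$ then confine each winding-free minimiser to $\Xi$. A secondary technical point is to justify that the concatenation across partition bars produces a genuine geodesic and not merely a shortest broken path; this follows from the oriented-compatibility of the charts established around Theorem~\ref{Theorem3.2} together with the smoothness of $\tau$, which makes $\eta$ well defined and continuous across every bar. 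Assembling these pieces yields a unique minimal geodesic between any admissible pair lying wholly in $\Xi$, so $\Xi$ is geodesically convex.
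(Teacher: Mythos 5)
Your proposal is correct at the paper's level of rigor, but it is strictly broader than the paper's own proof, which consists only of your splicing step: the paper argues that each $E_i$ is geodesically convex by Theorem~\ref{geodesicallyConvex}, that each partition bar $\psi_i = E_{i-1}\cap E_i$ is a smooth set across which the Riemannian metric is continuous, and that geodesics therefore splice coincidently at the bars, so a minimal path crossing several sub-regions remains inside $\Xi$; nothing more. What you add, and the paper does not do, is a direct global argument on $\left(\Xi, d_l\right)$ itself: compactness gives completeness, Hopf--Rinow gives existence of minimizers, and Theorem 6.2 of~\cite{Cut loci85} applied to $\Xi$ (viewed as a subset of a complete manifold of the same dimension) gives a Lebesgue-measure-zero cut locus, hence uniqueness off a negligible set. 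It is worth noticing that this direct argument actually makes your splicing step redundant: since $d_l$ is by definition the infimum of lengths of rectifiable paths lying in $\Xi$, any minimizer realizing $d_l$ is automatically contained in $\Xi$, so the decomposition $\Xi=\bigcup_{i=1}^N E_i$ is not needed for containment at all. Your proposal is also more candid on a point the paper glosses over: on a topological annulus, uniqueness of minimizers for \emph{all} pairs is impossible (for a rotationally symmetric metric, diametrically opposite points admit two minimizers winding opposite ways around the hole), so geodesic convexity in the sense of the paper's definition can only hold in the almost-everywhere sense supplied by the measure-zero cut locus, which is exactly the caveat you flag; the paper silently absorbs this into its standing assumption. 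In summary, the paper's route is shorter and reuses Theorem~\ref{geodesicallyConvex}, while your route is self-contained, renders containment trivial, and localizes precisely where uniqueness can fail; both share the same unresolved technical looseness about why a concatenation of geodesics across a bar is again a minimizing geodesic, which neither proof justifies beyond continuity of the metric.
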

\begin{proof}
Since \({E_i}\) is a geodesically convex region and \({E_i} \subset \Xi \subseteq \sum\nolimits_{i = 1}^N {{E_i}} \). Note that the partition bar \({\psi _i} = {E_{i - 1}} \cap {E_i}\) is a continuous and smooth set, which means that the Riemannian metric carried by each region is continuous at the splicing point, and the geodesic is coincident when the partition bar \({E_i}\) and \({E_{i-1}}\) are spliced together. Thus, the minimal path passing through different partition regions is entirely within the spliced region \(\Xi\). This demonstrates that the mapped region \(\Xi\) is also a geodesically convex region.   
\end{proof}

\begin{remark}
The diffeomorphism \(\tau\) satisfies \(\tau :{\mathop{\rm int}} \left( {\left( {\Omega  - O} \right) \cup \partial O} \right) \to {\mathop{\rm int}}  \Xi  \) and \(\tau :\partial \left( {\left( {\Omega  - O} \right) \cup \partial O} \right) \to \partial \Xi \), where \({\mathop{\rm int}} \Xi \) denotes the inner region of \(\Xi\), and 
\(\partial \Xi \) represents the outer region of \(\Xi\). Thus, the preimage of minimal path in \(\Xi\) is entirely located in (\(\Omega\)-\(O\))\(\cup\)\(\partial O\).
\end{remark}

The partition region \(\Xi\) is a geodesically convex  region, which is complete as a
metric space with the length metric \({d_l}\left( { \cdot, \cdot } \right)\) induced by the Riemannian
metric $\eta$. The above illustrations and definitions open the door to defining the relationship between the tangent of a minimizing path and the derivative of the distance function 
${d_l}({\cdot,\cdot})$. According to Theorem~\ref{geodesicallyConvex}, the cut locus \({C_p}\) is a Lebesgue zero measure set in \(\Xi\). For an arbitrarily chosen  chart \(\left( {{U_\alpha },{\phi_\alpha }} \right)\), since 
\[\sqrt{{\eta_{ij}}({{\phi_\alpha }(q)})({d{x^i}/dt})( {d{x^j}/dt})}=\frac{{{\eta_{ij}}({{\phi_\alpha}(q)})( {d{x^i}/dt})({d{x^j}/dt})}}{{\sqrt{{\eta_{mn}}({{\phi_\alpha }(q)})({d{x^m}/dt})({d{x^n}/dt})}}},\]
it holds that
 \begin{equation} 
\frac{\partial }{{\partial {q^i}}}d_{_l}^C\left( {{\phi _\alpha }\left( p \right),{\phi _\alpha }\left( q \right)} \right)=\frac{{{\eta _{ij}}\left( {{\phi _\alpha }\left( q \right)} \right)z_{p{\phi _\alpha }\left( q \right)}^j}}{{\sqrt {{\eta _{mn}}\left( {{\phi _\alpha }\left( q \right)} \right)z_{p{\phi _\alpha }\left( q \right)}^mz_{p{\phi _\alpha }\left( q \right)}^n} }},
\label{equation19}
\end{equation}
where \({z_{p{\phi _\alpha }\left( q \right)}} = {\left[ {z_{p{\phi _\alpha }\left( q \right)}^1,z_{p{\phi _\alpha }\left( q \right)}^2, \cdots,z_{p{\phi _\alpha }\left( q \right)}^N} \right]^T}\) is the coefficient vector in coordinate chart \(\left( {{U_\alpha },{\phi _\alpha }} \right)\) of the tangent vector at \(q\) to the shortest path connecting \(p\)  to \(q\). Inspired by the work~\cite{SBhattacharya14}, suppose the position of agent \(p\) is not within the chart \(\left( {{U_\alpha },{\phi _\alpha }} \right)\), we obtain
\begin{equation} 
\frac{\partial }{{\partial {q^i}}}d_{_l}^{\tilde C}  \left( {p,{\phi _\alpha }\left( q \right)} \right) =\frac{{{\eta _{ij}}\left( {{\phi _\alpha }\left( q \right)} \right)z_{p{\phi _\alpha }\left( q \right)}^j}}{{\sqrt {{\eta _{mn}}\left( {{\phi _\alpha }\left( q \right)} \right)z_{p{\phi _\alpha }\left( q \right)}^mz_{p{\phi _\alpha }\left( q \right)}^n} }},
\label{equation20}
\end{equation}
where \({z_{p{\phi _\alpha }\left( q \right)}} = {\left[ {z_{p{\phi _\alpha }\left( q \right)}^1,z_{p{\phi _\alpha }\left( q \right)}^2, \cdots,z_{p{\phi _\alpha }\left( q \right)}^N} \right]^T}\) is the coefficient vector of the tangent vector at \(q\) to the shortest path  connecting \(p \in \Xi \backslash {U_\alpha }\) to \(q \in {U_\alpha }\). The following remark gives a brief explanation of  Equation~\eqref{equation20}. 
\begin{remark}
Let \({\bf{q}}=\phi _\alpha(q)\). Consider a minimal path \({\gamma _{pq}} = {\gamma _{pw}} \cup {\gamma _{wq}}\), where \(w \in {B_q}\), \({B_q}\) is a open ball of \(q\). For any \({\bf{q}} \in {\phi _\alpha(B_q) }\), \({d_l^{\tilde D}}\left( {p,\bf{q}} \right) \le {d_l}\left( {p,w} \right) + {d_l^{\tilde D}}\left( {w,\bf{q}} \right)\), the equality holds when \(p\),\(w\) and \({\phi _\alpha^{-1}(\bf{q})}\) lie on the same short path. According to Corollary~\ref{Corollary3.3.1}, the mapped region \(\Xi\) is a geodesically convex region, which implies that the cut locus is Lebesgue zero measure in chart  \(\left( {{U_\alpha },{\phi _\alpha }} \right)\).  This implies that the differentials of the distance functions at \(\bf{q}\) should be the same, 
which leads to \[\frac{\partial }{{\partial \bf{q}}}{d^{\tilde D}}\left( {p,\bf{q}} \right) = \frac{\partial }{{\partial \bf{q}}}{d^{\tilde D}}\left( {w,\bf{q}} \right),\]
where \({\gamma _{pw}}\) is a minimal path, and \({\gamma _{wq}}\) is a minimal geodesic. According to~\cite{Pimenta08}, the coefficient vector of the tangent vector at \(\bf{q}\) is the same, which means that \({z_{p\bf{q}}} = {z_{w\bf{q}}}\). Consequently, the generalization of Equation~\eqref{equation20} has the same form as Equation~\eqref{equation19}.
\end{remark}
These definitions essentially reveal the relationship between the gradient of the length metric function and the tangent of the minimal path connecting two points. Particularly, we explore the situation where the agent position is not within the designated chart. This scenario serves as a suitable backdrop for examining the challenge of steering the agent to the optimal centroid of the sub-region. The gradient of the length metric function in the region (\(\Omega\)-\(O\))\(\cup\)\(\partial O\) is defined identically to Equations~\eqref{equation19} and \eqref{equation20}, because (\(\Omega\)-\(O\))\(\cup\)\(\partial O\) and \(\Xi\) are topologically equivalent. The performance indexes for multi-agent coverage problem in (\(\Omega\)-\(O\))\(\cup\)\(\partial O\) is designed as:
\begin{equation} 
J'\left( {\psi ',{\bf{p'}}} \right) = \sum\limits_{i = 1}^N {\int_{{{E'}_i}\left( {\psi '} \right)} {{f_i}\left( {{d_l}\left( {q',{{p'}_i}} \right)} \right)\rho \left( {q'} \right)dq'} } 
\label{equation22}
\end{equation}
with \({\bf{p}} = {[{p_1},{p_2}, \cdots ,{p_N}]^T}\) and \(\psi  = {[{\psi _1},{\psi _2}, \cdots ,{\psi _N}]^T}\), \({\bf{p'}} = {[{p'_1},{p'_2}, \cdots ,{p'_N}]^T}\). \(J\left( {\psi ,{\bf{p}}} \right)\) and \(J'\left( {\psi ',{\bf{p'}}} \right)\)are the performance index in \(\Xi\) and (\(\Omega\)-\(O\))\(\cup\)\(\partial O\), respectively. 
In this work, for \(\Xi\) or (\(\Omega\)-\(O\))\(\cup\)\(\partial O\), we work with the following form 
\begin{equation} 
f_i(x) = {x^2}. 
\end{equation}
The rationality of the choice is illustrated in~\cite{SBhattacharya14}. Essentially, our core objective is to design a distributed control strategy to address the optimization problem in (\(\Omega\)-\(O\))\(\cup\)\(\partial O\), while ensuring that the optimization criteria in \(\Xi\) are also met. Hence, we present the optimization problem as follows: 
\begin{equation} 
\mathop {\min }\limits_{\psi ',{\bf{p'}}} J'\left( {\psi ',{\bf{p'}}} \right)
\label{minJ'}
\end{equation}
which should be solved in (\(\Omega\)-\(O\))\(\cup\)\(\partial O\). Subsequently, we focus on the stability and convergence properties of multi-agent partition dynamics.

\begin{theorem}
Partition dynamics \eqref{Partitioning dynamic} equalizes workload on each subregion of the original workspace.
\end{theorem}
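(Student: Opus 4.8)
The plan is to read \eqref{Partitioning dynamic} as a gradient-like consensus flow on the workload vector \(m' = (m'_1,\dots,m'_N)\) and to exhibit a Lyapunov function whose decay forces every \(m'_i\) to a common value. First I would note that the total workload \(M = \sum_{i=1}^N m'_i\) equals the integral of \(\omega'\) over the whole region and is therefore invariant under the motion of the partition bars, so the only candidate equilibrium compatible with the dynamics is the equal-split configuration \(m'_i \equiv \bar m := M/N\). Writing \(m'_i = \int_{\psi'_i}^{\psi'_{i+1}}\omega'(\theta')\,d\theta'\) in the generic ordering, the fundamental theorem of calculus gives \(\partial m'_i/\partial \psi'_i = -\omega'(\psi'_i)\) and \(\partial m'_i/\partial \psi'_{i+1} = \omega'(\psi'_{i+1})\). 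Substituting \eqref{Partitioning dynamic} into the chain rule for \(\dot m'_i\) then produces
\begin{equation}
\dot m'_i = k_{\psi'}\bigl[\,\omega'(\psi'_{i+1})(m'_{i+1}-m'_i) - \omega'(\psi'_i)(m'_i-m'_{i-1})\,\bigr],
\end{equation}
with the cyclic convention \(\psi'_{N+1}=\psi'_1\), \(m'_0=m'_N\). This is exactly \(\dot m' = -k_{\psi'}L(\psi')\,m'\), where \(L(\psi')\) is the weighted Laplacian of the cycle graph \(C_N\) in which the edge joining subregions \(i-1\) and \(i\) carries the weight \(a_i := \omega'(\psi'_i)\) — the density at the shared bar.

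The key step is the Lyapunov analysis. I would take \(V = \tfrac12\sum_{i=1}^N (m'_i-\bar m)^2\); since \(\sum_i \dot m'_i = 0\) the mean term cancels and \(\dot V = (m')^\top \dot m' = -k_{\psi'}(m')^\top L(\psi')m'\). Invoking the standard Laplacian quadratic identity yields
\begin{equation}
\dot V = -k_{\psi'}\sum_{i=1}^N \omega'(\psi'_i)\,(m'_i-m'_{i-1})^2 \le 0,
\end{equation}
so \(V\) is non-increasing. Because \(\rho\) is strictly positive, \(\omega'(\theta')\) is bounded below by some \(\underline a>0\) on the compact domain, the effective graph stays connected along the flow, and \(\dot V=0\) forces \(m'_i=m'_{i-1}\) for every \(i\), i.e. the all-equal configuration. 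LaSalle's invariance principle on the compact state space \(\{\psi'_i\in[0,2\pi)\}\) then delivers convergence to \(m'_i\equiv\bar m\), which is the asserted workload equalization.

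To upgrade this to the exponential balance promised in the abstract, I would compare the weighted quadratic form with the unweighted cycle Laplacian \(L_0\): lower-bounding the weights by \(\underline a\) and using \((m')^\top L_0 m' \ge \lambda_2(C_N)\,\|m'-\bar m\mathbf 1\|^2 = 2\lambda_2(C_N)V\), with algebraic connectivity \(\lambda_2(C_N)=4\sin^2(\pi/N)\), gives \(\dot V \le -2k_{\psi'}\underline a\,\lambda_2(C_N)\,V\) and hence exponential decay of the workload variance. Alternatively, one could transport a corresponding statement for the \(\Xi\)-space dynamics \eqref{partitioning dynamic} through the sign relation \eqref{workload relationship}, but the direct Lyapunov route is self-contained and yields the convergence rate explicitly.

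The main obstacle I anticipate is not the Lyapunov computation but the bookkeeping forced by the modulo operation and the piecewise definition of \(m'_i\) in \eqref{Workload}: the map \(\psi'\mapsto m'\) is only piecewise smooth, so I must justify that the chain-rule computation and the cyclic telescoping survive the \emph{if/otherwise} case switches. I would argue that the switching surfaces form a measure-zero set crossed transversally by the flow, that \(V\) is continuous there, and that the one-sided derivatives of \(V\) agree in sign, so the LaSalle conclusion is unaffected. The secondary point needing care is the uniform positivity \(\omega'(\theta')\ge\underline a>0\), which follows from \(\rho>0\) together with the area element \((R+r\cos\varphi)^2\sin\varphi\) staying bounded away from zero off the degenerate angles, guaranteeing that the cycle never disconnects and that the common value \(\bar m\) is genuinely attained.
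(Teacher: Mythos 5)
Your proof is correct, and it takes a genuinely different route from the paper's. You stay entirely inside the original workspace: substituting \eqref{Partitioning dynamic} directly into $\dot m'_i = \omega'(\psi'_{i+1})\dot\psi'_{i+1} - \omega'(\psi'_i)\dot\psi'_i$ turns the workload evolution into a state-dependent weighted cycle-Laplacian flow, so the Lyapunov derivative collapses to the perfect-square form $\dot V = -k_{\psi'}\sum_{i=1}^N \omega'(\psi'_i)\,(m'_i-m'_{i-1})^2 \le 0$, after which LaSalle and the algebraic-connectivity bound $\lambda_2(C_N)=4\sin^2(\pi/N)$ give equalization with an explicit exponential rate. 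The paper instead couples the two spaces: it keeps $\dot m'_i$ in terms of $\dot\psi'_i$, then replaces $\dot\psi'_i$ by the mapped-space dynamics $\dot\psi_i = k_\psi(m_i - m_{i-1})$ via Proposition~\ref{Proposition3.1} (i.e. $\psi_i=\psi'_i$, $\dot\psi_i=\dot\psi'_i$), so its Lyapunov derivative is the cross term $-K_\psi\sum_{i=1}^N (m'_i-m'_{i-1})(m_i-m_{i-1})\omega(\psi'_i)$, whose negativity rests on the sign relation \eqref{workload relationship}. Your version buys self-containedness (no appeal to the mapped space, Proposition~\ref{Proposition3.1}, or \eqref{workload relationship}), a cleaner treatment of the equilibrium set (the paper asserts the strict inequality $\dot V<0$, which cannot hold at the equilibrium itself, whereas your invariance argument handles $\dot V=0$ correctly), and a quantitative rate matching the exponential-convergence claim in the abstract; the paper's version buys consistency with Algorithm~\ref{Algorithm2}, where both dynamics \eqref{partitioning dynamic} and \eqref{Partitioning dynamic} run in tandem and the original-space bars are tied to the mapped-space ones. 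Note finally that both arguments share the same unproved ingredient, the uniform positivity $\omega'(\psi'_i)\ge\underline a>0$ (not automatic from the stated formula, since the factor $\sin\varphi$ changes sign on $[0,2\pi)$), as well as the piecewise/modulo bookkeeping; you flag both issues explicitly, while the paper assumes positivity without comment.
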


\begin{proof}
Construct the Lyapunov function candidate as follows 
\begin{equation}
V\left( \psi'  \right) = \frac{1}{2}\sum\limits_{i = 1}^N {{{\left( {{m'_i} - \bar m'} \right)}^2}}, 
\label{LyapunovFunction}
\end{equation}
where \(\bar m'=\frac{1}{N}\int_0^{2\pi } {\omega \left( {\theta '} \right)} d\theta '\) and \({V(\psi^{'}) \ge 0}\). From Equation\eqref{Workload}, we understand that \({\dot m'_i} = {\dot \psi '_{i + 1}}\omega \left( {{\psi '_{i + 1}}} \right) - {\dot \psi '_i}\omega \left( {{\psi '_i}} \right)\). Given that \({\psi '_{N + 1}} = {\psi '_1}\), we obtain \(\sum\nolimits_{i = 1}^N {{{\dot m'}_i}}  = 0\). According to Proposition~\ref{Proposition3.1} and the diffeomorphism \(\tau\) is conformal, we obtain \({\psi _i} = {\psi' _i}\). 
Note that the manifold (\(\Omega\)-\(O\))\(\cup\)\(\partial O\) is orientable, which implies that these charts are oriented-compatible, and we can choose an orientation of (\(\Omega\)-\(O\))\(\cup\)\(\partial O\). According to Theorem~\ref{Theorem 3.1}, the relative position of partition bars in (\(\Omega\)-\(O\))\(\cup\)\(\partial O\) and in \(\Xi\) are exactly corresponding, and we deduce that \({\dot \psi '_{N + 1}} =  {{\dot \psi }_{N + 1}}\). Adhering to the cyclic ordering rule, we reinterpret the last term as the first, ensuring that it satisfies \(\sum\nolimits_{i = 1}^N {{m'_i}{{\dot \psi }_{i + 1}}\omega \left( {{\psi _{i + 1}}} \right)}  = \sum\nolimits_{i = 1}^N {{m'_{i - 1}}{{\dot \psi }_i}\omega \left( {{\psi _i}} \right){\rm{ }}} \). Since the collision avoidance of split bars in \(\Xi\) is guaranteed with partition dynamics~\eqref{partitioning dynamic}~\cite{zhai23} and the conformal map \(\tau\) is a diffeomorphism, the collision avoidance of split bars in (\(\Omega\)-\(O\))\(\cup\)\(\partial O\) is guaranteed. Moreover, the time derivative of the Lyapunov function along the system trajectory is given by
\begin{align*}
{dV\left( \psi' \right)} \mathord{\left/{\vphantom {{dV\left( \psi  \right)} {dt}}} \right.\kern-\nulldelimiterspace} {dt} 
&=\sum\limits_{i = 1}^N {\left( {{m'_i} - \bar m'} \right){{\dot m'}_i}} \\
 &= \sum\limits_{i = 1}^N {{m'_i}{{\dot m'}_i} - \bar m'\sum\limits_{i = 1}^N {{{\dot m'}_i}} } \\
 &= \sum\limits_{i = 1}^N {{m'_i}{{\dot m'}_i}} \\
 &= \sum\limits_{i = 1}^N {{m'_i}\left( {{{\dot \psi '}_{i + 1}}\omega \left( {{\psi '_{i + 1}}} \right) - {\dot \psi '_i}\omega \left( {{\psi '_i}} \right)} \right)} \\
 &= \sum\limits_{i = 1}^N {{m'_i}\left( {{{\dot \psi }_{i + 1}}\omega \left( {{\psi' _{i + 1}}} \right) - {{\dot \psi }_i}\omega \left( {{\psi' _i}} \right)} \right)} \\
 &=  - \sum\limits_{i = 1}^N {\left( {{m'_i} - {m'_{i - 1}}} \right){{\dot \psi }_i}\omega \left( {{\psi' _i}} \right)} \\
 &=  - {K_\psi }\sum\limits_{i = 1}^N {\left( {{m'_i} - {m'_{i - 1}}} \right)\left( {{m_i} - {m_{i - 1}}} \right)\omega \left( {{\psi' _i}} \right)} 
\end{align*}
In light of \eqref{workload relationship}, we obtain \({{\left( {{{m'}_i} - {{m'}_{i - 1}}} \right)\left( {{m_i} - {m_{i - 1}}} \right)}>0}\). Since \({{K_\psi }>0}\), \({{\omega \left( {{\psi' _i}} \right)}>0}\), it implies that \({{dV\left( \psi  \right)} \mathord{\left/{\vphantom {{dV\left( \psi  \right)} {dt}}} \right.\kern-\nulldelimiterspace} {dt}}<0\). Therefore, the system is asymptotically stable, the state of the system will approach the equilibrium point as time goes to infinity, which implies that \({m'_1} = {m'_2} =  \ldots  = {m'_N}\).
\end{proof}
\begin{remark}
Analyzing the convergence rate of the system helps to find the supremum of \(V\left( {\psi '} \right)\). In \(\Xi\), the Lyapunov function is \(V\left( \psi  \right) = \frac{1}{2} {\sum\nolimits_{i = 1}^N {{{\left( {{m_i} - \bar m} \right)}^2}} }\), equitable workload partitions can be achieved in \(\Xi\) with an exponential convergence rate~\cite{zhai23}. The convergence rate is also consistent in (\(\Omega\)-\(O\))\(\cup\)\(\partial O\) because the two Lyapunov functions are of the same form.
\end{remark}

After analyzing the system stability, our attention is turned to the convergence of multi-agent partition dynamics. First, we analyze the existence of the optimization index. Secondly, we prove the existence of optimal solutions to 
Problem~\({\nabla _{{p_i}}}J'=0\). Finally, we prove that there always exist optimal solutions to Problem~\ref{minJ'} . 

\begin{theorem}
There always exist optimal and reachable solutions to Problem~\eqref{minJ'}.
\label{Theorem27}
\end{theorem}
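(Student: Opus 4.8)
The plan is to establish existence and reachability through a compactness-plus-continuity (Weierstrass) argument, following the three-stage decomposition announced just before the statement: first confirm that the functional \(J'\) of \eqref{equation22} is a well-defined finite optimization index, then exhibit a stationary point of the position sub-problem \(\nabla_{p'_i} J' = 0\), and finally assemble these facts into a global minimizer over the joint variable \((\psi',\mathbf{p'})\). Throughout I would work directly in \((\Omega-O)\cup\partial O\) using the metric \(\eta'\), since the problem \eqref{minJ'} is posed there.

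First I would pin down the feasible domain and its compactness. The partition phases \(\psi'_i\) each range over \([0,2\pi)\) subject to the cyclic ordering \(\psi'_1 \le \psi'_2 \le \cdots \le \psi'_N\), so after passing to the closure (which only adds degenerate partitions with empty subregions, contributing zero cost) the admissible set of \(\psi'\) is compact, while each position \(p'_i\) ranges over the compact region \((\Omega-O)\cup\partial O\); hence the joint feasible set is compact. I would then show \(J'\) is finite and continuous on it. The integrand \(f_i(d_l(q',p'_i))\rho(q')\) is a composition of continuous maps: \(\rho\) is continuous by hypothesis, \(f_i(x)=x^2\) is continuous, and the length metric \(d_l(q',p'_i)\) is continuous in both arguments because, by Theorem~\ref{geodesicallyConvex} and Corollary~\ref{Corollary3.3.1}, the region is geodesically convex with cut locus of Lebesgue measure zero, so the minimal path and therefore its length depend continuously on the endpoints. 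Continuity in \(\psi'\) additionally requires controlling the moving integration domain \(E'_i(\psi')\); since the partition bars are continuous curves by Theorem~\ref{Theorem 3.1} that sweep the region monotonically as the phases vary, the indicator of \(E'_i(\psi')\) and the induced area element vary continuously, so the parametrized integral is continuous. This settles the existence of the optimization index.

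Next I would treat the position sub-problem with \(\psi'\) frozen. On each fixed subregion the map \(p'_i \mapsto \int_{E'_i(\psi')} d_l^2(q',p'_i)\,\rho(q')\,dq'\) is continuous on the compact, geodesically convex set \(\overline{E'_i(\psi')}\), so Weierstrass yields a minimizer; this minimizer is the generalized centroid and satisfies \(\nabla_{p'_i} J' = 0\), whose gradient is exactly the length-metric derivative computed in \eqref{equation19}–\eqref{equation20}. Geodesic convexity is precisely what makes this stationary condition meaningful, since it supplies a unique minimal geodesic, and hence a well-defined tangent direction from \(q'\) to \(p'_i\), almost everywhere. Finally, with \(J'\) continuous on the full compact joint domain, the extreme-value theorem delivers a global minimizer \((\psi'^{*},\mathbf{p'}^{*})\); because this domain is closed and contained in the workspace \((\Omega-O)\cup\partial O\) (with \(\partial O\) included), the infimum is attained at a configuration the agents can actually occupy, which is the reachability claim.

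The step I expect to be the main obstacle is the joint continuity of \(J'\) in \((\psi',\mathbf{p'})\), and in particular the regularity of \(d_l\) near the cut locus: although the cut locus is Lebesgue-null so that \(d_l\) is differentiable almost everywhere, establishing honest continuity of \(d_l(q',p'_i)\) together with continuity of the moving-boundary integral over \(E'_i(\psi')\) is the delicate part, and is exactly where the geodesic-convexity conclusions of Theorem~\ref{geodesicallyConvex} and Corollary~\ref{Corollary3.3.1} do the real work.
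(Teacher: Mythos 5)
Your compactness-plus-continuity argument does deliver the pure existence of a global minimizer, and in that respect it overlaps with the final step of the paper's own proof: the paper also concludes with the extreme value theorem on a compact set, except that it first uses the equitable-workload function \(\zeta\) (with \(\psi_i=\zeta(\psi_{i-1})\)) to collapse all partition variables onto the single phase \(\psi'_1\), whereas you argue compactness of the full joint feasible set directly. That substitution is legitimate. A minor quibble: continuity of \(d_l\) does not come from geodesic convexity --- a length metric is automatically continuous (indeed Lipschitz) in its arguments; Theorem~\ref{geodesicallyConvex} and Corollary~\ref{Corollary3.3.1} are what justify the differentiability statements \eqref{equation19}--\eqref{equation20}, not continuity.

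The genuine gap is in stationarity and in the word \emph{reachable}. You assert that the Weierstrass minimizer of the position sub-problem ``is the generalized centroid and satisfies \(\nabla_{p'_i}J'=0\),'' but a minimizer over the compact closure \(\overline{E'_i(\psi')}\) may a priori lie on \(\partial\Omega\) or \(\partial O\), where the first-order condition need not hold, and nothing in your argument excludes this. This is precisely what the first half of the paper's proof supplies: it invokes the property (from the length-metric construction of~\cite{SBhattacharya14}) that \(\nabla_{p_i}J'\) admits no local minimum on \(\partial\Omega\) or \(\partial O\) when the length metric is used, transports the problem by a \(C^1\) chart diffeomorphism \(\xi_i\) onto \(I^2=[0,1]\times[0,1]\), and applies a generalization of Bolzano's theorem --- zeros of the transported gradient field exist in \(I^2\) provided \(\left\langle \nabla_{\tilde p_i}H,\tilde p_i\right\rangle>0\) on \(\partial I^2\) --- to obtain \emph{interior} solutions of \(\nabla_{p_i}J'=0\). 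This matters because the closed-loop dynamics \eqref{Control dynamic} with input \eqref{Control Input} form a gradient flow whose equilibria are exactly the zeros of \(\nabla_{p'_i}J'\); if the only minimizers sat on the boundary with nonvanishing gradient, the optimum would exist but would not be an equilibrium the agents can settle at, i.e.\ it would be optimal yet not reachable. Your reading of reachability as ``the minimizer lies inside the workspace'' is vacuous (any minimizer over the feasible set does), so the reachability half of the theorem is not actually established by your argument.
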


\begin{proof}
Since the cut locus in (\(\Omega\)-\(O\))\(\cup\)\(\partial O\) is Lebesgue measure zero, \(J'\) is arbitrarily close to be a \({C^1}\) function~\cite{Cut loci85}. The existence of \({\nabla _{{p_i}}}J'\) is therefore guaranteed.  It is obviously that \({\nabla _{{p_i}}}J'\) resides in \(T_{{P_i}}^*((\Omega\)-\(O\))\(\cup\)\(\partial O )\) and \({\nabla _{{p_i}}}J' \cdot {\eta ^{il}}\left( {{p_i}} \right)\) resides in \(T_{{P_i}}((\Omega\)-\(O\))\(\cup\)\(\partial O )\). \({\nabla _{{p_i}}}J'\) can be viewed as a vector field in (\(\Omega\)-\(O\))\(\cup\)\(\partial O\), which is dependent on the Riemannian metric \({\eta '}\). Note that there exists an atlas such that \(((\Omega  - O) \cup \partial O) \subset \{ \left( {{V_i},{\xi _i}} \right)\} \), where every chart in this atlas is local Euclidean. Thus, one can find a \({C^1}\) diffeomorphism \({\xi _i}\) that can map the chart \({V_i}\) to \({I^2} = [0,1] \times [0,1]\). We denote \({{\tilde p}_i} = {\xi _i}\left( {{p_i}} \right)\) and \({\nabla _{{{\tilde p}_i}}}H = {\xi _i}\left( {{\nabla _{{p_i}}}J'} \right)\). Moreover, \({\xi _i}\) satisfies the \({\xi _i}\)-relative condition, which means that \({\xi _i}\) preserves the zero point of \({\nabla _{{p_i}}}J'\) and \({\nabla _{{{\tilde p}_i}}}H\)~\cite{Riemannian geometry}. There is no  local minimum of \({\nabla _{{p_i}}}J'\) on \(\partial \Omega \) or \(\partial O \) if the length metric \({d_l}\left( { \cdot, \cdot } \right)\) is used to compute \({\nabla _{{p_i}}}J'\) as per the work in~\cite{SBhattacharya14}. Due to the bijective property of \({\xi _i}\), there does not exist a local minimum of \({\nabla _{{{\tilde p}_i}}}H\) on \(\partial I \) if the length metric \({d_l}\left( { \cdot, \cdot } \right)\) is used to compute \({\nabla _{{{\tilde p}_i}}}H\). According to the generalization of Bolzano’s Theorem~\cite{bolzanos02}, solutions to \({\nabla _{{{\tilde p}_i}}}H=0\) exist on \({I^2}\) if \(\left\langle {{\nabla _{{{\tilde p}_i}}}H,{{\tilde p}_i}} \right\rangle  > 0\) for any \({{\tilde p}_i} \in \partial {I^2}\). Given that \({\xi _i}\) satisfies the \({\xi _i}\)-relative condition, solutions to \({\nabla _{{p_i}}}J'=0\) exist.
 Let \({f_i}\left( x \right) = {x^2}\) in \(\Xi\). According to~\eqref{equation19}, since \(p_i\) and \({C_p}_i\) are Lebesgue-measure-zero, we obtain
\begin{align}
\begin{split}
{\nabla _{{p_i}}}J &= 2\int_{{E_i}\left( \psi  \right)} {{d_l}\left( {q,{p_i}} \right)\frac{\partial }{{\partial {p_i}}}} d_l^{ C}(q,{p_i})\rho \left( q \right)dq \\ 
&= 2\int_{{E_i}\left( \psi  \right) - \left( {{p_i} \cup {C_{{p_i}}}} \right)} {{d_l}\left( {q,{p_i}} \right)\frac{{{\eta _{ij}}\left( {{p_i}} \right)z_{q{p_i}}^j}}{{\sqrt {{\eta _{mn}}\left( {{p_i}} \right)z_{q{p_i}}^mz_{q{p_i}}^n} }}} \rho \left( q \right)dq,
\label{deltaJ}
\end{split}
\end{align}
where \(q\) is the target position, and \({z_{qp}} = {\left[ {z_{qp}^1,z_{qp}^2, \cdots ,z_{qp}^N} \right]^T}\) is the coefficient vector of the tangent vector at \(p_i\) to the shortest geodesic connecting \(p_i\) and \(q\). 
Through the application of the diffeomorphism \(\tau \), the point \(\ {p_i^*} \) is identified as the optimal centroid within \(\Xi\). Due to \({\tau ^{ - 1}}\left( {p_i^*} \right) = {\left( {p_i^*} \right)^\prime }\), we derive the gradient of performance index in (\(\Omega\)-\(O\))\(\cup\)\(\partial O\) as follows
\begin{equation} 
{\nabla _{{p'_i}}}J' = 2\int_{{E'_i}\left( \psi  \right) - \left( {{p'_i} \cup {C_{{p'_i}}}} \right)} {{d_l}\left( {{{\left( {p_i^*} \right)}^\prime },{p'_i}} \right)\frac{{{{\eta '}_{ij}}\left( {{p'_i}} \right)z_{{{\left( {p_i^*} \right)}^\prime }{p'_i}}^j}}{{\sqrt {{{\eta '}_{mn}}\left( {{p'_i}} \right)z_{{{\left( {p_i^*} \right)}^\prime }{p'_i}}^mz_{{{\left( {p_i^*} \right)}^\prime }{p'_i}}^n} }}} \rho \left( {{{\left( {p_i^*} \right)}^\prime }} \right)d{\left( {p_i^*} \right)^\prime },
\label{PerformanceIndex}
\end{equation}
where \({z_{\left( {p_i^ * } \right)'{p'_i}}} = {\left[ {z_{\left( {p_i^ * } \right)'{p'_i}}^1,z_{\left( {p_i^ * } \right)'{p'_i}}^2, \cdots ,z_{\left( {p_i^ * } \right)'{p'_i}}^N} \right]^T}\) is the coefficient vector of the tangent vector at \(p'_i\) to the shortest geodesic connecting \(p'_i\) and    \(\left( {p_i^ * } \right)'\). Equations \eqref{deltaJ} and \eqref{PerformanceIndex} reveal the relationship between the derivate of the distance function and the performance index.
According to~\cite{zhai23}, there exists a differentiable function \(\zeta :[0,2\pi ] \to [0,2\pi ]\) such that \[\int_{\psi '}^{\zeta \left( {\psi '} \right)} {\omega \left( {\theta '} \right)} d\theta ' = \bar m', \forall\psi'\in [0,2\pi ]\]
with \(\omega \left( {\theta '} \right) = 
\int_{{\varphi '_{in}}}^{{\varphi'_{out}}} {{\tau ^{ - 1}}\left( {\rho (\theta ,\varphi )} \right)} {\left( {R' + r'\cos \varphi '} \right)^2}\sin \varphi 'd\varphi '\). This function satisfies \({\psi _i} = \zeta \left( {{\psi _{i - 1}}} \right)\) when equitable workload partition is completed. Given that (\(\Omega\)-\(O\))\(\cup\)\(\partial O\) is orientable, these charts are oriented-compatible. Consequently, by examining a single chart and understanding its properties, we can readily extend our insights to the entire region under consideration. According to the analysis, we aim to demonstrate the practicality of consolidating the analysis of the performance index into a single chart (For simplicity, we select the first chart), thereby reducing the complexity of all charts to a single chart. Based on \eqref{equation22}, the performance index can be rewritten as follows:
\begin{align*}
{J'\left( {\psi ',{\bf{p'}}} \right)}&= {\sum\limits_{i = 1}^N {\int_{{E'_i}\left( {\psi '} \right)} {{f_i}\left( {{d_l}\left( {q',{{p'}_i}} \right)} \right)\rho \left( {q'} \right)dq'} }}\\ & = {\sum\limits_{i = 1}^N {\int_{{\psi '_i}}^{\zeta \left( {{\psi '_i}} \right)} {\int_{{\varphi '_{in}}}^{{\varphi '_{out}}} {{f_i}\left( {{d_l}\left( {q',{p'_i}} \right)} \right)\rho \left( {\theta ',\varphi '} \right){{\left( {R' + r'\cos \varphi '} \right)}^2}\sin \varphi 'd\varphi 'd\theta '} } }}\\  &= {\sum\limits_{i = 1}^N {\int_{{\zeta ^{i - 1}}\left( {{\psi '_1}} \right)}^{{\zeta ^i}\left( {{\psi '_1}} \right)} {\int_{{\varphi '_{in}}}^{{\varphi '_{out}}} {{f_i}\left( {{d_l}\left( {q',{p'_i}} \right)} \right)\rho \left( {\theta ',\varphi '} \right){{\left( {R' + r'\cos \varphi '} \right)}^2}\sin \varphi 'd\varphi 'd\theta '} } }}, 
\end{align*}
where \({\zeta ^i}\) represents the composition of \(\zeta\) with itself \(i\) times, \(\forall i \in {I_N}\), and \({\psi' _1} \in [{\left( {{\psi' _1}} \right)^*},{\left( {{\psi' _2}} \right)^*}]\), with \({\left( {{\psi '_1}} \right)^*}\) denoting the first partition bar in  (\(\Omega\)-\(O\))\(\cup\)\(\partial O\). The region \({E'_1} \cup {\left( {{\psi' _1}} \right)^*} \cup {\left( {{\psi' _2}} \right)^*}\) is compact, and \({J'\left( {\psi'_1,{\bf{p'}}} \right)}\) is a continuous function. By the extreme value theorem, a continuous function on a compact set must attain an extreme value. Since \(\tau\) maps a compact set to a compact set, the existence of optimal solutions in \(\Xi\) is guaranteed, thereby ensuring the existence of optimal solutions to Problems~\eqref{minJ'}. 
\end{proof}
Considering the information transfer time induced by the mapping process. According to Theorem~\ref{Theorem27}, there always exist optimal solutions to Problems~\eqref{minJ'}. Then, we need to illustrate the relationship of optimal performance between \(\Xi\) and (\(\Omega\)-\(O\))\(\cup\)\(\partial O\).
 \begin{theorem}
The performance index \(J'\) attains the minimum in the original space (\(\Omega\)-\(O\))\(\cup\)\(\partial O\) when \(J\) reaches the minimum in the mapped space \(\Xi\).
\label{TheoremJ}
\end{theorem}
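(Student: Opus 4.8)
The plan is to exploit the fact that \(\tau\) is a bijective orientation-preserving diffeomorphism and that the metric \(\eta'\) on \((\Omega-O)\cup\partial O\) was constructed precisely as the pull-back \(\tau^{*}\eta\) (the components \(g'_{ij}\) above). With this choice \(\tau\) becomes an isometry between \(\bigl((\Omega-O)\cup\partial O,\eta'\bigr)\) and \((\Xi,\eta)\), so the two performance indices are not merely related but term-by-term identical after the substitution \(q=\tau(q')\). The theorem then reduces to the elementary statement that minimizing a functional is unaffected by a bijective change of variables, and existence of the minimizer is already supplied by Theorem~\ref{Theorem27}.

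First I would record the three invariances furnished by the isometry \(\tau\). (i) The length metric is preserved: since \(\eta'=\tau^{*}\eta\), every rectifiable path \(\gamma'\) has the same length as its image \(\tau\circ\gamma'\), so the geodesic distances defined through \eqref{Length metric} satisfy \(d_l(q',p'_i)=d_l(\tau(q'),\tau(p'_i))\). (ii) The density is transported correctly, because \(\rho'=\rho\circ\tau\) by \eqref{RhoOrigin}, hence \(\rho'(q')=\rho(\tau(q'))\). (iii) The Riemannian volume element is preserved, so that \(dq=dq'\) under \(q=\tau(q')\); this is where the orientation-preserving property \(\det(J_\tau)>0\) obtained from \eqref{Jacobian of f} enters.

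Second, I would align the domains of integration. By Proposition~\ref{Proposition3.1} the partition angles satisfy \(\psi_i=\psi'_i\), and since \(\tau\) preserves the adjacency of Delaunay triangles and maps boundary to boundary (Theorem~\ref{Theorem 3.1}), the subregions correspond exactly, \(E'_i(\psi')=\tau^{-1}(E_i(\psi))\). Combining these with the three invariances and changing variables \(q=\tau(q')\) in each summand of \(J\), with \(f_i(x)=x^2\), yields the identity \(J(\psi,\mathbf{p})=J'(\psi',\mathbf{p'})\) whenever the configurations are related by \(p_i=\tau(p'_i)\) and \(\psi=\psi'\). Thus the assignment \((\psi',\mathbf{p'})\mapsto(\psi',(\tau(p'_1),\dots,\tau(p'_N)))\) is a bijection between the admissible configurations of the two problems that leaves the objective value unchanged, whence the optimal values coincide and \((\psi^{*},\mathbf{p}^{*})\) minimizes \(J\) over \(\Xi\) if and only if \((\psi^{*},\tau^{-1}(\mathbf{p}^{*}))\) minimizes \(J'\) over \((\Omega-O)\cup\partial O\).

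The main obstacle is making invariance (iii) rigorous, since \(\tau\) is only quasi-conformal and therefore scales the volume element by the Jacobian factor of \eqref{Jacobian of f} rather than preserving it exactly. I would control this discrepancy through the \(L_{*}\)-optimization \eqref{L^*}, which drives \(\|\mu\|\) arbitrarily small so that the conformal distortion, and hence the gap between \(dq\) and \(dq'\), is negligible (as already argued, ``the density change is far less than the density itself''); this is the mechanism behind the arbitrarily small tolerance in the convergence claim. The cut-locus non-uniqueness is harmless, as \(C_p\) is Lebesgue-measure-zero by Theorem~\ref{geodesicallyConvex} and does not affect the integrals.
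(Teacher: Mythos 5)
Your proposal is correct in substance but follows a genuinely different route from the paper's. The paper never performs the change of variables you carry out: it simply asserts the composition identity \(J' = J \circ \tau\) at a minimizer, then differentiates both sides and uses invertibility of the Jacobian of \(\tau\) to conclude that critical points of \(J'\) correspond to critical points of \(J\) and vice versa. Your argument instead establishes the value identity \(J(\psi,\mathbf{p}) = J'(\psi',\mathbf{p'})\) directly --- via preservation of the length metric under the pull-back metric \(\eta'=\tau^{*}\eta\), the density transport \eqref{RhoOrigin}, and the correspondence of subregions from Theorem~\ref{Theorem 3.1} and Proposition~\ref{Proposition3.1} --- and then invokes the elementary fact that a value-preserving bijection of configuration spaces carries minimizers to minimizers. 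This buys you two things the paper's argument lacks: first, you transfer \emph{global minima}, whereas the chain-rule argument only shows that first-order critical points correspond (the paper's concluding ``thus it is a minimum point of \(J\)'' silently upgrades a critical point to a minimum); second, you need no differentiability of \(J\) or \(J'\) at the minimizer, which is delicate in any case since \(J'\) is only arbitrarily close to a \(C^1\) function near cut loci. What the paper's approach buys in exchange is that it never has to justify the integral identity at all --- which is precisely where all of your work concentrates.

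One correction to your own assessment: the ``main obstacle'' you flag in invariance (iii) is not actually there, provided the integrals are taken against the Riemannian volume elements. Since \(\eta' = \tau^{*}\eta\), the map \(\tau\) is an isometry of \(\left((\Omega-O)\cup\partial O,\,\eta'\right)\) onto \((\Xi,\eta)\), and an isometry preserves the Riemannian volume form exactly; the quasi-conformal Jacobian factor in \eqref{Jacobian of f} measures distortion relative to the \emph{Euclidean} area element, not relative to the volume element of \(\eta'\). So your change of variables is exact and the small-\(\|\mu\|\) patch via \eqref{L^*} is unnecessary. If instead \(dq'\) in \eqref{equation22} is read as the Euclidean element, then the identity is indeed only approximate --- but then the paper's asserted identity \(J'=J\circ\tau\) suffers from exactly the same defect, so this is an ambiguity in the paper's formulation rather than a gap peculiar to your proof.
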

\begin{proof}
Consider the performance index \[{J'\left( {\psi ',{\bf{p'}}} \right)} = {\sum\limits_{i = 1}^N {\int_{{E'_i}\left( {\psi '} \right)} {{f_i}\left( {{d_l}\left( {q',{p'_i}} \right)} \right)\rho \left( {q'} \right)dq'} }}.\] 
Particularly, when \( {f_i}\left( {{d_l}\left( {q',{{p'}_i}} \right)} \right) \ge 0\), and \(\rho \left( {q'} \right) \ge 0\), the variation trend \(J'\) and \(J\) are identical. \(\tau\) acts upon the partition bar \(\psi'\) and the agent position \(\bf{p'}\), transforming them into \(\psi\) and  \(\bf{p}\). In the mapped space \(\Xi\), \(J\) is continuous with respect to the partition bar \(\psi\) and the agent position \(\bf{p}\). Since \(\tau\) is a diffeomorphism, it constitutes a continuous mapping, \(\tau\) and \(\tau^{-1}\) are differentiable. Furthermore, the Jacobian of \(\tau\) and \(\tau^{-1}\) is invertible. Consequently, \(J'\) is continuous with respect to the partition bar \(\psi'\) and the agent position \(\bf{p}'\) in the compact region \((\Omega  - O) \cup \partial O\). Suppose \({J'\left( {\psi_\alpha ',{\bf{p}_\alpha'}} \right)}\) and \({J\left( {\psi_\alpha ,{\bf{p}_\alpha}} \right)}\) are the minimum of \(J'\) and \(J\), respectively. Due to the point-to-point correspondence of diffeomorphism \(\tau\), we obtain \(\tau (\psi_\alpha ') = \psi_\alpha \), \(\tau (\bf{p}_\alpha') = \bf{p}_\alpha\). Subsequently, we obtain \(J'\left( {{\psi _{\alpha '}},{{\bf{p}}_{{\alpha '}}}} \right) = (J \circ \tau )\left( {{\psi _{\alpha '}},{{\bf{p}}_{{\alpha '}}}} \right)\). Then, we take the derivative of both sides of the equation simultaneously, which leads to
\(\dot J'\left( {{\psi _{\alpha '}},{{\bf{p}}_{{\alpha '}}}} \right) = \dot J\left( {\tau \left( {{\psi _{\alpha '}},{{\bf{p}}_{{\alpha '}}}} \right)} \right) \cdot \dot \tau \left( {{\psi _{\alpha '}},{{\bf{p}}_{{\alpha '}}}} \right)\). 
Since \(\dot J'\left( {{\psi _{\alpha '}},{{\bf{p}}_{{\alpha '}}}} \right) = 0\) and \(\dot \tau \left( {{\psi _{\alpha '}},{{\bf{p}}_{{\alpha '}}}} \right)\) is invertible, we have \(\dot J\left( {\tau \left( {{\psi _{\alpha '}},{{\bf{p}}_{{\alpha '}}}} \right)} \right) = 0\). Thus, \({\tau \left( {{\psi _{\alpha '}},{{\bf{p}}_{{\alpha '}}}} \right)}\) is a minimum point of \(J\). On the other hand, let \(\left( {{\psi _\alpha },{{\bf{p}}_{\bf{\alpha }}}} \right)\) is a minimum point of \(J\), which satisfies \(\left( {{\psi _{\alpha '}},{{\bf{p}}_{{\alpha '}}}} \right) = {\tau ^{ - 1}}\left( {{\psi _\alpha },{{\bf{p}}_{\alpha }}} \right)\). For the equation \(J\left( {{\psi _\alpha },{{\bf{p}}_{\bf{\alpha }}}} \right) = (J' \circ {\tau ^{ - 1}})\left( {{\psi _\alpha },{{\bf{p}}_{\bf{\alpha }}}} \right)\), we take the derivative in the point \({{\bf{p}}_{\alpha }}\) and obtain 
\(\dot J\left( {{\psi _\alpha },{{\bf{p}}_{\bf{\alpha }}}} \right) = \dot J'\left( {{\tau ^{ - 1}}\left( {{\psi _\alpha },{{\bf{p}}_{\bf{\alpha }}}} \right)} \right) \cdot {{\dot \tau }^{ - 1}}\left( {{\psi _\alpha },{{\bf{p}}_{\bf{\alpha }}}} \right).\)
Since \({{\dot \tau }^{ - 1}}\left( {{\psi _\alpha },{{\bf{p}}_{\bf{\alpha }}}} \right)\) is invertible with \(\dot J\left( {{\psi _\alpha },{{\bf{p}}_{\bf{\alpha }}}} \right) = 0\), we have \(\dot J'\left( {{\tau ^{ - 1}}\left( {{\psi _\alpha },{{\bf{p}}_{\bf{\alpha }}}} \right)} \right) = \dot J'\left( {{\psi _{\alpha '}},{{\bf{p}}_{{\bf{\alpha '}}}}} \right) = 0\). Thus, the performance index \(J'\), as given by~\eqref{minJ'}, attains the minimum in (\(\Omega\)-\(O\))\(\cup\)\(\partial O\) when \(J\) reaches the minimum in the mapped space \(\Xi\). 
\end{proof}
According to the gradient of performance index \(J\) analyzed in Theorem \ref{Theorem27}, \(\partial J/\partial {p_i}\) is located in \({T^*_{{p_i}}}\Xi \), and \({\eta ^{ij}}\) is employed to ensure that \(\left( {\partial J/\partial {p_i}} \right){\eta ^{il}}\) is located in \({T_{{p_i}}}\Xi \), where the \(i\)-th row , \(j\)-th column element of the inverse Riemannian metric matrix  is \({\eta ^{ij}}\). Thus, \(\left( {\partial J/\partial p} \right) \cdot {\eta ^{il}}\left( p \right) = -\alpha \left( {p - {p^*}} \right)\), where \(\alpha\) is positive constant. Following  the control input \(u =  - {k_p}\left( {p - p^*} \right)\), the control input in the mapped space \(\Xi\) is designed as 
\begin{align*}
{u_i} =  - {k_p}\frac{{\partial J}}{{\partial {p_i}}}{\eta ^{il}}\left( {{p_i}} \right) 
\end{align*}
According to \eqref{deltaJ} and \({\eta _{ij}}{\eta ^{il}} = {\delta ^l_j}\), we obtain
\begin{equation}
{u_i} =  - 2{k_p}\int_{{E_i}\left( \psi  \right) - \left( {{p_i} \cup {C_{{p_i}}}} \right)} {{d_l}\left( {q,{p_i}} \right)\frac{{z_{q{p_i}}^l}}{{\sqrt {{\eta _{mn}}\left( {{p_i}} \right)z_{q{p_i}}^mz_{q{p_i}}^n} }}} \rho \left( q \right)dq,  
\label{CONTROLinput}
\end{equation}
where \(k_p\) is a scalar positive gain. As previously defined, we obtain the control input in (\(\Omega\)-\(O\))\(\cup\)\(\partial O\) as follows
\begin{equation} 
{u'_i} =  - 2{k'_p}\int_{{E'_i}\left( \psi'  \right) - \left( {{p'_i} \cup {C_{{p'_i}}}} \right)} {{d_l}\left( {{{\left( {p_i^*} \right)}^\prime },{p'_i}} \right)\frac{{z_{{{\left( {p_i^*} \right)}^\prime }{p'_i}}^l}}{{\sqrt {{{\eta '}_{mn}}\left( {{p'_i}} \right)z_{{{\left( {p_i^*} \right)}^\prime }{p'_i}}^mz_{{{\left( {p_i^*} \right)}^\prime }{p'_i}}^n} }}} \rho \left( {{{\left( {p_i^*} \right)}^\prime }} \right)d{\left( {p_i^*}\right)^\prime},
\label{Control Input}
\end{equation}
where \(k'_p\) is a scalar positive gain, and \(\ ({p_i^*})' \) represents the optimal centroid in  (\(\Omega\)-\(O\))\(\cup\)\(\partial O\) and it satisfies \(\ d({p_i^*})'=|det(J_{\tau^{-1}})|dq \). We can remove the absolute value sign after choosing an orientation of (\(\Omega\)-\(O\))\(\cup\)\(\partial O\).

\begin{remark}
Equation~\eqref{Control Input} illustrates the relationship between the control input and length metric. Then we need to prove that the designed control input can achieve collision avoidance among agents. This is accomplished by defining the minimum path that connects the agent's position to the optimal centroid using a length metric, rather than the Euclidean metric.
\end{remark}

According to the illustration of the mapping design process in Fig.~\ref{png: MAPProcess1}, we present the distributed mapping algorithm. At first, we obtain the simplicial complex \(F^j\) of \(V^j\) by utilizing the Delaunay triangulation.  We establish a harmonic diffeomorphism \(H_j\) between \(F^j\) and a unit disk \(\mathbbm{D}\). Using a quasi-conformal mapping \(\sigma_i\) to transform the unit disk \(\mathbbm{D}\) to the rectangular region \([0, L_i] \times [0,1]\). We initialize \(L_*\) to \(L_1\). For the \(j\)-th agent, by utilizing the communication mechanism of the multi-agent system, it receives ${\mu _{j - 1}}$ and ${\mu _{j + 1}}$ from \((j-1)\)-th agent and \((j+1)\)-th agent, respectively. According to ${\mu _{j-1}} > {\mu_{j + 1}}$ and ${\mu_{j}} > {\mu_{j + 1}}$, we update the parameter \(L_*=L_{j+1}\). Meanwhile, we save the Beltrami coefficient \(\mu_{j+1}\) to set \(\mu\). Then, we select the minimum value in set \(\mu\), and use the \text{LBS}~\cite{rectangular13} to obtain the optimal rectangular conformal mapping $\vartheta^*$ which minimizes the angle distortion. Secondly, for the \(t-th\) agent, it updates the point cloud $v^t\left( {x,y} \right) = v^t\left( {\left( {{L_*}/{L_t}} \right)x,y} \right)$, the rectangular conformal inverse mapping \((\vartheta _i^*)^{-1}\) maps the region \([0, L_i] \times [0,1]\) to point cloud \(V^i_*\). According to the corresponding mechanism among agents and the iterative closest point (ICP) point cloud registration~\cite{Pointcloud}, we obtain the point cloud set \(V\) in (\(\Omega\)-\(O\))\(\cup\)\(\partial O\). Using Delaunay triangulation to point cloud set \(V\), we obtain the cell decomposition \(tri\) of (\(\Omega\)-\(O\))\(\cup\)\(\partial O\). Slicing the mesh \(tri\) along a path \(v_1v_2\) yields a simply-connected open region \(\tilde \Omega \). We construct a rectangular conformal mapping \(\vartheta_i^*\) between \(\tilde \Omega \) and the rectangular region \([0, L_*] \times [0,1]\). The rectangle is subsequently mapped to an annulus using an exponential map \(\iota  ={e^{2\pi \left( {z -L_*} \right)}}\). Finally, we identify the cut vertices \(v_1v_2\) and compose the quasi-conformal map \(\varpi \) to minimize angle distortion, which forms a conformal parameterization  \(\tau = \varpi \circ \iota \circ \vartheta ^*: (\Omega - O) \cup \partial O \to \Xi\). The details of this process are illustrated in Algorithm~\ref{Algorithm1}.

\begin{algorithm}[t!]
    \renewcommand{\algorithmicrequire}{\textbf{Initialize:}}
	\renewcommand{\algorithmicensure}{\textbf{Finalize:}}
	\caption{Distributed Mapping Construction Algorithm}
        \label{Algorithm1}
    \begin{algorithmic}[1] %
    \REQUIRE  \(N\), \(V^j\), \(H_j\), \(\sigma_j\), \(\varpi \), $\iota $; 
     \FOR{$j=1:N$}
        \STATE \(F^j\)=Delaunay(\(V^j\));
        \STATE \(v^j={\sigma _j} \circ {H_j}\left( {{V^j},{F^j}} \right)\)
        \STATE $L_*=L_1$, \(\mu\)=\(\mu_1\);
        \STATE Receive ${\mu _{j - 1}}$ and ${\mu _{j + 1}}$ from \((j-1)-th \) agent and \((j+1)-th \) agent, respectively;
        \IF{$({\mu _{j - 1}} > {\mu _{j + 1}}) and ({\mu _{j }} > {\mu _{j + 1}})$}
        \STATE Update $L_*=L_{j+1}$;
        \STATE Save \(\mu  = \{ {\mu,\mu _{j + 1}}\} \);
        \ENDIF
        \ENDFOR
        \STATE Select \({\mu _*} = \min \mu \);
        \STATE $\vartheta^*=LBS(\mu_*)$;
        \FOR{$t=1:N$}
        \STATE Update $v^t\left( {x,y} \right) = v^t\left( {\left( {{L_*}/{L_t}} \right)x,y} \right)$;
        \STATE \(v^t\) to \(t-th\) agent;
        \STATE $V^{t}=(\vartheta^*)^{-1}v^{t}$;
        \ENDFOR
        \STATE $V = ICP\left( {{V^1}, \cdots ,{V^N}} \right)$ 
        \STATE \(tri\)= \(\text{Delaunay}(V)\);
        \STATE \(\Xi\)=\(\varpi \circ \iota\circ\vartheta^* \left( {V,tri} \right)\);
    \end{algorithmic}
\end{algorithm}

In the mapped space \(\Xi\), it is noted that the final configuration of the multi-agent system is solely dependent on the partition angle. According to the proof of the existence of optimal solutions in Theorem~\ref{Theorem27} and the oriented-compatible of charts, an equivalence between \(J\left( {\psi,{\bf{p}}} \right)\) and \(J\left( {{\psi _1},{\bf{p}}} \right)\) can be established through the function \(\zeta \left( \psi  \right)\). This implies that adjusting a single phase angle is sufficient to attain the optimal performance index \(J\left( {\psi,{\bf{p}}} \right)\). From the above discussion, it is important to note that the optimal solution \({p^*_i}\) is contingent upon both \({{\psi _{i-1}}}\) and \({\psi _{i}}\), which in turn is dependent on \({\psi _{1}}\). To discretize the partition process and quantify the proximity of partitions, the following is defined: \({K^*}: = \arg {\inf _{k \in {Z^ + }}}\{ k > 0|2\pi /k \le {\varepsilon _p}\} \) with a tunable parameter \({\varepsilon _p}\), then the interval \(\left[ {0,2\pi } \right]\) can be evenly divided into $K^*$ sub-intervals~\cite{zhai23}. The index set \({\chi _k} = \{ {l_k} \in {Z^ + }|{l_k} = \arg \mathop {\inf }\limits_{i \in {I_N}} |{\psi _i} - 2\pi \left( {k - 1} \right)/{K^*}|\} \) is introduced to identify the element closest to \(2\pi \left( {k - 1} \right)/{K^*}\).  In the subsequent Algorithm~\ref{Algorithm2}, a non-convex iterative search scheme is developed, which enables MAS to approach optimal solutions to Problem~\ref{minJ'} with high accuracy. First of all, some key parameters are initialized before executing the algorithm. Secondly, we run the Distributed Mapping Algorithm~\ref{Algorithm1} to obtain the conformal mapping \(\tau\). Furthermore, when \(t < {T_\varepsilon }\) and the \(i\)-th agent is closest to the phase angle \(2\pi\left({k-1} \right)/{K^*}\), the position of the \(i\)-th agent \(p_i\) is updated according to \eqref{PIandUI} and \eqref{CONTROLinput} in the mapped space \(\Xi\), and the real agent is updated with  \eqref{PIandUI} and \eqref{Control Input} in the original space. If the \(i\)-th agent is not closest, the partition bar \({\psi_i}\) is updated according to \eqref{partitioning dynamic} and \(p_i\) is still updated with \eqref{PIandUI} and \eqref{CONTROLinput} in the mapped space \(\Xi\). Meanwhile, In (\(\Omega\)-\(O\))\(\cup\)\(\partial O\), the partition bar \({\psi' _i}\) is updated according to \eqref{Partitioning dynamic} and \(p_i\) is still updated with \eqref{PIandUI} and \eqref{Control Input}. This process continues for \( {T_\varepsilon }\). The \(i\)-th agent then saves the results in the set \(P^k_i\) and \(P'^k_i\) while creating a new set \(D^k_i\) and \(D'^k_i\). The \(i\)-th agent communicates with \((i-1)\)-th agent to obtain \(D^k_{i-1}\) and \(D'^k_{i-1}\) and checks if \(\left| {{D^k_i}} \right| \ne \left| {{D^k_i} \cup {D^k_{i - 1}}} \right|\) holds. Here \(\left| {{D^k_i}} \right|\) denotes the cardinality of the set. If the aforementioned condition is satisfied, the corresponding sets \(D^k_i\) and \(D'^k_i\) are updated with \({{D^k_i}}=  {{D^k_i} \cup {D^k_{i - 1}}} \) and \({{D'^k_i}}  ={{D'^k_i}\cup{D'^k_{i - 1}}}\) before being transmitted to the \((i+1)-th \) agent. Subsequently, the \(i\)-th agent continues to receive \({D^k_{i - 1}} \) and \({D'^k_{i - 1}} \) from the \((i-1)\)-th agent. Once the sets no longer update, the \(i\)-th agent computes \({J^k_{{i}}}=\sum\nolimits_{{D^k_i}} {{J^k_{{E_i}}}} \) and \({J'^k_{{i}}}=\sum\nolimits_{{D'^k_i}} {{J'^k_{{E'_i}}}} \). Finally, the agent selects \({k^*} = \arg {\min _{1 \le k \le {K^*}}}{J'^k_{{i}}}\) through collaboration with other agents, finalizing \({\psi' _i}\) and \(p'_i\) using \({D'^{k^*}_i}\).  To ensure the solution to Problem~\ref{minJ'} is approximated with sufficiently high accuracy, we need to analyze the proximity of the actual solution to the true optimal solution. Hence, we propose the Theorem~\ref{Theorem29} as follows.
\begin{algorithm}[t!]
    \renewcommand{\algorithmicrequire}{\textbf{Initialize:}}
	\renewcommand{\algorithmicensure}{\textbf{Finalize:}}
	\caption{Iterative Search Algorithm}
    \label{Algorithm2}
    \begin{algorithmic}[1] %
        \REQUIRE  $K^*$ $N$, \({T_\varepsilon }\), \(p_i\), \(\psi _i\), \({k_\psi }\), \({k_p}\),   \(V^j\); 
        \STATE Run Algorithm~\ref{Algorithm1}; 
        \FOR {$k=1:K^*$}
            \STATE Set $t=0$;
            \WHILE {t < \({T_\varepsilon }\)}
                \IF {$i = {\min _{{l_k} \in {\Lambda _k}}}{l_k}$}
                    \STATE Update \({p_i}\) with \eqref{PIandUI} and \eqref{CONTROLinput}
                    \STATE Update \({p'_i}\) with \({p'_i} = {\tau ^{ - 1}}\left( {{p_i}} \right)\), \eqref{PIandUI} and \eqref{Control Input}
                \ELSE
                    \STATE Update \({\psi_i}\) with \eqref{partitioning dynamic}
                    \STATE Update \({\psi' _i}\) with \({\psi '_i} = {\tau ^{ - 1}}\left( \psi  \right)\) and \eqref{Partitioning dynamic}
                    \STATE Update \({p_i}\) with \eqref{PIandUI} and \eqref{CONTROLinput}
                    \STATE Update \({p'_i}\) with \({p'_i} = {\tau ^{ - 1}}\left( {{p_i}} \right)\), \eqref{PIandUI} and \eqref{Control Input}
                \ENDIF
            \ENDWHILE
            \STATE Save \({P^k_i} = \left( {{\psi^k _i},{p^k_i},{J^k_{{E_i}}}} \right)\), \({P'^k_i} = \left( {{\psi'^k _i},{p'^k_i},{J'^k_{{E_i}}}} \right)\), \({D^k_i} = \{{J^k_{{E_i}}}\} \) and \({D'^k_i} = \{{J'^k_{{E_i}}}\} \);
            \STATE Receive \({D^k_i}\) and \({D'^k_i}\) from \((i-1)\)-th agent
            \WHILE {\(\left| {{D^k_i}} \right| \ne \left| {{D^k_i} \cup {D^k_{i - 1}}} \right|\)}
                \STATE Update \( {{D^k_i}}  =  {{D^k_i} \cup {D^k_{i - 1}}} \);
                \({{D'^k_i}}  = {{D'^k_i} \cup {D'^k_{i - 1}}} \);
                \STATE \({D^k_i}\) and \({D'^k_i}\) to \((i+1)-th\)agent
                \STATE  Receive \({D^k_{i-1}}\) and \({D'^k_{i-1}}\) from \((i-1)\)-th agent
            \ENDWHILE
            \STATE Compute \({J^k_{{i}}}=\sum\nolimits_{{D^k_i}} {{J^k_{{E_i}}}} \) and \({J'^k_{{i}}}=\sum\nolimits_{{D'^k_i}} {{J'^k_{{E'_i}}}} \)
        \ENDFOR
        \STATE Select \({k^*} = \arg {\min _{1 \le k \le {K^*}}}{J'^k_{{i}}}\)
        \STATE Finalize \({\psi' _i}\) and \(p'_i\) with \({D'^{k^*}_i}\);    
    \end{algorithmic}
\end{algorithm}

\begin{theorem}
Algorithm~\ref{Algorithm2} enables Multi-agent systems to approximate optimal solutions to Problem ~\eqref{minJ'} with arbitrary small tolerance.
\label{Theorem29}
\end{theorem}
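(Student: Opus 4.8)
The plan is to bound the gap between the cost returned by Algorithm~\ref{Algorithm2} and the true optimum of Problem~\eqref{minJ'}, and to show this gap is driven below any prescribed tolerance by tuning three discretization parameters: the angular grid resolution $\varepsilon_p$ (equivalently $K^{*}$), the convergence horizon $T_\varepsilon$, and the Delaunay mesh size $h$. I would first exploit the reduction, established just before the statement, that the terminal configuration is determined solely by the single phase angle $\psi_1$ through the map $\zeta$; hence minimizing $J$ over $(\psi,\mathbf{p})$ is equivalent to minimizing the reduced value function $\tilde J(\psi_1):=\min_{\mathbf{p}}J(\psi_1,\mathbf{p})$ over the compact interval $[0,2\pi]$. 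By Theorem~\ref{Theorem27} this minimum is attained, say at $\psi_1^{*}$ with optimal value $J^{*}=\tilde J(\psi_1^{*})$.

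Second, I would establish that $\tilde J$ is continuous, and therefore uniformly continuous on the compact interval $[0,2\pi]$. Continuity of $J$ in $(\psi_1,\mathbf{p})$ was obtained in the proof of Theorem~\ref{TheoremJ}; combined with compactness of the admissible position set (each geodesic centroid lies in a compact subregion, by the geodesic convexity of $E_i$ from Theorem~\ref{geodesicallyConvex} and of $\Xi$ from Corollary~\ref{Corollary3.3.1}) together with continuity of the centroid/minimizer correspondence, the value function $\tilde J$ inherits continuity. Let $\omega_{\tilde J}$ denote its modulus of continuity, so that $\omega_{\tilde J}(s)\to 0$ as $s\to 0$.

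Third, I would control the three error sources separately. For the grid error, since the spacing satisfies $2\pi/K^{*}\le\varepsilon_p$, there is a node $\psi_1^{(k_0)}$ with $|\psi_1^{(k_0)}-\psi_1^{*}|\le 2\pi/K^{*}\le\varepsilon_p$, whence $\tilde J(\psi_1^{(k_0)})\le J^{*}+\omega_{\tilde J}(\varepsilon_p)$. For the finite-horizon error, at each node the control law~\eqref{CONTROLinput} and~\eqref{Control Input} together with the partition dynamics~\eqref{partitioning dynamic} and~\eqref{Partitioning dynamic} drive the agents toward the centroids with the exponential rate established earlier; hence after time $T_\varepsilon$ the computed cost $\hat J^{(k_0)}$ differs from $\tilde J(\psi_1^{(k_0)})$ by at most a term $\varepsilon_T$ with $\varepsilon_T\to 0$ as $T_\varepsilon\to\infty$. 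For the mesh error, Remark~\ref{Estimation} gives that the Delaunay discretization displaces the attained optimum by at most $\sqrt{n}\,h^{p}$, which propagates through continuity of the integrand in~\eqref{equation22} to a cost error $\varepsilon_h\to 0$ as $h\to 0$.

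Finally, since Algorithm~\ref{Algorithm2} returns the node minimizing the computed cost, the selected value obeys $\hat J^{(k^{*})}\le\hat J^{(k_0)}\le J^{*}+\omega_{\tilde J}(\varepsilon_p)+\varepsilon_T+\varepsilon_h$. Given any tolerance $\varepsilon>0$, choosing $\varepsilon_p$ small, $T_\varepsilon$ large, and $h$ small makes the right-hand side below $J^{*}+\varepsilon$, so the returned solution in $\Xi$ is $\varepsilon$-optimal; Theorem~\ref{TheoremJ} then transfers this to the original space $(\Omega-O)\cup\partial O$, establishing arbitrary-small-tolerance approximation of Problem~\eqref{minJ'}. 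The main obstacle I anticipate is the rigorous continuity of the reduced value function $\tilde J(\psi_1)$: because the inner minimizer (the geodesic centroid) need not be unique or vary smoothly, a Berge-type maximum-theorem argument is required, and one must verify that the cut-locus exclusion in~\eqref{deltaJ} does not destroy continuity of the integral. A secondary difficulty is disentangling the three error contributions so that they can be made independently small without coupling between the horizon $T_\varepsilon$ and the grid resolution $\varepsilon_p$.
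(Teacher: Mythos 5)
Your proposal is correct in substance and shares the skeleton of the paper's argument---discretize the phase angle into $K^*$ nodes, run the closed-loop dynamics for a horizon $T_\varepsilon$ at each node, bound the total suboptimality by a sum of independently tunable terms, and carry the result back to $(\Omega-O)\cup\partial O$ via the diffeomorphism (Theorem~\ref{TheoremJ})---but the technical core is genuinely different. The paper never forms a reduced value function: it inserts an intermediate configuration $(\phi^k,\mathbf{x}^k)$ and splits $|I(\phi^*,\mathbf{x}^*)-I(\phi^\varepsilon,\mathbf{x}^\varepsilon)|$ into a \emph{position} term $|I(\phi^*,\mathbf{x}^*)-I(\phi^*,\mathbf{x}^k)|$, a \emph{partition} term $|I(\phi^*,\mathbf{x}^k)-I(\phi^k,\mathbf{x}^k)|$, and the horizon term. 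The position term is controlled by an explicit geodesic-deviation computation (Cauchy--Schwarz applied to $\eta(L,DL/dt)$, norm equivalence, and a claimed flatness of $E_i(\phi^*)$), yielding the quantitative bound $\varepsilon'_t\,\bar m\,M_2(\Delta t)^2$; the partition term is controlled through the Lipschitz estimate $|\zeta(\phi)-\zeta(\phi_0)|\le(\overline\omega/\underline\omega)\,|\phi-\phi_0|$, yielding $\varepsilon_p\,\overline{\omega_f}\,(1+\overline\omega/\underline\omega)\sum_{i=1}^N(\overline\omega/\underline\omega)^{i-1}$, with $\varepsilon_p$ made small by the exponential convergence of the partition dynamics. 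Your route collapses both of these into uniform continuity of $\tilde J(\psi_1)=\min_{\mathbf{p}}J(\psi_1,\mathbf{p})$, which is cleaner and sidesteps the paper's zero-curvature claim (arguably the shakiest step of the paper's proof, since the metric \eqref{Riemann Metric} is not obviously flat); the price is that your bound is purely qualitative---a modulus $\omega_{\tilde J}$ where the paper obtains explicit constants ($\bar m$, $M_2$, $\overline\omega/\underline\omega$) that could guide the practical choice of $K^*$ and $T_\varepsilon$. Two further remarks. First, the obstacle you flag is smaller than you fear: the inner minimization runs over the fixed compact set $\Xi^N$, not over a $\psi_1$-dependent feasible set, so the marginal of the jointly continuous $J$ is automatically continuous without Berge-type machinery, and the cut locus is Lebesgue-null (Theorem~\ref{geodesicallyConvex}), so excising it in \eqref{deltaJ} does not disturb continuity of the integral. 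Second, you supply two ingredients the paper actually glosses over: the explicit use of the selection step $\hat J^{(k^*)}\le\hat J^{(k_0)}$, without which neither proof can legitimately conclude, and the Delaunay mesh error $\varepsilon_h$ from Remark~\ref{Estimation}, which the paper omits from its error budget for Problem~\eqref{minJ'} altogether; both additions strengthen rather than deviate from the intended result.
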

\begin{proof}
For the sake of simplicity,  we denote \({J'\left( {\psi ',{\bf{p'}}} \right)}\) as \(I\left( {\phi ,\bf{x}} \right)\). And simplify the form of the performance index as follows
\[I\left( {\phi,{\bf{x}}} \right) = \sum\limits_{i = 1}^N {\int_{{E_i}\left( \phi  \right)} {{f}\left( {{d_l}\left( {{x_i},q} \right)} \right)\rho \left( q \right)dq} }\]
Define \({\omega _f}\left( {\theta ,{x_i}} \right) = \int_{{\varphi _{in}}}^{{\varphi _{out}}} {f\left( {{d_l}\left( {{x_i},q} \right)} \right)\rho \left( {\theta ,\varphi } \right)} {\left( {R + r\cos \varphi } \right)^2}\sin \varphi d\varphi\). Let \(I\left( {{\phi^*},{\bf{x}^ * }}\right)\) and \(I\left( {{\phi ^\varepsilon },{\bf{x}^\varepsilon }}\right)\) denote the optimal solution to Problem ~\eqref{minJ'} and actual solution that we solve, respectively. Then their error can be estimated by 
\begin{align*}
\left| {I\left( {{\phi ^*},{\bf{x}}^*} \right) - I\left( {{\phi ^\varepsilon },{\bf{x}}^\varepsilon } \right)} \right| &=  {I\left( {{\phi ^*},{\bf{x}}^k} \right) - I\left( {{\phi ^*},{\bf{x}}^k} \right) + I\left( {{\phi ^*},{\bf{x}}^k} \right) - I\left( {{\phi ^k},{\bf{x}}^k} \right) + I\left( {{\phi ^k},{\bf{x}}^k} \right) - I\left( {{\phi ^\varepsilon },{\bf{x}}^\varepsilon } \right)}\\ &\le\left|{I\left( {{\phi ^*},{\bf{x}}^*} \right) - I\left( {{\phi ^*},{\bf{x}}^k} \right)} \right| + \left| {I\left( {{\phi ^*},{\bf{x}}^k} \right) - I\left( {{\phi ^k},{\bf{x}}^k} \right)} \right| \\&
+\left| {I\left( {{\phi ^k},{\bf{x}}^k} \right) - I\left( {{\phi ^\varepsilon },\bf{x^\varepsilon }} \right)} \right|
\end{align*}
Analyze the first term in the right hand, it focuses on the variation between \({x^k}\) and \({x^*}\):
\begin{align*}
\left| {I\left( {{\phi ^*},{\bf{x}}^*} \right) - I\left( {{\phi ^*},{\bf{x}}^k} \right)} \right| &= \left| {\sum\limits_{i = 1}^N {\int_{{E_i}\left( {{\phi ^*}} \right)} {{f_i}\left( {{d_l}\left( {{x^*},q} \right)} \right) - {f_i}\left( {{d_l}\left( {{x^k},q} \right)} \right)} } \rho \left( q \right)dq} \right| \\&\le \sum\limits_{i = 1}^N {\int_{{E_i}\left( {{\phi ^*}} \right)} {\left| {{f_i}\left( {{d_l}\left( {{x^*},q} \right)} \right) - {f_i}\left( {{d_l}\left( {{x^k},q} \right)} \right)} \right|} } \rho \left( q \right)dq \\&= \sum\limits_{i = 1}^N {\int_{{E_i}\left( {{\phi ^*}} \right)} {\left| {d_{_l}^2\left( {{x^*},q} \right) - d_l^2\left( {{x^k},q} \right)} \right|} } \rho \left( q \right)dq
\end{align*}
Considering two geodesics \({\gamma _1}\left( t \right) = \left( {{\theta _1}\left( t \right),{\varphi _1}\left( t \right)} \right)\) and \({\gamma _2}\left( t \right) = \left( {{\theta _1}\left( t \right),{\varphi _1}\left( t \right)} \right)\) which denote the geodesic from \({x^k}\) to \(q\) and from \({x^*}\) to \(q\) respectively. We denote \(L\left( t \right) = \left( {{\theta _1}\left( t \right) - {\theta _2}\left( t \right),{\varphi _1}\left( t \right) - {\varphi _2}\left( t \right)} \right) = \left( {{L^\theta }\left( t \right),{L^\varphi }\left( t \right)} \right)\). Due to the region \({{E_i}\left( {{\phi ^*}} \right)}\) is a flat Riemannian manifold, the curvature tensor of \({{E_i}\left( {{\phi ^*}} \right)}\) is equal to zero and the covariant derivative of deviation vector is equal to the general derivative of deviation vector, which allows obtaining 
\[\left| {d_l^2\left( {{x^*},q} \right) - d_l^2\left( {{x^k},q} \right)} \right| = \left|\eta \left( {L,\frac{{DL}}{{dt}}} \right){\left( {\Delta t} \right)^2}\right|\]
where \(\eta \) represents the Riemannian metric of  \({{E_i}\left( {{\phi ^*}} \right)}\), \({DL}/{dt}\) is the covariant derivative of deviation vector, and \({\Delta t}\) represents the time from \(q\) to \({x^k}\) and \({p^*}\), \(\eta \left( {L,\frac{{DL}}{{dt}}} \right) = {\eta _{ij}}{L^i}{\left( {{{DL}}/{{dt}}} \right)^j}\).  In a small neighborhood of flat Riemannian manifold \({{E_i}\left( {{\phi ^*}} \right)}\), \(\sqrt {\eta \left( {L, L} \right)}  \approx {\left\| L \right\|_2}\), \(| {\theta _1}\left( t \right) - {\theta _2}\left( t \right)|\) is the biggest component of \(L\left( t \right)\). According to the equivalence of norm in finite-dimensional Euclidean Spaces, we have \(\forall {\varepsilon _t} > 0\), \(\exists {\delta _t} = {\varepsilon _t}/\sqrt 2 \), when \({\left\| {{\theta _1}\left( t \right) - {\theta _2}\left( t \right)} \right\|_2} < {\delta _t}\), \({\left\| L \right\|_2} \le \sqrt 2 {\left\| L \right\|_\infty } = \sqrt 2 {\left\| {{\theta _1}\left( t \right) - {\theta _2}\left( t \right)} \right\|_\infty } \le \sqrt 2 {\left\| {{\theta _1}\left( t \right) - {\theta _2}\left( t \right)} \right\|_2} < \sqrt 2 {\delta _t} = {\varepsilon _t}\). we denote \({\varepsilon' _t} = \sup {\varepsilon _t}\), suppose \({\varepsilon' _t} \) isn't arbitrarily small, then \(\exists {\varepsilon _0} > 0\), such that \({\varepsilon' _t} > {\varepsilon _0}\), due to \({\varepsilon' _t} \) is a supremum, \(\forall \varepsilon  > 0\), \(\exists {\varepsilon _t}\), such that \({\varepsilon' _t} - \varepsilon  < {\varepsilon _t} < {\varepsilon' _t}\), we set \(\varepsilon  = {\varepsilon _0}/2\), then \({\varepsilon' _t} - \left( {{\varepsilon _0}/2} \right) < {\varepsilon _t} < {\varepsilon' _t}\), this contradicts \(\varepsilon _t\) being an arbitrarily small distance (which should be less than \({\varepsilon _0}\)), So \({\varepsilon' _t}\) is also an arbitrarily small distance. Owing to the inequality \(|\eta \left( {L,{{DL}}/{{dt}}} \right)| \le \sqrt {\eta \left( {L,L} \right)} \sqrt {\eta \left( {{{DL}}/{{dt}},{{DL}}/{{dt}}} \right)}\le {{M_1} \cdot {M_2}} \) with \({M_1} = \sup \sqrt {\eta \left( {L,L} \right)} \) and \({M_2} = \sup \sqrt {\eta \left( {{{DL}}/{{dt}},{{DL}}/{{dt}}} \right)} \), we obtain 
\begin{align*}
\left| {I\left( {{\phi ^*},{\bf{x}}^*} \right) - I\left( {{\phi ^*},{\bf{x}}^k} \right)} \right| &\le \sum\limits_{i = 1}^N {\int_{{E_i}\left( {{\phi ^*}} \right)} {\left| {d_{_l}^2\left( {{x^*},q} \right) - d_l^2\left( {{x^k},q} \right)} \right|} } \rho \left( q \right)dq \\ &\le \sum\limits_{i = 1}^N {\int_{{E_i}\left( {{\phi ^*}} \right)} {{M_1}{M_2}{{\left( {\Delta t} \right)}^2}} } \rho \left( q \right)dq \\ &= \bar m{M_1}{M_2}{\left( {\Delta t} \right)^2\le {\varepsilon' _t}\bar m{M_2}{\left( {\Delta t} \right)^2}}.
\end{align*}
 Analyze the second term in the right hand, and it focuses on the variation between \({\phi ^k}\) and \({\phi ^ * }\). According to the work in~\cite{zhai23}, \(\int_{{\phi _0}}^{\zeta \left( {{\phi _0}} \right)} {\omega \left( \theta  \right)} d\theta  = \bar m = \int_\phi ^{\zeta \left( \phi  \right)} {\omega \left( \theta  \right)} d\theta \), then 
\(\underline{\omega} \left| {\zeta \left( \phi  \right) - \zeta \left( {{\phi _0}} \right)} \right| \le \left| {\int_{\zeta \left( {{\phi _0}} \right)}^{\zeta \left( \phi  \right)} {\omega \left( \theta  \right)} d\theta } \right| = \left| {\int_{{\phi _0}}^\phi  {\omega \left( \theta  \right)} d\theta } \right| \le \overline {\omega} \left| {\phi  - {\phi _0}} \right|\), we obtain
\(\left| {\zeta \left( \phi  \right) - \zeta \left( {{\phi _0}} \right)} \right| \le \frac{\overline{\omega} }{\underline{\omega} }\left| {\phi  - {\phi _0}} \right|\). 
Due to \(\left| {{\phi^* _i} - {\phi^k _i}} \right| = |\zeta \left( {{\phi^* _{i - 1}}} \right) - \zeta \left( {{\phi^k _{i - 1}}} \right)|\) when equitable workload partition is finished, the following statement is true
\[\left| {I\left( {{\phi ^*},{\bf{x}}^k} \right) - I\left( {{\phi ^k},{\bf{x}}^k} \right)} \right| \le {\varepsilon _p}{\overline{\omega _f}}\left( {1 + \frac{{\overline{\omega} }}{\underline{\omega} }} \right){\sum\limits_{i = 1}^N {\left( {\frac{{\overline{\omega}  }}{\underline{\omega} }} \right)} ^{i - 1}}\]
where \({ \overline{\omega _f}} = \sup {\omega _f}\), \( \overline {\omega}   = \sup \omega \), \(\underline {\omega}   = \inf \omega \), \({\varepsilon _p} = \sup{\left| {{\phi^* _2} - {\phi^k _2}} \right|}\). According to the work in~\cite{zhai23}, there exist positive constants \(c_1\) and \(c_2\) such that \(\left| {{m_i} - {m_{i - 1}}} \right| \le {c_1}{e^{ - {c_2}t}}\), \(\forall{t>0}\), \(\mathop {\lim }\limits_{t \to \infty } {\phi _i}\left( t \right) - {\phi _i}\left( 0 \right) = \int_0^{ + \infty } {{{\dot \phi }_i}\left( t \right)dt = \frac{{{k_\phi }{c_1}}}{{{c_2}}}}  <  + \infty \), which means that \(\mathop {\lim }\limits_{t \to \infty } {\phi^k _2}\left( t \right) = {\phi^* _2}\left( t \right)\), so \({\varepsilon _p}\) is an arbitrarily small positive number.  
Analyze the last term in the right hand, \(\forall \varepsilon  > 0\), \(\exists \) \({T_\varepsilon } > 0\), when \(t > {T_\varepsilon }\), \({\left| {I\left( {{\phi ^k},{{\bf{x}}^k}} \right) - I\left( {{\phi ^\varepsilon },{{\bf{x}}^\varepsilon }} \right)} \right|}<\varepsilon\), which implies
\[\mathop {\lim }\limits_{t \to \infty } I\left( {{\phi ^k},{\bf{x}}^k} \right) = I\left( {{\phi ^\varepsilon },\bf{x}^\varepsilon } \right)\]
Above all, the error estimation \(|I\left( {{\phi ^ * },{x^ * }} \right)- I\left( {{\phi ^\varepsilon },{x^\varepsilon }} \right)|\) is bounded by
\[|I\left( {{\phi ^*},{\bf{x}}^*} \right) - I\left( {{\phi ^\varepsilon },{\bf{x}}^\varepsilon } \right)| \le {\varepsilon '_t}\bar m{M_2}{\left( {\Delta t} \right)^2} + {\varepsilon _p}\overline {{\omega _f}} \left( {1 + \frac{{\overline{\omega}  }}{\underline{\omega} }} \right)\sum\limits_{i = 1}^N {{{\left( {\frac{{\overline{\omega}  }}{\underline{\omega} }} \right)}^{i-1}}} +\varepsilon\]
This completes the proof.
\end{proof}

Theorem~\ref{Theorem29} illustrates the convergence of the optimal solutions. In order to supervise the message of each sub-region, we need to prove that the control input and the dynamics of the multi-agent system would drive each agent to the optimal centroid. Hence, we propose the Theorem~\ref{Theorem38} as follows.

\begin{theorem}
Dynamical system (\ref{Control dynamic}) with control input \eqref{Control Input} ensures that each agent can reach the centroid of the subregion. 
\label{Theorem38}
\end{theorem}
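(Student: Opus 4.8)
The plan is to use the coverage cost $J$ in the mapped space $\Xi$ as a Lyapunov function, show that the closed-loop agent dynamics constitute a Riemannian gradient-descent flow whose equilibria are exactly the subregion centroids, and then transport the conclusion to $(\Omega-O)\cup\partial O$ through the diffeomorphism $\tau$. First I would freeze the partition $\psi$: since the partition dynamics \eqref{partitioning dynamic} were already shown to equalize the workload with an exponential rate, the agent motion can be analyzed on the reduced configuration, and $J(\psi,\mathbf{p})$ may be regarded as a function of the agent positions alone. Taking $J$ — continuous and bounded below by zero on the compact region — as the candidate Lyapunov function, the central task is to evaluate $\dot J$ along trajectories of $\dot p_i = u_i$ with $u_i^l = -k_p\,(\partial J/\partial p_i^k)\,\eta^{kl}(p_i)$.

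The key step is the natural pairing of the cotangent vector $\nabla_{p_i}J \in T^*_{p_i}\Xi$ with the tangent control $u_i \in T_{p_i}\Xi$, which yields
\[
\dot J = \sum_{i=1}^N \frac{\partial J}{\partial p_i^l}\,u_i^l = -k_p \sum_{i=1}^N \frac{\partial J}{\partial p_i^l}\,\frac{\partial J}{\partial p_i^k}\,\eta^{kl}(p_i).
\]
Because $\eta^{kl}$ is the inverse of a Riemannian metric, it is symmetric positive definite, so each summand is a nonnegative quadratic form; hence $\dot J\le 0$, with equality if and only if $\nabla_{p_i}J = 0$ for every $i$. By the expression \eqref{deltaJ}, the vanishing of this gradient is precisely the defining condition of the generalized geodesic centroid $p_i^*$ of the cell $E_i$, whose existence is guaranteed by Theorem~\ref{Theorem27}. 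Applying LaSalle's invariance principle on the compact region $\Xi$ — together with the fact, established in the proof of Theorem~\ref{Theorem27}, that the length metric $d_l$ admits no local minimum on $\partial\Omega$ or $\partial O$, so that no trajectory can stall on the boundary — the agents converge to the largest invariant set inside $\{\dot J = 0\}$, namely the configuration $p_i = p_i^*$.

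It then remains to transfer the result to the original workspace. Since $J' = J\circ\tau$ with $\tau$ an orientation-preserving diffeomorphism, and Theorem~\ref{TheoremJ} establishes a bijective correspondence of critical points under $\tau$ with $\tau^{-1}(p_i^*) = (p_i^*)'$, the convergence $p_i \to p_i^*$ in $\Xi$ forces $p'_i \to (p_i^*)'$ in $(\Omega-O)\cup\partial O$ for the dynamics \eqref{Control dynamic}. Alternatively, one may rerun the identical Lyapunov computation directly with $J'$ and $\eta'$, the positive definiteness of $(\eta')^{-1}$ being inherited because $\det(J_\tau)>0$ keeps $\eta'$ a genuine Riemannian metric.

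The main obstacle I anticipate is the failure of $J$ to be globally $C^1$: its gradient, and therefore the control law, is undefined on the cut locus, so the descent identity for $\dot J$ must be justified despite this singular set. I would lean on Theorem~\ref{geodesicallyConvex}, which guarantees that $C_p$ is Lebesgue-measure-zero and hence that $J$ is arbitrarily close to $C^1$, so the gradient flow is well defined almost everywhere and the computation of $\dot J$ remains valid along trajectories. A secondary subtlety is decoupling the partition dynamics from the agent dynamics cleanly enough that $J$ serves as a legitimate Lyapunov function for the agent subsystem; the exponential convergence of the workload balancing makes this rigorous through a two-timescale (singular-perturbation) argument.
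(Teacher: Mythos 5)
Your proposal is correct at the paper's level of rigor, but it takes a genuinely different route. The paper does not use $J$ as a Lyapunov function at all: it invokes a result from~\cite{zhai23} stating that if the Hessian $H_{J,p_i}=\partial\nabla_{p_i}J/\partial p_i$ is full rank then $\lVert\nabla_{p_i}J\rVert\to 0$, and the bulk of its proof is a chain-rule computation $H_J=A^{T}H_{J'}A+B$ (with $A$ the Jacobian of $\tau^{-1}$, invertible) showing that this rank condition transfers between $\Xi$ and $(\Omega-O)\cup\partial O$, so the same gradient convergence holds for $J'$; it then handles the fact that the target centroid itself moves with the still-evolving partition by citing Theorem~\ref{Theorem29} ($\psi_i^k\to\psi_i^*$) and closing with the triangle inequality $d_l(p_i,q^*)\le d_l(p_i,p_i^*)+d_l(p_i^*,q)$. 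Your LaSalle/gradient-flow argument replaces the imported Hessian-rank criterion with a self-contained energy decay $\dot J=-k_p\sum_i(\partial J/\partial p_i^l)(\partial J/\partial p_i^k)\eta^{kl}\le 0$, which is cleaner and notably avoids the weakest step in the paper's proof --- the informal claim that the second-derivative term $B$ is a ``small disturbance'' that does not alter the rank of $A^{T}H_{J'}A$. Conversely, the paper's route is more explicit on two points you compress: it proves convergence directly for the original-space law \eqref{Control Input} (your primary path goes through the $\tau$-correspondence of critical points, with the direct $J'$, $\eta'$ computation only mentioned as an alternative, though that alternative is exactly what the statement asks for), and it resolves the partition--agent coupling with a concrete limit argument rather than a sketched two-timescale reduction. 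Both proofs share the same unresolved technical debts (non-smoothness of $J$ on the cut locus, degeneracy of $\eta$ where $\sin\varphi=0$), so neither is strictly more rigorous there.
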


\begin{proof}
According to Theorem~\ref{Theorem27}, the gradient of the coverage performance index is given by
\[{\nabla _{{p_i}}}J = 2\sum\limits_{i = 1}^N {\int_{{E_i}\left( \psi  \right)} {{d_l}\left( {{p_i},q} \right)\frac{{\partial {d_l}\left( {{p_i},q} \right)}}{{\partial {p_i}}}\rho \left( q \right)dq} },\]
where \(p_i\) is the \(i\)-th agent position, and \(q\) is the optimal centroid of region \({E_i}(\psi)\). \(q^*\) denotes the actual optimal centroid of \({E_i}(\psi)\), and \({p^*_i}\) is the optimal centroid of \({E_i}(\psi)\). In the mapped space \(\Xi\), we define \({H_{J,{p_i}}}\left( {{p_i}} \right) = {{\partial {\nabla _{{p_i}}}J\left( {{p_i}} \right)}}/{{\partial {p_i}}}\).
According to~\cite{zhai23}, if \({H_{J,{p_i}}}\left( {{p_i}} \right)\) is full rank, then \[\mathop {\lim }\limits_{t \to \infty } \left\| {{\nabla _{{p_i}}}J} \right\| = 0,\forall i \in (1,2, \cdots ,N)\]
In original space (\(\Omega\)-\(O\))\(\cup\)\(\partial O\), considering the diffeomorphism \(\tau\), we obtain \(J' \circ \tau^{-1}  = J\). Then, we set \({\tau ^{ - 1}}\left( {{p_i}} \right) = {p^*_i}\), and select a small neighborhood of \({p^ *_i }\), where \(\left( {{x^1},{x^2}} \right)\), \(\left( {{y^1},{y^2}} \right)\) are the local coordinate of \({p^ *_i }\) and \(p_i\), respectively. According to the chain rule for derivatives of composite functions, we have
 \begin{align*}
{\left( {{H_{J'}}} \right)_{ij}} &= \frac{{{\partial ^2}J'}}{{\partial {x^i}\partial {x^j}}}, 
 \frac{{\partial J}}{{\partial {y^i}}} = \sum\nolimits_{j = 1}^2 {\frac{{\partial J'}}{{\partial {x^j}}} \circ {\tau ^{ - 1}} \cdot \frac{{\partial {{\left( {{\tau ^{ - 1}}} \right)}^j}}}{{\partial {y^i}}}} \\
({H_J})_{ij} &= \frac{{{\partial ^2}J}}{{\partial {y^i}\partial {y^j}}} = \sum\nolimits_{j,l = 1}^2 {\frac{{{\partial^2}J'}}{{\partial {x^j}\partial {x^l}}} \circ {\tau ^{ - 1}} \cdot \frac{{\partial {{\left( {{\tau ^{ - 1}}} \right)}^j}}}{{\partial {y^i}}} \cdot \frac{{\partial {{\left( {{\tau ^{ - 1}}} \right)}^l}}}{{\partial {y^k}}} + \sum\nolimits_{j = 1}^2 {\frac{{\partial J'}}{{\partial {x^j}}} \circ {\tau ^{ - 1}} \cdot \frac{{{\partial ^2}{{\left( {{\tau ^{ - 1}}} \right)}^j}}}{{\partial {y^i}\partial {y^k}}}} }  
\end{align*}
then \({H_J} = {A^T}{H_{J'}}A + B\), where \({A_{ij}} = {{\partial {{\left( {{\tau ^{ - 1}}} \right)}^j}}}/{{\partial {y^i}}}\), \(B\) contains \(({{{\partial J'}}/{{\partial {x^j}}}) \circ {\tau ^{ - 1}} \cdot {{{\partial ^2}{{\left( {{\tau ^{ - 1}}} \right)}^j}}}/({{\partial {y^i}\partial {y^k}}}})\). Because \(\tau\) is a diffeomorphism, \(d{\tau ^{ - 1}}\) is a linear isomorphism, and thus \(A\) is an invertible matrix. Because the product of an invertible matrix and another matrix does not change the rank of the matrix, \(\text{rank}\left( {{A^T}{H_{J'}}A} \right) = \text{rank}\left( {{H_{J'}}} \right)\). In addition, because of the Rank-Nullity Theorem~\cite{Riemannian geometry}, \({A^T}{H_{J'}}A\) determines the major rank structure,   \(B\) is regarded as a small disturbance, \(\text{rank}(B)\) does not change the major rank structure. According to \(\text{rank}\left( {{A^T}{H_{J'}}A + B} \right) \le \text{rank}\left( {{A^T}{H_{J'}}A} \right) + \text{rank}\left( B \right)\), we obtain \(\text{rank}\left( {{H_{J'}}} \right) = \text{rank}\left( {{H_J}} \right) = 2\), so \(\mathop {\lim }\limits_{t \to \infty } \left\| {{\nabla _{{p_i}}}J'} \right\| = 0\) is true in (\(\Omega\)-\(O\))\(\cup\)\(\partial O\). In light of the gradient of performance index~\eqref{PerformanceIndex}, it follows that 
\({{\partial {d_l}\left( {{p_i},q} \right)}}/{{\partial {p_i}}}\) is non-zero. Due to 
\(\rho \left( q \right)>0\) and \(\mathop {\lim }\limits_{t \to \infty } \left\| {{\nabla _{{p_i}}}J'} \right\| = 0\), we obtain 
\[\mathop {\lim }\limits_{t \to \infty } {d_l}\left( {{p_i},{p^*_i}} \right)=0.\]
From Theorem~\ref{Theorem29}, one has \(\mathop {\lim }\limits_{t \to \infty } {\psi^k _2}\left( t \right) = {\psi^* _2}\left( t \right)\), it's easy to generalize for any \({\psi _i}\), the following statement is true \[\mathop {\lim }\limits_{t \to \infty } {p^ *_i}\left( t \right) = \mathop {\lim }\limits_{{\psi _i} \to {\psi^* _i}} {p^ *_i}\left( {{\psi _i}} \right) = q^ *\]
Above all, it satisfies that
\[0 \le \mathop {\lim }\limits_{t \to \infty } {d_l}\left( {{p_i},q^*} \right) \le \mathop {\lim }\limits_{t \to \infty } {d_l}\left( {{p_i},{p^*_i}} \right) + \mathop {\lim }\limits_{t \to \infty } {d_l}\left( {{p^*_i},q} \right) = 0,\]
which implies that \(\mathop {\lim }\limits_{t \to \infty } {d_l}\left( {{p_i},q^*} \right) = 0\). It means that each agent can move to the centroid of the sub-region.
\end{proof}

\section{Case Study}\label{CaseStudy}
Case studies are carried out to validate the proposed coverage algorithm through numerical simulations. We compare the proposed algorithm with the Voronoi partition, revealing the limitation of the latter. Furthermore, we integrate the Voronoi partition with conformal mapping, demonstrating its effectiveness in region partitioning, albeit at a slower convergence rate compared to the sectorial partition. This comparsion underscores the applicability of conformal mapping and the efficiency of sectorial partition in this complex environment. Finally, we introduce the Control Barrier Function(CBF) technology, which employs a Euclidean metric instead of a length metric, and compare the optimal paths of the two methods. The simulation results demonstrate that the proposed coverage algorithm yields shorter motion trajectories for the multi-agents, concluding that the length metric is more effective than CBF with the Euclidean metric. The findings presented in this section confirm that the sectorial partition with conformal mapping effectively accomplishes the coverage mission and achieves workload balance in (\(\Omega\)-\(O\))\(\cup\)\(\partial O\), with a rapid convergence rate.
\begin{figure}[t!]
\centering
\includegraphics[width=0.6\textwidth]{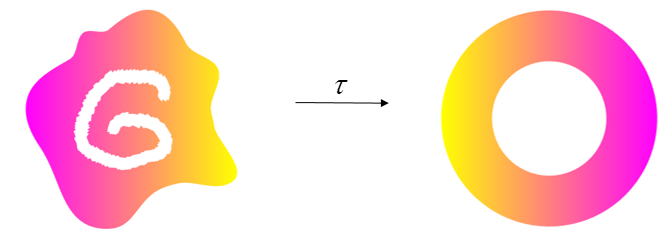}
\caption{\label{png: ColorM} The density distribution in the original space (\(\Omega\)-\(O\))\(\cup\)\(\partial O\) and the mapped region \(\Xi\). \(\tau\) is the conformal mapping. Due to the harmonic mapping would rotate the region, the density correspondence relationship would present the phenomenon.}
\label{fig3}
\end{figure}
\begin{figure}[t!]
\centering
\includegraphics[width=0.78\textwidth]{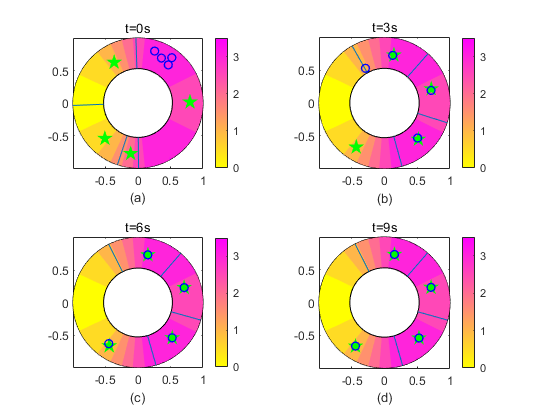}
\caption{\label{png: MappingRegionRaw} Simulation result on the mapped region \(\Xi\). Blue pointers represent the partition bar between adjacent sub-regions. Blue circles denote the mobile agents, and green stars refer to the sub-region centroids. The result illustrates the sectorial partition can finish the coverage mission and achieve workload balance in the hollow environment with a rapid convergence rate.}
\label{fig4}
\end{figure}

\begin{figure}[t!]
\centering
\includegraphics[width=0.78\textwidth]{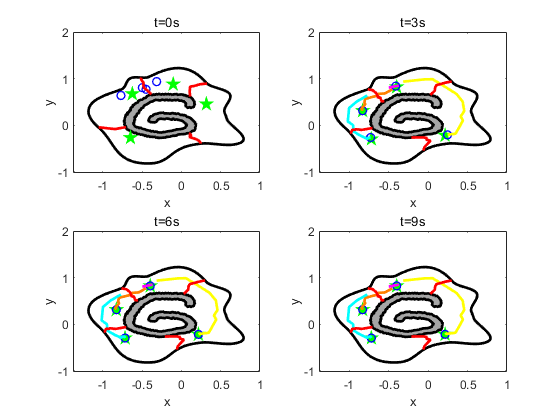}
\caption{\label{png: trajectory2} Simulation result on the original region (\(\Omega\)-\(O\))\(\cup\)\(\partial O\). Due to the bijective of the conformal mapping \(\tau\), we can find the pre-image of the partition bar, the agent's position, and the optimal centroids. The red pointers represent the pre-image of the blue pointers in \(\Xi\). The purple, orange, cyan, and yellow lines show the moving paths of the multi-agent. The phenomenon illustrates the effectiveness of conformal mapping in the coverage problem of non-star-shaped region.}
\label{fig5}
\end{figure}
\subsection{Sectorial Partition with Conformal Mapping}
\label{section4.1}
The simulation is operated according to Algorithm~\ref{Algorithm1}. In the mapped space \(\Xi\), we initialize the basic parameters as follows: $K^*=30$ $N=4$, \({T_\varepsilon }=10\), \({k_\psi=0.03 }\), \({k_p=0.1}\),  where \(v_{E'_i}\) is the coordinate message collected by \(i\)-th agent, and \(\tau\) is the designed diffeomorphism. The positions of the \(i\)-th agent, denoted as \(p_i\), are set before the algorithm runs, and \(\psi _i\) is the \(i\)-th partition bar. In Section \ref{Mappingdesign},  the value of \(L\) is minimized to reduce the conformal distortion, with \(L\) equal to 0.5293 in the numerical simulation. Based on \eqref{Riemann Metric}, we obtain the Riemannian metric of the mapped space, furthermore, we induce the length metric by the equation \eqref{Length metric}. The simulation results are presented in Fig.~\ref{fig4} and Fig.~\ref{fig5}. Particularly, Figure~\ref{fig4}  provides four snapshots of the simulation with a time interval of 4 seconds. The blue pointers represent partition bars among sub-regions,  blue circles denote the mobile agents, and green stars refer to the sub-region centroids. In addition, the color depth indicates the magnitude of workload density, with darker colors implying higher density and smaller areas occupied by high-density sub-regions due to load balancing. Figure~\ref{fig5} illustrates the simulation results on the original region. Due to the harmonic mapping rotating the surface, the original and mapped regions are mirror symmetric, with the distribution of multi-agents and optimal centroids also being mirror symmetric. The simulation results on the original space (\(\Omega\)-\(O\))\(\cup\)\(\partial O\) are shown in Fig.~\ref{fig5}, with the red pointers representing partition bars among sub-regions which are pre-image of the blue pointers and the gray part indicating the obstacle. To ensure workload balancing, the pointers are not straight lines in the original space. The orange, purple, cyan, and yellow lines represent the trajectory of agents. The meanings of blue circles, green stars, and color depth are the same as in the previous description. Experimental results demonstrate that the proposed Algorithm~\ref{Algorithm1} can fulfill the coverage mission to service events in the non-star-shaped region with the aid of conformal mapping.

\subsection{Voronoi Partition with Conformal Mapping}
To illustrate the validity of conformal mapping and the sectorial partition, this subsection will demonstrate the simulation phenomena using the Voronoi partition in the original space and using the Voronoi partition in the mapped space with conformal mapping. Finally, we will compare the result executed by the sectorial partition with conformal mapping in the subsection\eqref{section4.1}. The result of the Voronoi partition without conformal mapping in the original space is shown in the left picture of Fig.\ref{fig6}. The blue lines represent the boundary and the Voronoi partition line, the red circles denote the mobile agents, and the black region denotes the obstacle. Due to the complexity of the original space, the Voronoi partition will reduce the boundary message, and the partition line collide with the obstacle(the middle picture of Fig.\ref{fig6}). Introducing the conformal mapping \(\tau\) designed in subsection~\eqref{Mappingdesign}. Following the same idea in the subsection\eqref{section4.1}, we use the Voronoi partition in the mapped space \(\Xi\), the partition result is shown in Fig.~\ref{fig7}. Particularly, the blue lines show the Voronoi partition lines, the yellow star is the optimal centroid in the next step, the green star is the optimal centroid in this iteration step, and the red circles denote the mobile agents. The partition result illustrates the feasibility of the Voronoi partition in mapped space. According to the bijective of \(\tau\), the partition result in the original space is shown in the right of Fig.~\ref{fig6}. However, the Voronoi partition cann't achieve the workload balance(you can observe the color depth of the region), and the run time is much longer than the sectorial partition algorithm. Through the comparison, the practicability of sectorial partition with conformal mapping is demonstrated again.

\begin{figure}[t!]
\centering
\includegraphics[width=1\textwidth]{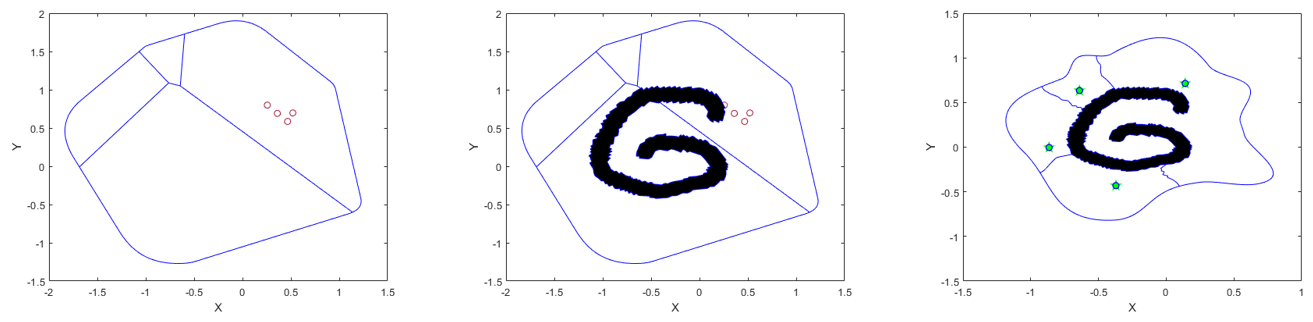}
 \caption{\label{png: voronoiii} Voronoi partition on the original region (\(\Omega\)-\(O\))\(\cup\)\(\partial O\). The left and middle snapshots illustrate the truth that the Voronoi partition cann't achieve the region partition, the partition line would collide with the obstacle. The right snapshot shows the Voronoi partition with conformal mapping in the original region (\(\Omega\)-\(O\))\(\cup\)\(\partial O\). The blue circle is the position of the agent, and the green star is the optimal centroid of the sub-region. The blue lines represent the Voronoi partition lines. It illustrates the validity of the conformal mapping, which is a powerful tool to simplify the research objects.}
\label{fig6}
\end{figure}

\begin{figure}[t!]
\centering
\includegraphics[width=0.8\textwidth]{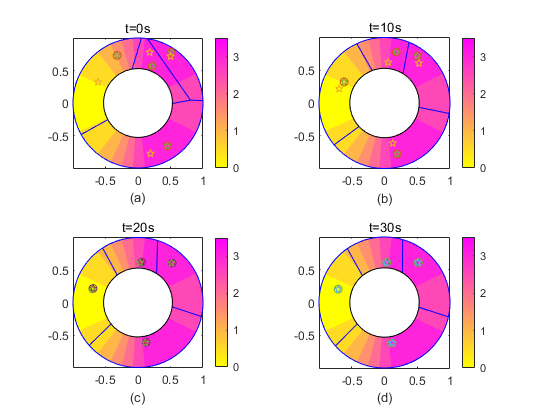}
 \caption{\label{png: MappingRegionRaw} The Voronoi partition on the mapped region \(\Xi\). Red circles denote the mobile agents, yellow stars refer to the optimal sub-region centroids in the next iteration and green stars refer to the optimal sub-region centroids in this iteration. The snapshot illustrates the validity of the Voronoi partition in the hollow environment. However, this method cann't achieve the workload balance.}
\label{fig7}
\end{figure}
\begin{figure}[t!]
\centering
\includegraphics[width=0.8\textwidth]{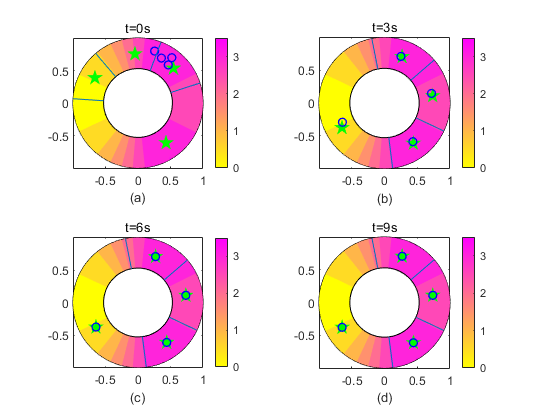}
 \caption{\label{png: MappingRegionRaw} The sectorial partition with conformal mapping and control barrier function in the mapped space \(\Xi\). Blue pointers represent the partition bar between adjacent sub-regions. Blue circles denote the mobile agents, and green stars refer to the sub-region centroids. This snapshot illustrates the validity of CBF, which replaces the length metric with the Euclidean metric.}
\label{fig8}
\end{figure}
\begin{figure}[t!]
\centering
\includegraphics[width=0.8\textwidth]{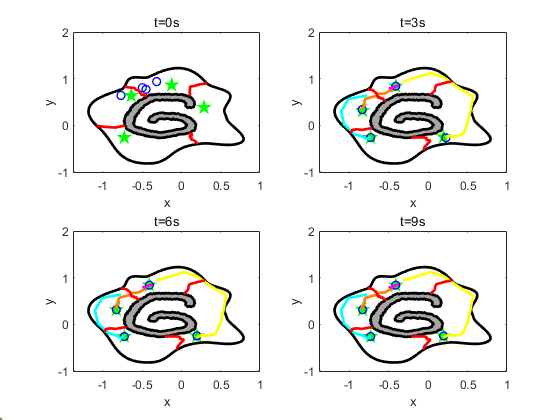}
 \caption{\label{png: MappingRegionRaw} The sectorial partition with conformal mapping and control barrier function in the original space (\(\Omega\)-\(O\))\(\cup\)\(\partial O\). The red pointers represent the pre-image of the blue pointers in \(\Xi\). The purple, orange, cyan, and yellow lines show the moving paths of the multi-agent. Those paths are longer than the paths in Fig.\ref{fig5}. This snapshot illustrates the validity and the limitations of this method.}
\label{fig9}
\end{figure}
\subsection{Control Barrier Functions with Conformal Mapping}
In this subsection, we introduce the control barrier functions, which offer the possibility of using the Euclidean metric instead of the length metric. In the mapped space \(\Xi\), inspired by the CBF application in tracking control algorithm~\cite{Avoidance24}, we adopt the CBF to modify the control law and set the obstacle as a circle, whose radius is equal to \(L\) designed in Subsection~\ref{section4.1} and the center is the origin of coordinates (the empty space enclosed by the inner boundary in Fig.~\ref{fig8}). In this process, we achieve collision avoidance by only using the Euclidean metric. The simulation illustrates that this method can fulfill the coverage requirement, and achieve workload balance. Also, the method is easier to compute. Due to the mapping correspondence relationship, the inner boundary of the pre-image is the outer boundary of the mapped space, so those paths would close to the outer boundary. However, comparing Fig.~\ref{fig9} with Fig.~\ref{fig5}, the length of the agent move path with \text{CBF} is longer than the path that uses the length metric. It means that the multi-agent system consumes more energy by using the Euclidean metric with \text{CBF} though it also can realize the coverage tasks. Explicitly, the use of the length metric is valid and effective.

\section{Conclusions}
\label{Conclusions}
This paper addressed the coverage control problem of the multi-agent system (\text{MAS}) in a non-star-shaped region. The most remarkable highlight is that a sectorial coverage formulation based on the conformal mapping which transforms a non-star-shaped region into a star-shaped one, was proposed to minimize the predefined service cost with load balancing in the mapped space. The conformal mapping being a diffeomorphism ensures that the tasks are completed with a small time deviation in the original space. A novel insight into partition regions establishes the existence of an optimum path defined by the length metric. Moreover, a non-convex iterative search algorithm was utilized to identify the optimal deployment of \text{MAS} in non-star-shaped regions with a theoretical guarantee of arbitrarily small tolerance. In addition, a distributed control strategy was applied to drive \text{MAS} toward a desired configuration automatically. Future work will focus on extending these findings to three-dimensional non-convex regions with high genus and applying the optimal transport theory to approximate the coverage control law.

\section*{Acknowledgments}
The project is supported by the ``CUG Scholar" Scientific Research Funds at China University of Geosciences (Wuhan) (Project No. 2020138).

\end{document}